\selectfont\symbol{60}\fontencoding{\encodingdefault}}
\newcommand{\nin}{\not\in}
\newcommand{\tmtextit}[1]{{\itshape{#1}}}
\newtheorem{theorem}{Theorem}[section]
\newtheorem{lemma}[theorem]{Lemma}
\newtheorem{remark}[theorem]{Remark}
\newtheorem{proposition}[theorem]{Proposition}
\newcommand{\oprocendsymbol}{\hbox{$\bullet$}}
\newcommand{\oprocend}{\relax\ifmmode\else\unskip\hfill\fi\oprocendsymbol}
\newcommand{\real}{{\mathbb{R}}}
\newcommand{\naturals}{{\mathbb{N}}}
\newcommand{\GG}{\mathcal{G}}
\newcommand{\II}[1]{\mathcal{I}^{#1}}
\newcommand{\RR}{\mathbf{R}}
\newcommand{\zeros}{\bold{0}}
\newcommand{\ones}{\bold{1}}
\newcommand{\setdef}[2]{\left\{ #1 \; \big| \; #2\right\}}
\newcommand{\longthmtitle}[1]{\tmtextit{(#1).}}
\newcommand{\myclearpage}{\clearpage}
\renewcommand{\myclearpage}{}
\newcommand{\change}[1]{{\color{blue} #1}}
\newcommand{\range}{\operatorname{range}}
\DeclareMathOperator{\sgn}{sgn}
\DeclareMathOperator{\sat}{sat}
\begin{document}

\title{\LARGE{Distributed bilayered control for transient frequency safety
  and system stability in power grids}\thanks{A preliminary version
    appeared as~\cite{YZ-JC:19-acc} at the 2019 American Control
    Conference. This work was supported by NSF Award CNS-1446891 and
    AFOSR Award FA9550-15-1-0108.}}

% \title{Distributed transient frequency bilayered control of power
%   networks}

\author{Yifu~Zhang \quad Jorge
  Cort{\'e}s%,~\IEEEmembership{Fellow,~IEEE}
  \IEEEcompsocitemizethanks{\IEEEcompsocthanksitem Yifu Zhang is with
    The MathWorks, Inc., Natick, MA 01760, USA ({\tt\small
      yifu.zhang19@gmail.com}).  During the preparation of this work,
    Yifu Zhang was affiliated with the Department of Mechanical and
    Aerospace Engineering, University of California, San Diego.
  
  Jorge Cort{\'e}s is with the Department of Mechanical and
    Aerospace Engineering, University of California, San Diego, La Jolla,
    CA 92093, USA ({\tt\small cortes@ucsd.edu}).}}
\maketitle

\begin{abstract}
  This paper considers power networks governed by swing nonlinear
  dynamics and subject to disturbances. We develop a bilayered
  control strategy for a subset of buses that simultaneously
  guarantees transient frequency safety of each individual bus and
  asymptotic stability of the entire network. The bottom layer is a
  model predictive controller that, based on periodically sampled
  system information, optimizes control resources to have transient
  frequency evolve close to a safe desired interval. The top layer is
  a real-time controller assisting the bottom-layer controller to
  guarantee transient frequency safety is actually achieved. We show
  that control signals at both layers are Lipschitz in the state and
  do not jeopardize stability of the network.  Furthermore, we
  carefully characterize the information requirements at each bus
  necessary to implement the controller and employ saddle-point
  dynamics to introduce a distributed implementation that only
  requires information exchange with up to 2-hop neighbors in the
  power network. Simulations on the IEEE 39-bus power network
  illustrate our results. 
\end{abstract}

% \begin{keywords}
%   Transient frequency, power network stability, distributed control,
%   model predictive control.
% \end{keywords}

\section{Introduction}\label{section:intro}
 The electric power system is operated around a nominal frequency to
maintain its stability and safety. Large frequency fluctuations can
trigger generator relay-protection mechanisms and load
shedding~\cite{PK-JP:04,NERC:11}, which may further jeopardize network
integrity, leading to cascading failures. Without appropriate
operational architectures and control safeguards in place, the
likelihood of such events is not negligible, given that the high
penetration of non-rotational renewable resources provides less
inertia, possibly inducing higher frequency
excursions~\cite{FM-FD-GH-DJH-GV:2018}. These observations motivate us
to develop control schemes to actively mitigate undesired transient
frequency deviations under disturbances and
contingencies. Specifically, we are interested in exploiting the
potential benefits of distributed controllers and architectures to
enable plug-and-play capabilities, the efficient orchestration among
the roles of the available resources, and handling the coordination of
large numbers of them in an adaptive and scalable fashion.

% based on the prediction of future disturbance and in a
%   receding-horizon optimization fashion, the proposed controller
%   enables global cooperation among control resources regulating on
%   individual buses spanned in power networks. Meanwhile, to address
%   the implementability and scalability of controller in large-scale
%   power networks, we show that the controller is distributed, i.e.,
%   control action on each bus only depends on local system information
%   and communication.
 
\textit{Literature review:} Power system stability is defined as the
ability of regaining operating equilibrium conditions in the presence
of disturbances while keeping deviations of system states within
acceptable levels~\cite{PK-JP:04}. A branch of
research~\cite{FD-MC-FB:13,HDC:11,TLV-HDN-AM-JS-KT:18} focuses on
characterizing equilibrium and convergence as a function of network
topology, initial conditions, and system parameters, without
explicitly accounting for the potential disruptions in power system
stability caused by mechanisms that are activated by frequency
excursions beyond safe limits.  Various control strategies have been
proposed to improve transient frequency behavior against disturbances,
including inertial placement~\cite{JF-HL-YT-FB:18}, droop coefficient
design~\cite{SSG-CZ-ED-YCC-SVD:18}, and demand-side frequency
regulations~\cite{FT-MA-DP-GS:15}. However, these methods rely on some
a-priori explicit frequency overshoot estimation based on
reduced-order models, and hence only provide approximate transient
frequency safety guarantees.  Combining the notion of control
barrier~\cite{ADA-SC-ME-GM-KS-PT:19} and Lyapunov~\cite{HKK:02}
functions, our previous work~\cite{YZ-JC:19-auto} proposes a feedback
controller that meets both requirements of transient frequency safety
and asymptotic stability. This controller is distributed and requires
no communication, in the sense that each control signal regulated on
an individual bus only depends on neighboring system information that
can be directly measured.  However, its non-optimization-based nature
may cause bounded oscillations in the closed-loop system due to the
lack of cooperation among control signals.  Our
work~\cite{YZ-JC:20-auto} employs a model predictive control
(MPC)-based approach to address this issue, but the prediction horizon
that can be used is limited by trade-offs between the discretization
accuracy and the computational complexity, limiting its performance.
In addition, the implementation of the MPC-based controller is only
partially distributed: given a set of regions in the network, a
centralized controller aggregates information and determines the
control actions within each region, independently of the
others. Challenges in employing MPC techniques in the context of power
networks~\cite{HJ-JL-YS-DJH:15,ANV-IAH-JBR-SJW:08,AF-MI-TD-MM:14}
include the fact that, as the equilibrium point heavily depends on
modeling and network parameters that cannot be precisely known, it is
analytically hard to establish robust stabilization given that the
objective function generally requires knowledge of the equilibrium
point; the widespread use in practice of MPC with linearized models
for prediction given the nonlinear nature of the dynamics of power
networks; and the processing power and information transmission, speed
and reliability requirements associated with a single operator for
measured state collection, online optimization, and decision making
given the large number of actors and volume of data.

\textit{Statement of contribution:} This paper proposes a distributed
controller framework implemented on buses available for control that
maintains network asymptotic stability and enforces transient
frequency safety under disturbances. If a bus frequency is initially
in a prescribed safe frequency interval, then it can only evolve
within the interval afterwards; otherwise, the controller leads
frequency to enter the safe interval within a finite time. The
proposed controller possesses a bilayer structure. The bottom layer
solves periodically a finite-horizon convex optimization problem and
globally allocates control resources to minimize the overall control
effort.  The optimization problem incorporates a prediction model for
the system dynamics, a stability constraint, and a relaxed frequency
safety constraint. The prediction model is a linearized and
discretized approximation of the nonlinear continuous-time power
network dynamics, carefully chosen to preserve its local nature while
keeping the complexity manageable. As a consequence, in the resulting
convex optimization problem, the objective function can be interpreted
as the sum of local control costs, and each constraint only involves
local decision variables. This enables us to apply saddle-point
dynamics to recover its solution in a distributed fashion by allowing
each bus (resp. line) to exchange system information within its
neighboring buses (resp. lines). On the other hand, the top layer, as
a real-time feedback controller, acts as a compensator, bridging the
mismatch between the actual continuous-time power network dynamics and
the sampled-based information employed in the bottom layer to
rigorously guarantee frequency safety.  The top layer control signal
regulating on a generic bus only depends on physical measurements of
system information within the range of its neighboring transmission
lines. We illustrate the performance of the proposed bilayered
controller architecture in the IEEE 39-bus power network.

\section{Preliminaries}\label{section:pre}
Here we gather notation and concepts used in the paper.

\subsubsection*{Notation}
Let $\naturals$, $\real$, and $\real_{\geqslant}$, $\real_{>}$ denote
the set of natural, real, nonnegative real, and strictly positive
numbers, respectively.  Variables belong to the Euclidean space unless
specified otherwise. Denote by $\lceil a \rceil$ as the ceiling of
$a\in\real$. For $A\in\mathbb{R}^{m\times n}$, let $[A]_i$ and
$[A]_{i,j}$ be its $i$th row and $(i,j)$th element, respectively.  We
denote by $A^{\dagger}$ its unique Moore-Penrose pseudoinverse and by
$\range(A)$ its column space. For $b\in\real^{n}$, $b_{i}$ denotes its
$i$th entry. Let $\ones_n$ and $\zeros_n$ in $\real^n$ denote the
vector of all ones and zeros, respectively.  $\|\cdot\|$ denotes the
2-norm on $\real^{n}$.  For any $c,d\in\naturals$, let
$[c,d]_{\naturals}= \left\{ x\in\naturals \big| c\leqslant x\leqslant
  d \right\}$.  Denote the sign function
$\sgn:\real\rightarrow\{-1,1\}$ as $ \sgn(a)= 1$ if $a\geqslant 0$, and
as $ \sgn(a)= -1$ if $a< 0$.  Define the saturation function
$\sat:\real \rightarrow \real$ with limits $a^{\min}<a^{\max}$ as
\begin{align*}
  \sat(a;a^{\max},a^{\min}) = 
  \begin{cases}
    a^{\max} & a\geqslant a^{\max},
    \\
    a^{\min} & a\leqslant a^{\min},
    \\
    a & \text{otherwise}.
  \end{cases}
\end{align*}
Given $\mathcal{C} \subset \real^{n}$, $\partial\mathcal{C}$ denotes
its boundary and $\mathcal{C}_{\text{cl}}$ denotes its closure.  For a
point $x\in\real^{n}$ and $r\in\real_{>}$, denote
$B_{r}(x)\triangleq\setdef{x'\in\real^{n}}{\|x'-x\|_{2}\leqslant r}$.
Given a differentiable function $l:\real^{n}\rightarrow\real$, we let
$\nabla l$ denote its gradient.  A function $f:\real_{\geqslant
}\times\real^{n}\rightarrow\real^{n},\ (t,x)\rightarrow f(t,x)$ is
Lipschitz in $x$ (uniformly in $t$) if for every $x_{0}\in\real^{n}$,
there exist $L,r>0$ such that $\|f(t,x)-f(t,y)\|_{2}\leqslant
L\|x-y\|_{2}$ for any $x,y\in B_{r}(x_{0})$ and any $t\geqslant
0$. Given a function
$\mathfrak{L}:\mathcal{Y}\times\mathcal{Z}\rightarrow\real$, a point
$(Y^{*},Z^{*})\in\mathcal{Y}\times\mathcal{Z}$ is a saddle point of
$\mathfrak{L}$ on the set $\mathcal{Y}\times\mathcal{Z}$ if
$\mathfrak{L}(Y^{*},Z) \leqslant
\mathfrak{L}(Y^{*},Z^{*})\leqslant\mathfrak{L}(Y,Z^{*})$ holds for
every $(Y,Z)\in\mathcal{Y}\times\mathcal{Z}$. For scalars
$a,b\in\real$, let $[a]_{b}^{+}=a$ if $b>0$, and
$[a]_{b}^{+}=\max\{a,0\}$ if $b\leqslant0$. For vectors
$a,b\in\real^{n}$, $[a]_{b}^{+}\in\real^{n}$ is the vector whose $i$th
component is $[a_{i}]^{+}_{b_{i}}$ for every $i\in[1,n]_{\naturals}$.
  
\subsubsection*{Graph theory}
We introduce algebraic graph theory basics
from~\cite{FB-JC-SM:08cor}. An undirected graph is a pair $\mathcal{G}
= \mathcal(\mathcal{I},\mathcal{E})$, where $\mathcal{I}$ is the
vertex set and $\mathcal{E} \subseteq \mathcal{I} \times \mathcal{I}$
is the edge set.
% An induced subgraph $\mathcal{G}_{\sigma} =
% (\mathcal{I}_{\sigma},\mathcal{E}_{\sigma})$ of $\mathcal{G}$ is a
% graph satisfying $\mathcal{I}_{\sigma}\subseteq\mathcal{I}$,
% $\mathcal{E}_{\sigma}\subseteq\mathcal{E}$, and
A graph is connected if there exists a path between any two vertices.
We denote by $\mathcal{N}(i)$ the set of neighbors of node~$i$.  An
orientation procedure is to, for each generic edge $e_{k} \in
\mathcal{E}$ with vertices $i,j$, choose either $i$ or $j$ as the
positive end and the other as the negative end. For a given
orientation, the incidence matrix $D=(d_{k i}) \in \mathbb{R}^{m
  \times n}$ associated with $\mathcal{G}$ is defined as
$ d_{k i} = 1$ if $i$ is the positive end of $e_{k}$, $d_{k i} = -1$
if $i$ is the negative end of $e_{k}$, and $d_{k i} = 0$
otherwise.
% \begin{align*}
%   d_{k i} =
%   \begin{cases}
%     1 & \text{if $i$ is the positive end of $e_{k}$},
%     \\
%     - 1 & \text{if $i$ is the negative end of $e_{k}$},
%     \\
%     0 & \text{otherwise}.
%   \end{cases}
% \end{align*}

\section{Problem statement}\label{section:ps}

We introduce here model for the power network and state the desired
performance goals on the controller design.  

We use a connected undirected graph
$\mathcal{G}=(\mathcal{I},\mathcal{E})$ to represent the power
network, where $\mathcal{I}=\{1,2,\cdots,n\}$ stands for the set of
buses (nodes) and $\mathcal{E} =
\{e_{1},e_{2},\cdots,e_{m}\}\subseteq\mathcal{I}\times\mathcal{I}$
represents the collection of transmission lines (edges).  At each bus
$i\in\mathcal{I}$, denote by $\omega_{i}\in\real$,
$\theta_{i}\in\real$ $p_{i}\in\real$, $M_{i}\in\real_{\geqslant}$, and
$E_{i}\in\real_{> }$ the shifted frequency with respect to the nominal
frequency, voltage angle, active power injection, inertial, and
damping (droop) coefficient, respectively. \change{Notice that we
  explicitly allow buses to have zero inertia. We assume at least one bus
  possesses strictly positive inertia.}  Given an arbitrary
orientation procedure of $\mathcal{G}$, let $D\in\real^{m\times n}$ be
the corresponding incidence matrix. In addition, for each generic
transmission line with positive end $i$ and negative end $j$, denote
$\lambda_{ij}\triangleq \theta_{i}-\theta_{j}$ as the voltage angle
difference between node $i$ and $j$; denote $b_{ij}\in\real_{>}$ as
the line susceptance.
%, and $f_{ij}\triangleq b_{ij}\lambda_{ij}$ as power flow.
Let $\II{u}\subset\mathcal{I}$ be the collection of bus indexes with
additional control inputs. Let $\theta\in\real^{n}$,
$\omega\in\real^{n}$, $\lambda\in\real^{m}$, denote the collection of
$\theta_{i}$'s, $\omega_{i}$'s, and $\lambda_{ij}$'s, respectively.
Let $Y_{b}\in\real^{m\times m}$ be the diagonal matrix whose $k$th
diagonal entry is the susceptance of the transmission line $e_{k}$
connecting $i$ and $j$, i.e., $[Y_{b}]_{k,k}=b_{ij}$. By definition,
%\begin{subequations}\label{sube:theta-lambda-f}
\begin{align}
  \lambda&=D\theta . \label{eqn:lambda-theta}
  % \\
  % f&=Y_{b}\lambda.
\end{align}
% \end{subequations}
Let $M \triangleq
\text{diag}(M_{1},M_{2},\cdots,M_{n})\in\real^{n\times n}$, and $E
\triangleq \text{diag}(E_{1},E_{2},\cdots,E_{n})\in\real^{n\times
  n}$. The nonlinear swing dynamics of power network can be
equivalently formulated by choosing either $(\theta,\omega)$ or
$(\lambda, \omega)$ to describe the system state. Here, we use the
latter one. \change{In this case, the dynamics can be described by the
  following differential algebraic equation}~\cite{ARB-DJH:81,AP:12},
\change{
\begin{subequations}\label{eqn:compact-form}
  \begin{align}
    \dot \lambda(t) &= D\omega(t),
    \\
    M\dot\omega(t) &=
    -E\omega(t)-D^{T}Y_{b}\sin(\lambda(t))+p+\alpha(t),\label{eqn:compact-form-2}
  \end{align}
\end{subequations}
}
where $ \alpha(t)\in\mathbb{A}\triangleq\left\{ z\in\real^{n} \big| \
  z_{w}=0 \text{ for }w\in \mathcal{I} \setminus \II{u}\right\}$ is
the control signal to be designed.  Furthermore, due to the
transformation~\eqref{eqn:lambda-theta}, one has
\begin{align}\label{eqn:initial-state}
  \lambda(0)&\in\range{(D)}.
\end{align}
Throughout the rest of the paper, if not specified, we assume that the
initial condition of system~\eqref{eqn:compact-form}
satisfies~\eqref{eqn:initial-state}.  We assume that the power
injection $p$ designed by the tertiary layer is balanced, i.e.,
$\ones_{n}^{T}p=0$. This assumption is reasonable, given that our
focus here is on the system transient frequency behavior, which
instead lies within the scope of primary and secondary control.
According to~\cite[Lemma 2]{FD-MC-FB:13}, the
system~\eqref{eqn:compact-form} with $\alpha\equiv 0_{n}$ has an
equilibrium $(\lambda^{\infty},\zeros_{n})\in\real^{m+n}$ that is
locally asymptotically stable if
\begin{align}\label{ineq:sufficient-eq}
  \|L^{\dagger}p\|_{\mathcal{E},\infty}<1,
\end{align}
where $L\triangleq D^{T}Y_{b}D$ and $\| z \|_{\mathcal{E},\infty}
\triangleq \max_{(i,j)\in\mathcal{E}} |z_{i}-z_{j}|$ for
$z\in\real^{n}$. In addition, $\lambda^{\infty}$ lies in $\Upsilon$
and is unique in the closure of $\Upsilon$, where $\Upsilon \triangleq
\setdef{\lambda}{|\lambda_{i}|<\pi/2,\ \forall
  i\in[1,m]_{\naturals}}$.

\begin{remark}\longthmtitle{Distributed 
    dynamics}\label{rmk:distributed-dynamics}
  {\rm We emphasize that the dynamics~\eqref{eqn:compact-form} is
    naturally distributed, i.e., the evolution of any given state is
    fully determined by the state information from its
    neighbors. Specifically, for each $(i,j)\in\mathcal{E}$,
    $\dot\lambda_{ij}$ is determined by $\omega_{i}$ and $\omega_{j}$,
    i.e., the states of neighbors  of edge $(i,j)$; for each
    $i\in\mathcal{F}$; $\dot\omega_{i}$ is determined by $M_{i}$, $\omega_{i}$,
    $E_{i}$, $p_{i}$, $\alpha_{i}$ and $\lambda_{ij}$, $b_{ij}$ with
    $(i,j)\in\mathcal{E}$ that are either state, parameter, and power
    injections belonging to node $i$, or states and parameters of its
    neighboring edges.  } \oprocend
\end{remark}

Given a target subset $\II{\omega}$ of $\II{u}$, our goal is to design
a distributed state-feedback controller, one per bus in $\II{u}$, that
maintains stability of the whole power network while cooperatively
guaranteeing frequency invariance and attractivity of nodes in
$\II{\omega}$. Formally, the controller $\alpha$ should make the
closed-loop system satisfy the following requirements:
\begin{enumerate}[wide]
\item \emph{Frequency safety:} For each $i\in\II{\omega}$, let
  $\underline\omega_{i}\in\real$ and $\bar\omega_{i}\in\real$ be lower
  and upper safe frequency bounds, with
  $\underline\omega_{i}<\bar\omega_{i}$.  If $\omega_{i}$ is initially
  safe, i.e., $\omega_{i}(0)\in[\underline\omega_{i},\bar\omega_{i}]$,
  then we require that the entire trajectory stay within
  $[\underline\omega_{i},\bar\omega_{i}]$. On the other hand, if
  $\omega_{i}$ is initially unsafe, then we require that there exists
  a finite time $t_{0}$ such that
  $\omega_{i}(t)\in[\underline\omega_{i},\bar\omega_{i}]$ for every
  $t\geqslant t_{0}$. This requirement is equivalent to asking the set
  $[\underline\omega_{i},\bar\omega_{i}]$ to be both invariant and
  attractive for each $i\in\II{\omega}$.
  
\item \emph{Local asymptotic stability:} The closed-loop system should
  preserve the asymptotic stability properties of the open-loop
  system~\eqref{eqn:compact-form} where $\alpha\equiv\zeros_{n}$. 

\item \emph{Lipschitz continuity:} The controller should be a
  Lipschitz function in the state argument. This ensures the existence
  and uniqueness of solution for the closed-loop system and rules out
  discontinuities in the control signal.

\item \emph{Economic cooperation:} Each bus in $\II{u}$ should
  cooperate with the others to reduce the overall control cost.

\item \emph{Distributed nature:} The controller $\alpha$ should be
  implementable in distributed way, i.e., node $i$ should be able to
  compute $\alpha_{i}$ by only exchanging information with its
  neighboring nodes and edges.
\end{enumerate}

% Our controller design to meet these requirements has a two-layer
% structure.
In Section~\ref{section:centralized-control}, we introduce a
centralized controller architecture that meets the requirements
(i)-(iv). We later build on this architecture in
Section~\ref{section:distributed-control} to provide a distributed
controller that satisfies all requirements~(i)-(v).

\begin{figure*}[tbh]
  \centering
  \includegraphics[width=.8\linewidth]{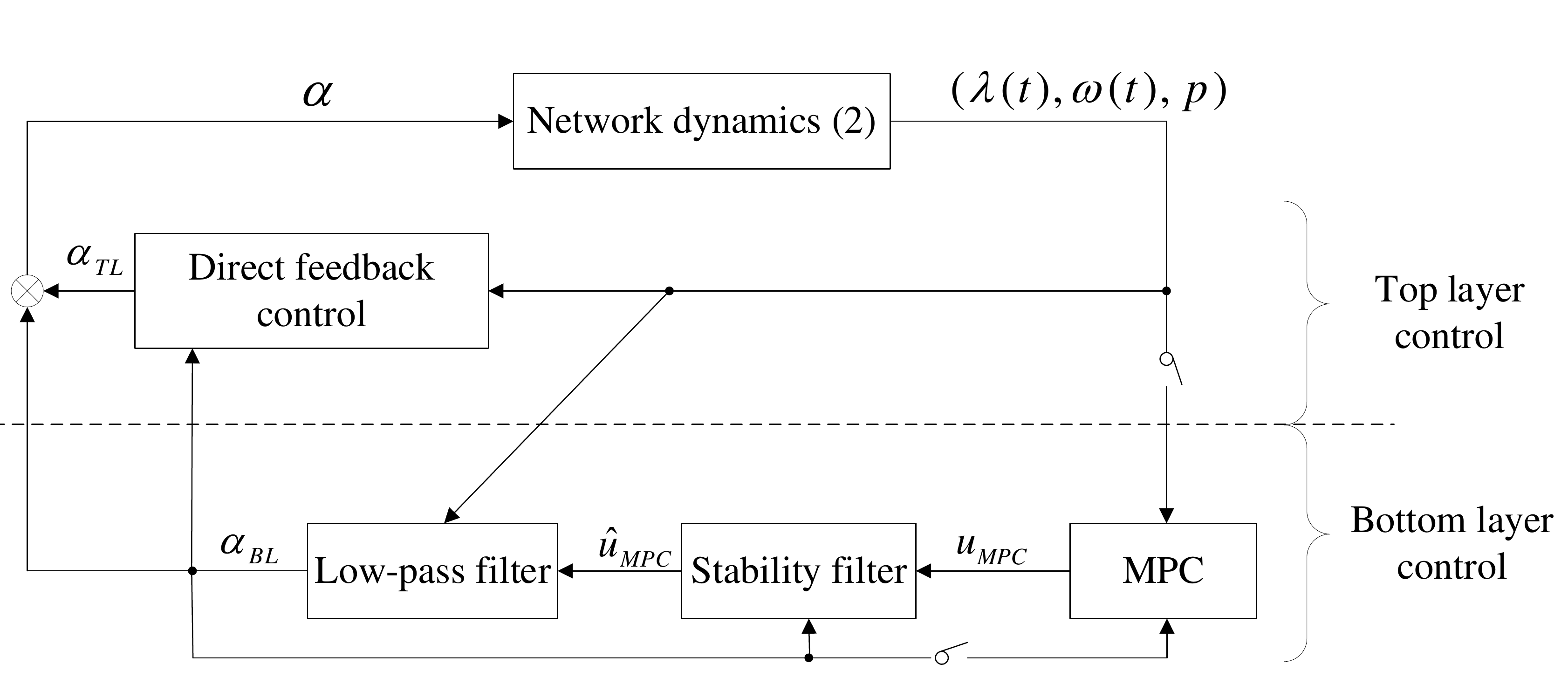}
  \caption{Block diagram of the closed-loop system with the proposed
    controller architecture.}\label{fig:block-diagram}
\end{figure*}

\section{Centralized bilayered
  controller}\label{section:centralized-control}

Here, we propose a centralized controller to address the requirements
posed in Section~\ref{section:ps}. Our idea for design starts from
considering MPC to account for the economic cooperation requirement;
however, MPC cannot be run continuously due to the computational
burden of its online optimization.  We therefore compute MPC solutions
periodically. Given the reliance of the MPC implementation on sampled
system states that are potentially outdated, we include additional
components that employ real-time state information to tune the output
of the MPC implementation and ensure stability and frequency safety.
The control signal $\alpha$ is defined by
\begin{align}\label{eqn:two-layer}
  \alpha = \alpha_{TL}+\alpha_{BL}.
\end{align}
Roughly speaking, the bottom-layer controller $\alpha_{BL}$
periodically and optimally allocates control effort, while respecting
a stability constraint and steering the frequency trajectories as a
first step to achieve frequency invariance and attractivity. The
top-layer controller $\alpha_{TL}$, implemented in real time, slightly
tunes the control trajectory generated by the bottom layer, ensuring
frequency invariance and attractivity.  Figure~\ref{fig:block-diagram}
shows the overall structure of the closed-loop system.
\change{Interestingly, as we show later, the combination of the
  stability filter, low pass filter, and direct feedback control
  stabilizes the system regardless of what is in the MPC block.}
In the following, we provide
detailed definitions of each of the design elements.

\subsection{Bottom-layer controller
  design}\label{subsection:option-loop}

We introduce here the bottom-layer control signal $\alpha_{BL}$, which
results from the combination of three components,
\change{cf. Figure~\ref{fig:block-diagram}: a MPC component, a
  stability filter, and a low-pass filter.} The MPC component
periodically samples the system state, solves an optimization problem
online, and updates its output signal $u_{MPC}$. The purpose of having
this MPC component is to efficiently allocate control resources to
achieve the frequency safety requirement. \change{The stability filter
  is designed to guarantee closed-loop asymptotic stability by
  enforcing monotonic decrease of an appropriate energy function
  (which we define later).}  Since $\hat u_{MPC}$ is merely a
piece-wise continuous signal, to avoid discontinuity in the control
signal, the low-pass filter further smooths it to generate an input
$\alpha_{BL}$ that is continuous in time.  The bottom-layer controller
by itself stabilizes the system (without the need of the top layer)
but does not guarantee frequency safety. This is precisely the role of
the top-layer design, which based on real-time system state
information, slightly tunes the control signal generated by the bottom
layer to achieve frequency safety while maintaining system
stability. Note that, except for the MPC component, all other
components can access real-time information.

Next, we introduce each component in the bottom layer and characterize
their properties.

\subsubsection{MPC component}
Based on the most recent sampled system information, the MPC component
updates its output after solving an optimization problem
online. Formally, denote $\{t^{w}\}_{w\in\naturals}$ as the
collection of sampling time instants, where
$t^{w+1}>t^{w}\geqslant  0$ holds for every $w\in\naturals$. At each
sampling time $t=t^{w}$, define a piece-wise continuous signal
$p^{fcst}_{t}:[t,t+\tilde t]\rightarrow\real^{n}$ as the predicted
value of the true power injection $p$ for the $\tilde t$ seconds
immediately following~$t$. Note that here we particularly allow the
predicted power injection to be time-varying, although its true value
is time-invariant. For convenience of exposition, we define
\begin{align*}
  x \triangleq(\lambda,\omega,\alpha_{BL})
\end{align*}
as the augmented collection of system states (the last state comes
from the low-pass filter component). Let $x(t^{w}) =
(\lambda(t^{w}),\omega(t^{w}),\alpha_{BL}(t^{w}))$ be
the augmented system state value at the sampling time $t^{w}$.

In the predicted model, we discretize the system dynamics with time
step $T>0$, and denote $N\triangleq \lceil\tilde t/T\rceil$ as the
predicted step length. At every $t=t^{w}$, the MPC component
solves the following optimization problem,
\begin{subequations}\label{opti:nonlinear}
  \begin{alignat}{2}
    & & & \min_{\hat X, \hat u,S}\quad g(\hat X,\hat u,
    S)\triangleq \sum_{k=1}^{N} \Big(
    \sum_{i\in\II{u}}c_{i}\hat
    \alpha_{BL,i}^{2}(k)+\sum_{i\in\II{\omega}}d_{i}s_{i}^{2}(k)
    \Big) \notag
    \\
    &\text{s.t.}&\quad & F\hat x(k+1)=A\hat x(k)+B_{1}\hat
    p^{fcst}(k)+B_{2}\hat u\label{opti:nonlinear-1}
    \\
    &&&\hat u\in\mathbb{A},\label{opti:control-location}
    \\
    &&&\hat x(1)=x(t^{w}),\label{opti:nonlinear-2}
    \\
    %&&&\hspace{-0.7cm}\lambda (0)=\lambda(t^{w}),\
    %\hat\omega(0)=\omega(t^{w}),\
    %\hat\alpha_{BL}(0)=\alpha_{BL}(t^{w}),\label{opti:nonlinear-2}
   % \\
    &&& \hspace{-0.7cm}\underline\omega_{i}-s_{i}(k)\leqslant
    \hat\omega_{i}(k)\leqslant \bar\omega_{i}+s_{i}(k),\ \hspace{-.0cm}\forall
    i\in\II{\omega}\hspace{-.0cm},\forall
    k\in[1,N]_{\naturals},\hspace{-0.4cm}\label{opti:nonlinear-3}
    \\
    &&&
    |\hat u_{i}|\leqslant\epsilon_{i}|\alpha_{BL,i}(t^{w})|,\quad
    \forall i\in\II{u}.\label{opti:sensitivity}
  \end{alignat}
\end{subequations}
% where for compactness, we define
% \begin{align*}
%  x\triangleq \begin{bmatrix}
% \lambda
%   \\
% \omega
%   \\
% \alpha_{BL}
%    \end{bmatrix},\quad \hat x\triangleq \begin{bmatrix}
%   \hat\lambda
%   \\
%   \hat \omega
%   \\
%   \hat\alpha_{BL}
%    \end{bmatrix}.
% \end{align*}
In this optimization,~\eqref{opti:nonlinear-1} combines the
linearized, discretized dynamics corresponding
to~\eqref{eqn:compact-form} as well as the low-pass filter introduced
later, and $\hat
x\triangleq(\hat\lambda,\hat\omega,\hat\alpha_{BL})\in\real^{m+2n}$
corresponds to the predicted system state.  Depending on the specific
discretization method, one can choose different matrices $F$,
$A\in\real^{(m+2n)\times (m+2n)}$ and $B_{1},B_{2}\in\real^{(m+2n)\times n}$
(Section~\ref{section:discretization} below contains a detailed
discussion on discretization); $\hat p^{fcst}_{t^{w}}(k)\triangleq
p^{fcst}_{t^{w}}(t^{w}+(k-1)T)$ for every
$k\in[1,N]_{\naturals}$;~\eqref{opti:control-location} specifics the
control availability for each bus;~\eqref{opti:nonlinear-2} is the
initial condition;~\eqref{opti:nonlinear-3} represents a soft version
of the frequency safety constraint, where we penalize in the cost
function the deviation of predicted frequency from its desired
bounds;~\eqref{opti:sensitivity} restricts the value of the control
input $\hat u_{i} \in \real$ with respect to the state of the low-pass
filter via a tunable parameter $\epsilon_{i}>0$;
% , which, on the one hand, delimits the relative size of $\hat
% u_{i}$ on $\alpha_{BL,i}$ via a tunable parameter $\epsilon_{i}>0$,
% and on the other, is related to the stability of the controller that
% we later explain;
%
% \marginJC{I'm not sure I understand this. ``sensitive'' or
%   ``sensitivity'' Why does that inequality capture sensitivity?? It
%   looks more like relative size. Also, the forward referencing leaves
%   the reader clueless here. }
%   
%   \marginy{Here I by sensitivity I meant $|\partial \hat
%   u_{i}/\partial\alpha_{BL,i}|$. It may lead to high-gain feedback
%   if we don't have this constraint.}
%
finally, the objective function $g$ combines the overall cost of
control effort and the penalty on the violation of the frequency
safety requirement, where $c_{i}> 0$ for each $i\in\II{u}$ and
$d_{i} >0$ for each $i\in\II{\omega}$ are design parameters.  For
compactness, we define
\begin{subequations}\label{sube:eqn:traj}
  \begin{align}
    \hat X&\triangleq[\hat x(1),\hat x(2),\cdots,x(N)],
    \\
    S &\triangleq[s(1),s(2),\cdots,s (N)],
    \\
   % \hat \Lambda&\triangleq[\hat \lambda(0),\hat \lambda(1),\cdots,\lambda(N)],
  %  \\
  %  \hat \Omega&\triangleq[\hat\omega(0),\hat\omega(1),\cdots,\hat\omega(N)],
  %  \\
  %  \hat A&\triangleq[\hat \alpha(0),\hat \alpha(1),\cdots,\hat \alpha(N)],
  %  \\
    \hat P_{t^{w}}^{fcst}&\triangleq[\hat p_{t^{w}}^{fcst}(1),\hat
    p_{t^{w}}^{fcst}(2),\cdots,\hat p_{t^{w}}^{fcst}(N)],\label{eqn:p-G-graph}
  \end{align}
\end{subequations}
where for every $k\in[1,N]_{\naturals}$, $s(k)$ is the collection
of $s_{i}(k)$'s over $i\in\II{\omega}$.
%as the collection of discretized state trajectories of flow,
%frequency, low-pass filter, and forecasted power injection.

We denote by $\RR (\mathcal{G}, \II{u}, \II{\omega},
\hat P_{t^{w}}^{fcst}, x(t^{w}))$ as the optimization
problem~\eqref{opti:nonlinear} to emphasize its dependence on network
topology, nodal indexes with exogenous control signals, nodal indexes
with transient frequency requirement, forecasted power injection, and
state values at the sampling time. We may simply use~$\RR$ if the
context is clear.  Also, we denote $(\hat X^{*}, \hat u^{*},S^{*})$ as its optimal solution.

% Specially, we may denote $\hat u^{*}$ by $\hat
% u^{*}(\mathcal{G},\II{u},\II{\omega},
% p^{fcst}_{t^{w}},f(t^{w}),\omega(t^{w}),\alpha_{BL}(t^{w}))$
% to emphasize its dependence on these values.

\begin{remark}\longthmtitle{Selection of frequency violation penalty
    coefficient}\label{rmk:violation-penalty}
  {\rm The parameter $d = \{d_i\}_{i\in\II{\omega}}$ in the objective
    function plays a fundamental rule in determining how the predicted
    frequency can exceed the safe bounds.  In the extreme case
    $d=\zeros_{|\II{\omega}|}$ (i.e., no penalty for frequency
    violation), the MPC controller loses its functionality of
    adjusting frequency.  As~$d$ grows, the controller ensures that
    the violation of the frequency safety requirement become smaller.
    The top-layer control introduced later adds additional input to
    the bottom-layer controller to ensure the frequency requirement
    is~satisfied. \oprocend }
\end{remark}

Given the open-loop optimization problem~\eqref{opti:nonlinear}, the
function $u_{MPC}$ corresponding to the MPC component in
Figure~\ref{fig:block-diagram} is defined as follows: for
$w\in\naturals$ and $ t\in[t^{w},t^{w+1})$, let
\begin{align}\label{eqn:uMPC}
  \hspace{-0.1cm} u_{MPC}(t) \!=\! \hat
  u^{*}(\mathcal{G},\II{u},\II{\omega},
   \hat P^{fcst}_{t^{w}},x(t^{w})).
\end{align}
%
%\begin{remark}\longthmtitle{Centralized implementation of MPC component}\label{rmk:implemet-centralized}
%  {\rm

Note the last two arguments that $\hat u^{*}$ depends on: forecasted
power injection value and state value of the entire network at a
sampling time. To implement~\eqref{eqn:uMPC}, a straightforward idea
is to have one operator globally gather the above two values, obtain $
\hat u^{*}$ by solving $\RR$, and finally broadcast $\hat u^{*}_{i}$
to the $i$th node.  Later in
Section~\ref{section:distributed-control}, we propose an alternative
distributed computation algorithm to reduce the computational burden.
 % However, later in Section~\ref{section:distributed-control} we
 % design one can assign multiple operators to multiple partitions of
 % the network, where each operator only gathers forecasted power
 % injection value and state value within its corresponding
 % partition..  } \oprocend
%\end{remark} 
The next result characterizes the dependence of the controller on the
sampled state values and predicted power injection.

\begin{proposition}\longthmtitle{Piece-wise affine and continuous
    dependence of optimal solution on sampling state and predicted
    power injection}\label{prop:pwl-optimal}
  Suppose $F$ is invertible, then the optimization problem
  $\RR(\mathcal{G},\II{u},\II{\omega}, \hat
  P^{fcst}_{t^{w}},x(t^{w}))$ in~\eqref{opti:nonlinear} has a unique
  optimal solution $(\hat X^{*}, \hat u^{*},S^{*})$. Furthermore,
  given $\mathcal{G},\ \II{u}$, and $\II{\omega}$, $\hat u^{*}$ is
  continuous and piece-wise affine in $(\hat
  P_{t^{w}}^{fcst},x(t^{w}))$, that is, there exist $l\in\naturals,
  \{H_{\xi}\}_{\xi=1}^{l}$, $\{K_{\xi}\}_{\xi=1}^{l}$,
  $\{h_{\xi}\}_{\xi=1}^{l}$, and $\{k_{\xi}\}_{\xi=1}^{l}$ with
  suitable dimensions such that
  \begin{align}\label{eqn:pwa-u}
    \hat u^{*}=K_{\xi}z+k_{\xi}, \text{ if } z\in\left\{ y \; | \;
      H_{\xi}y\leqslant h_{\xi} \right\} \text{ for } \xi
    \in[1,l]_{\naturals}
  \end{align}
  %
  % \marginJC{There is no dependency of time? So the input is constant
  % throughout the whole time interval $[t^j,t^{w+1}]$??}
  % \marginy{Yes. It is intentionlly designed to be a constant to
  % reduce the %computational time for solving the MPC.}
  %
  holds for every $z\in\real^{(N+2)n+m}$, where $z$ is the collection
  of $(\hat P_{t^{w}}^{fcst},x(t^{w}))$ in column-vector form.
\end{proposition}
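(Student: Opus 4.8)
The plan is to recognize $\RR$ as a strictly convex multiparametric quadratic program (mpQP) in which the parameter $z=(\hat P^{fcst}_{t^{w}},x(t^{w}))$ enters only affinely, and then to extract uniqueness and the piecewise-affine structure from the KKT system. First I would use the invertibility of $F$ to eliminate the predicted state trajectory: writing \eqref{opti:nonlinear-1} as $\hat x(k+1)=F^{-1}\big(A\hat x(k)+B_{1}\hat p^{fcst}(k)+B_{2}\hat u\big)$ and iterating from the fixed initial datum $\hat x(1)=x(t^{w})$ in \eqref{opti:nonlinear-2}, each $\hat x(k)$, and in particular each filter state $\hat\alpha_{BL,i}(k)$, is a uniquely determined affine function of $(\hat u,z)$. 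Substituting these expressions into the cost $g$ and into the inequalities \eqref{opti:nonlinear-3} and \eqref{opti:sensitivity}, and folding the support constraint \eqref{opti:control-location} into the decision space, reduces $\RR$ to a quadratic program in $v\triangleq(\hat u,S)\in\mathbb{A}\times\real^{|\II{\omega}|N}$ of the form $\min_{v}\ \tfrac12 v^{T}Qv+(Cz+c)^{T}v$ subject to $Gv\le Wz+w$, where $Q,C,c,G,W,w$ depend only on $\GG,\II{u},\II{\omega}$ and the (known) discretization matrices.

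Next I would establish existence and uniqueness. The Hessian $Q$ is block diagonal in $(\hat u,S)$: the $S$-block is diagonal with strictly positive entries $2d_{i}$ (as every $d_{i}>0$), while the $\hat u$-block is positive definite because the penalty $\sum_{k}\sum_{i}c_{i}\hat\alpha_{BL,i}^{2}(k)$ with $c_{i}>0$ is evaluated along filter states that depend injectively on $\hat u$ through the low-pass filter recursion; there are no $\hat u$--$S$ cross terms. Hence $Q\succ0$ and $g$ is strictly convex in $v$. The feasible set is nonempty for every $z$ (given any admissible $\hat u$, the slacks can be enlarged to meet \eqref{opti:nonlinear-3}), closed, and convex, and the cost is coercive on it, since $\hat u$ is box-bounded by \eqref{opti:sensitivity} and the cost diverges as $\|S\|\to\infty$. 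A minimizer therefore exists, and strict convexity makes it unique; since $F^{-1}$ then pins down $\hat X^{*}$ from $\hat u^{*}$ and $z$, the full triple $(\hat X^{*},\hat u^{*},S^{*})$ is unique.

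For the piecewise-affine representation I would argue through the optimal active set. For each candidate active set $\mathcal{A}$, the stationarity condition $Qv+Cz+c+G^{T}\mu=0$ together with the active equalities $G_{\mathcal{A}}v=W_{\mathcal{A}}z+w_{\mathcal{A}}$ forms a linear system whose solution, using $Q\succ0$ and a linearly independent selection of active rows, expresses $v$ affinely in $z$; the set of parameters for which $\mathcal{A}$ stays optimal is carved out by primal feasibility of the inactive rows and dual feasibility $\mu_{\mathcal{A}}\ge0$, both affine in $z$, hence a polyhedron $\{z \mid H_{\xi}z\le h_{\xi}\}$. Since there are finitely many active sets, finitely many such critical regions cover the parameter space, and reading off the $\hat u$-block of $v$ on each region yields $\hat u^{*}=K_{\xi}z+k_{\xi}$, which is exactly \eqref{eqn:pwa-u}. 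Continuity I would obtain separately from the uniqueness already in hand: localizing to a compact neighborhood of the optimizer (legitimate by coercivity), Berge's maximum theorem applied to the continuous cost and the polyhedral feasible-set correspondence shows the singleton argmin $z\mapsto v^{*}(z)$ is continuous, which in turn forces the affine pieces of neighboring regions to agree on their shared facets.

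I expect the main obstacle to be the degenerate cases of the mpQP analysis, where the active constraint gradients fail to be linearly independent or a critical region collapses to lower dimension, jeopardizing both the well-definedness of the per-region affine formula and its matching across boundaries. The way I would dispatch this is to decouple the two conclusions: continuity follows from the uniqueness of $v^{*}$ and the maximum theorem and is indifferent to any regularity of the multipliers, while the affine formula on each full-dimensional region requires only a linearly independent subset of the active rows, which always exists. Degeneracy then affects merely the combinatorial count $l$ of regions needed, not the validity of \eqref{eqn:pwa-u}.
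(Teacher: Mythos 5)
Your overall strategy---reduce to a multiparametric QP, get uniqueness from strict convexity, obtain the piecewise-affine law from an active-set analysis, and get continuity from uniqueness plus Berge's theorem---is sound and is essentially the content of the standard explicit-MPC result the paper simply cites (\cite[Theorem~1.12]{FB:03}). But there is a genuine gap at the very first step, and it is precisely the point where this problem deviates from a textbook mpQP. You assert that the parameter $z$ ``enters only affinely'' and that the constraints reduce to $Gv\leqslant Wz+w$. This is false for constraint~\eqref{opti:sensitivity}: its right-hand side is $\epsilon_{i}|\alpha_{BL,i}(t^{w})|$, and $\alpha_{BL,i}(t^{w})$ is a \emph{component of the parameter} $z$ (it sits inside $x(t^{w})$). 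The absolute value of a parameter component is piecewise linear, not affine, in $z$, so $\RR$ is not globally of the form $\min\ \tfrac{1}{2}v^{T}Qv+(Cz+c)^{T}v$ subject to $Gv\leqslant Wz+w$, and the mpQP machinery (critical regions polyhedral in $z$, per-region affine laws) cannot be invoked directly. Since your active-set and continuity arguments both rest on this affine-in-$z$ form, the proof as written does not go through.

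The repair is the one the paper makes: partition the parameter space into the $2^{|\II{u}|}$ closed sets $\mathfrak{B}^{\eta}$ determined by the sign pattern of the components $\alpha_{BL,i}(t^{w})$, $i\in\II{u}$. On each such set, \eqref{opti:sensitivity} becomes a pair of inequalities affine in $z$ (cf.~\eqref{sube:ineq:sgn-alpha}), your mpQP analysis applies verbatim, and one obtains finitely many polyhedral critical regions with affine laws inside each $\mathfrak{B}^{\eta}$. Because the $\mathfrak{B}^{\eta}$ are themselves polyhedra whose union covers the whole parameter space, the result is still a finite polyhedral cover with an affine formula on each piece, which is exactly~\eqref{eqn:pwa-u}; continuity across the boundaries shared by different $\mathfrak{B}^{\eta}$ then follows from the global uniqueness of the optimizer, by the same argument you already use for continuity across critical-region boundaries. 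With this one case split inserted before your reduction, the remainder of your proof (existence via coercivity, uniqueness via $Q\succ 0$, the degeneracy discussion, and the Berge argument) stands.
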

\begin{proof}
  We start by noting that $\RR$ is feasible (hence at least one
  optimal solution exists) for any given $z$.  This is because, given
  a state trajectory $\hat X$ of~\eqref{opti:nonlinear-1} with input
  $\hat u=\zeros_{n}$ and initial condition~\eqref{opti:nonlinear-2},
  choosing a sufficiently large $s(k)$ for each
  $k\in[1,N]_{\naturals}$ makes it satisfy
  constraint~\eqref{opti:nonlinear-3}.  The uniqueness follows from
  the facts that I) $g$ is strongly convex in $(\hat u,S)$;
  II) $\hat X$ is uniquely and linearly determined by $\hat u$; III)
  all constraints are linear in $(\hat X, \hat u,S)$.  To
  show continuity and piece-wise affinity, we separately consider
  $2^{|\II{u}|}$ cases, depending on the sign of each
  $\{\alpha_{BL,i}(t^{w})\}_{i\in\II{u}}$. Specifically, let
  $\eta\triangleq\{\eta_{i}\}_{i\in\II{u}}\in\{1,-1\}^{|\II{u}|}$ and
  define $\mathfrak{B}^{\eta}\triangleq\left\{ z
    \big|(-1)^{\eta_{i}}\alpha_{BL,i}(t^{w})\geqslant 0,\ \forall
    i\in\II{u} \right\}$.  Note that every $z$ lies in at least one of
  these sets and that, in any $\mathfrak{B}^{\eta}$, the sign of each
  $\alpha_{BL,i}(t^{w})$ with $i\in\II{u}$ is fixed.  Hence all
  the $|\II{u}|$ constraints in~\eqref{opti:sensitivity} can be
  transformed into one of the following forms
  \begin{subequations}\label{sube:ineq:sgn-alpha}
    \begin{alignat}{2}
    \hspace{-0.38cm}  -\epsilon_{i}\alpha_{BL,i}(t^{w}) & \leqslant \hat
      u_{i}\leqslant\epsilon_{i}\alpha_{BL,i}(t^{w}) & & \text{if
      }\alpha_{BL,i}(t^{w})\geqslant 0,\label{sube:ineq:sgn-alpha-1}
      \\
        \hspace{-0.2cm}  \epsilon_{i}\alpha_{BL,i}(t^{w}) & \leqslant \hat
      u_{i}\leqslant-\epsilon_{i}\alpha_{BL,i}(t^{w}) && \ 
      \text{if } \alpha_{BL,i}(t^{w})\leqslant
      0 . \label{sube:ineq:sgn-alpha-2}
    \end{alignat}
  \end{subequations}
  Note that if $\alpha_{BL,i}(t^{w})=0$, then $\hat
  u_{i}=0$. Therefore, in every $\mathfrak{B}^{\eta}$, $z$ appears in
  $\RR$ in a linear fashion; hence,
  it is easy to re-write $\RR$ into the following form:
  \begin{alignat}{2}
    &\min_{q} & \quad & q^{T}Vq\notag
    \\
    &\text{s.t.}&\quad &Gq\leqslant W+J^{\eta}z,
  \end{alignat}
  where $q$ is the collection of $(\hat X, \hat
  u,S)$ in vector form and $V\succeq 0$, $G$, $W$ and $J^{\eta}$
  are matrices with suitable dimensions. Note that only $J^{\eta}$
  depends on $\eta$. By~\cite[Theorem~1.12]{FB:03}, for every
  $\eta\in\{-1,1\}^{|\II{u}|}$, $s^{*}$ is a continuous and piece-wise
  affine function of $z$ whenever $z\in\mathfrak{B^{\eta}}$.  Since
  each $\mathfrak{B^{\eta}}$ consists of only linear constraints and
  the union of all $\mathfrak{B^{\eta}}$'s with
  $\eta\in\{1,-1\}^{|\II{u}|}$ is $\real^{(N+2)n+m}$, one has that
  $s^{*}$ is piece-wise affine in $z$ on $\real^{(N+2)n+m}$. Lastly,
  to show the continuous dependence of $s^{*}$ on $z$ on
  $\real^{(N+2)n+m}$, note that since such a dependence holds on every
  closed set $\mathfrak{B^{\eta}}$, we only need to prove that $s^{*}$
  is unique for every $z$ lying on the boundary shared by different
  $\mathfrak{B^{\eta}}$'s. This holds trivially as $s^{*}$ is unique
  for every $z\in\real^{(N+2)n+m}$, which we have proven above. 
\end{proof}

Notice that the continuity and piece-wise affinity established in
Proposition~\ref{prop:pwl-optimal} together suffice to ensure that
$\hat u^{*}$ is globally Lipschitz in $z$, and hence in the sampled
system state.  To see this point, one can easily check that $\max_{\xi
  \in [1,l]_{\naturals}}\|K_{\xi}\|$ qualifies as a global Lipschitz
constant.

In addition, Proposition~\ref{prop:pwl-optimal} also suggests an
alternative to directly solve~$\RR$ without treating it as an
optimization problem. Specifically, we can first compute and store
$\{H_{\xi}\}_{\xi=1}^{l}$, $\{K_{\xi}\}_{\xi=1}^{l}$,
$\{h_{\xi}\}_{\xi=1}^{l}$, and $\{k_{\xi}\}_{\xi=1}^{l}$, and then compute
$\hat u^{*}$ online via~\eqref{eqn:pwa-u}. However, such an approach,
usually called explicit MPC~\cite{AA-BA:09}, suffers from the curse of
dimensionality, in that the number of regions $l$ grows exponentially
fast in $m+n$, input size $|\II{u}|$, and horizon length~$N$.

\subsubsection{Stability and low-pass filters}
Here we introduce the stability and low-pass filters, explain the
motivation behind their definitions and characterize their
properties. Note that the sampling mechanism used for the MPC
component inevitably introduces delays in the bottom
layer. Specifically, for any time $t\in(t^{w},t^{w+1})$, i.e., between
two adjacent sampling times, $u_{MPC}(t)$ is fully determined by the
old sampled system information at time $t^{w}$, as opposed to the
current information. To eliminate the potential negative effect of
delay on system stability, \change{we introduce a stability filter to
  enforce closed-loop stability}. The low-pass filter after the
stability filter simply smooths the output of the stability filter to
ensure that the output of the bottom layer is continuous in time.
Formally, for every $i\in\mathcal{I}$ at any $t\geqslant 0$, define
the stability filter as
\begin{align}\label{eqn:stability-filter}
  \hspace{-1cm} \hat u_{MPC,i}(\alpha_{BL}(t),u_{MPC}(t)) &\notag
  \\
  &\hspace{-3cm}=\sat(u_{MPC,i}(t);\epsilon_{i}|\alpha_{BL,i}(t)|,-\epsilon_{i}|\alpha_{BL,i}(t)|),
% \\
%  \hat u_{MPC,i}=& \notag
%  \\
%    &\hspace{-3.9cm}\begin{cases}
%      0 & \hspace{-1cm}\text{if $\exists\tau\in[t^{w},t]$ s.t. $|u_{MPC,i}(t)|> \epsilon_{i}|\alpha_{i}(\tau)|$,}
%      \\
%      u_{MPC,i}(t) & \hspace{1.2cm}\text{otherwise.}
%    \end{cases}
\end{align}
and define the low-pass filter as                      %
\begin{align}\label{eqn:lp-filter}
  \dot\alpha_{BL,i}(t)&=-\frac{1}{\tau_{i}}\alpha_{BL,i}(t)-\omega_{i}(t)+\hat
  u_{MPC,i}(t),\quad\forall i\in\II{u},\notag
  \\
  \alpha_{BL,i}&\equiv0,\quad\forall i\in\mathcal{I}\backslash\II{u},
\end{align}
where the tunable parameter $\tau_{i}\in\real_{>}$ determines the
bandwidth of the low-pass filter. In addition, although for
compactness we define a stability filter for every $i\in\mathcal{I}$,
one can easily see that $\hat u_{MPC,i}\equiv0$ for every
$i\in\mathcal{I}\backslash\II{u}$.

Both the stability and the low-pass filters possess a
natural distributed structure: for each $i\in\II{u}$, $\alpha_{BL,i}$
only depends $\omega_{i}$ and $\hat u_{MPC,i}$, where the latter one
only depends on $u_{MPC,i}$ and $\alpha_{BL,i}$. This implies that to
implement $\hat u_{MPC,i}$ and $\alpha_{BL,i}$, it only requires
local information at node $i$. Throughout the rest of the paper, we
interchangeably use $\hat u_{MPC,i}(\alpha_{BL}(t),u_{MPC}(t))$ and
$\hat u_{MPC,i}(t)$ for simplicity.

The next result establishes that $\hat u_{MPC}$ is Lipschitz
continuous in the system state and an important property of the
bottom-layer controller $\alpha_{BL}$ that we use later to establish
stability.

\begin{lemma}\longthmtitle{Lipschitz continuity and stability
    condition} \label{lemma:Lipschitz-stability}
  For the signal $\hat u_{MPC}$ defined
  in~\eqref{eqn:stability-filter}, $\hat u_{MPC}$ is Lipschitz in
  system state at every sampling time $t=t^{w}$ with
  $w\in\naturals$. Furthermore, if $\alpha_{TL}$ is Lipschitz in
  system state, then both $\alpha_{TL}$ and $\alpha_{BL}$ are
  continuous in time.  Additionally,
  \begin{align}\label{ineq:stability-condition}
    \alpha_{BL,i}(t)\hat
   u_{MPC,i}(t) \leqslant \epsilon_{i}
  \alpha_{BL,i}^{2}(t),\quad\forall t\geqslant 0,\ \forall
  i\in\mathcal{I}.
  \end{align}
\end{lemma}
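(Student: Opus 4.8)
The plan is to treat the three assertions in the order stated, the last being essentially immediate.

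First, for the Lipschitz property, I would start from the observation that at a sampling instant $t=t^{w}$ the MPC output equals $\hat u^{*}(\mathcal{G},\II{u},\II{\omega},\hat P^{fcst}_{t^{w}},x(t^{w}))$, which by Proposition~\ref{prop:pwl-optimal}, being continuous and piece-wise affine, is globally Lipschitz in the sampled data and in particular in $x(t^{w})$. It then remains only to check that the stability filter~\eqref{eqn:stability-filter} preserves this property. To that end I would rewrite the symmetric saturation as $\hat u_{MPC,i}=\max\{-\epsilon_{i}|\alpha_{BL,i}|,\min\{u_{MPC,i},\epsilon_{i}|\alpha_{BL,i}|\}\}$ and note that $\min$, $\max$, and $|\cdot|$ are each (globally) Lipschitz, so the map $(u_{MPC,i},\alpha_{BL,i})\mapsto\hat u_{MPC,i}$ is jointly Lipschitz. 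Since $\alpha_{BL,i}(t^{w})$ is itself a component of the sampled state $x(t^{w})$ and $u_{MPC,i}(t^{w})$ is Lipschitz in $x(t^{w})$, composing the two Lipschitz maps yields that $\hat u_{MPC}$ is Lipschitz in the system state at $t=t^{w}$.

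Second, for the continuity-in-time claim, I would fix an inter-sampling interval $(t^{w},t^{w+1})$, on which $u_{MPC}$ is held constant so that $\hat u_{MPC,i}(t)=\sat(u_{MPC,i}(t^{w});\epsilon_{i}|\alpha_{BL,i}(t)|,-\epsilon_{i}|\alpha_{BL,i}(t)|)$ depends on the state only through $\alpha_{BL,i}(t)$ and, by the rewriting above, is Lipschitz in it. Combined with the hypothesis that $\alpha_{TL}$ is Lipschitz in the state, and with the Lipschitz continuity of $\sin(\cdot)$ and of the remaining affine terms in the dynamics~\eqref{eqn:compact-form} and the filter~\eqref{eqn:lp-filter}, the closed-loop vector field is Lipschitz in the state on each such interval, so standard ODE theory gives a unique, continuous solution there. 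The decisive point is that the states, among them $\alpha_{BL}$, do not jump at the sampling instants: only the held signal $u_{MPC}$ is refreshed at $t^{w}$, which alters $\dot\alpha_{BL}$ but leaves $\alpha_{BL}(t^{w})$ intact, so the terminal value on one interval serves as the initial condition on the next. Hence $\alpha_{BL}$ is continuous in time, and since $\alpha_{TL}$ is then a Lipschitz function of a continuous state, $\alpha_{TL}$ is continuous in time as well. This is the step I expect to require the most care, precisely because it is the low-pass filter that converts the merely piece-wise continuous $\hat u_{MPC}$ into a continuous $\alpha_{BL}$; without it the conclusion would fail.

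Third, the inequality~\eqref{ineq:stability-condition} follows at once. For $i\in\mathcal{I}\setminus\II{u}$ both $\alpha_{BL,i}$ and $\hat u_{MPC,i}$ vanish identically, so both sides are zero. For $i\in\II{u}$, the definition of $\sat$ forces $\hat u_{MPC,i}(t)\in[-\epsilon_{i}|\alpha_{BL,i}(t)|,\epsilon_{i}|\alpha_{BL,i}(t)|]$, i.e. $|\hat u_{MPC,i}(t)|\leqslant\epsilon_{i}|\alpha_{BL,i}(t)|$, whence
\begin{align*}
  \alpha_{BL,i}(t)\,\hat u_{MPC,i}(t)
  \leqslant |\alpha_{BL,i}(t)|\,|\hat u_{MPC,i}(t)|
  \leqslant \epsilon_{i}\,\alpha_{BL,i}^{2}(t),
\end{align*}
which is the desired bound.
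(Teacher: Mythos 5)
Your proposal is correct and follows essentially the same route as the paper: Lipschitz continuity at sampling times inherited from the piece-wise affine map of Proposition~\ref{prop:pwl-optimal}, continuity of $\alpha_{BL}$ in time because the low-pass filter integrates a piece-wise continuous input without state jumps at the sampling instants, and the bound~\eqref{ineq:stability-condition} read off directly from the definition of $\sat$. The only minor difference is in the first step, where the paper observes that constraint~\eqref{opti:sensitivity} renders the saturation in~\eqref{eqn:stability-filter} inactive at $t=t^{w}$ (so $\hat u_{MPC}(t^{w})=\hat u^{*}$ exactly), whereas you instead establish Lipschitzness of the saturation map itself as a $\max$/$\min$/$|\cdot|$ composition --- both arguments are valid.
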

\begin{proof}
  If $t = t^{w}$, then since $|\hat
  u_{i}^{*}|\leqslant\epsilon_{i}|\alpha_{BL,i}(t^{w})|$
  by~\eqref{opti:sensitivity} and $u_{MPC,i}(t^{w})=\hat u_{i}^{*}$
  for every $i\in\II{u}$, using~\eqref{eqn:stability-filter} we deduce
  that $\hat u_{MPC,i}(\alpha_{BL}(t),u_{MPC}(t))|_{t=t^{w}}=\hat
  u_{i}^{*}$. The Lipschitz continuity follows by
  Proposition~\ref{prop:pwl-optimal}. To show the time-domain
  continuity, since $\hat u_{MPC}$ is Lipschitz at every sampling
  point and the top-layer controller is also Lipschitz by hypothesis
  (we demonstrate this point later in Section~\ref{subsection:top
    layer}), one has that the solutions of both $\alpha_{TL}$ and the
  closed-loop system~\eqref{eqn:compact-form} exist and are unique and
  continuous in time. Note that $u_{MPC}$ in~\eqref{eqn:uMPC} is
  defined to be a piece-wise constant signal. One has,
  by~\eqref{eqn:stability-filter}, that $\hat u_{MPC}$ is piece-wise
  continuous, which further makes $\alpha_{BL}$ a continuous signal in
  time due to the low-pass filter.
  Condition~\eqref{ineq:stability-condition} simply follows from the
  definition of saturation function.
\end{proof}

\begin{remark}\longthmtitle{Link between designs of the MPC component
    and stability filter}\label{rmk:independent-design} {\rm Note
    that, regardless of the MPC component output $u_{MPC}$, the output
    of the stability filter $\hat u_{MPC}$ defined
    in~\eqref{eqn:stability-filter} always meets
    condition~\eqref{ineq:stability-condition}. This 
    implies that any inaccuracy in the MPC component (e.g., errors in
    sampled state measurement, forecasted power injection, or system
    parameters) cannot cause instability.
  %provides flexibility in the MPC component design
  %and robustness against, e.g., inaccuracy in sampled state
  %measurement, forecasted power injection, as well as system
  %parameters.
    However, to ensure the Lipschitz continuity in
    Lemma~\ref{lemma:Lipschitz-stability}, we formulate
    constraint~\eqref{opti:sensitivity} employing the same coefficient
    $\epsilon_{i}$ in the stability
    filter~\eqref{eqn:stability-filter}. It is in this sense that both
    are linked.  \oprocend }
\end{remark}

\begin{remark}\longthmtitle{Continuous versus periodic sampling in the
    MPC component}\label{rmk:unnecessity-filter}
  {\rm \change{Note that if the MPC component were to sample the
      system state in a continuous fashion instead, then the
      constraint~\eqref{opti:sensitivity} would ensure that the output
      of the MPC component already satisfies the stability
      condition~\eqref{ineq:stability-condition}, and hence there
      would be no need for the stability filter. In this regard, the
      role of the stability filter is to filter out the unstable parts
      in $u_{MPC}$ caused by non-continuous sampling.}  \oprocend }
\end{remark}

\subsection{Discretization with sparsity
  preservation}\label{section:discretization}

As we have introduced the dynamics of the low-pass and stability
filters, we are now able to explicitly explain the computation of
matrices $F$, $A$, $B_{1}$ and $B_{2}$ in the prediction
model~\eqref{opti:nonlinear-1}.  We first construct a continuous-time linear
model by neglecting the top-layer controller and the stability filter
($\alpha\approx\alpha_{BL}$ and $\hat u_{MPC}\approx u_{MPC}$), and
then linearizing the nonlinear dynamics in
Figure~\ref{fig:block-diagram}. Our second step consists of
appropriately discretizing this linear model.

Notice that the transformation from a nonlinear continuous-time
nonlinear model to a discrete one does not affect closed-loop system
stability due to the presence of the stability filter. In fact, any
prediction model in the MPC component cannot jeopardize stability
(cf. Remark~\ref{rmk:independent-design}). On the other hand, such a
model simplification is reasonable since $\alpha_{BL}$ is designed to
only slightly tune the control signal, and we have described in
Remark~\ref{rmk:unnecessity-filter} how the stability filter barely
changes its input.

We obtain the linear model by assuming $\alpha\approx\alpha_{BL}$ and
$\hat u_{MPC}\approx u_{MPC}$, and approximating the dynamics in
Figure~\ref{fig:block-diagram}~by
\begin{align}\label{eqn:approxiate-dynamics}
  \dot \lambda(t)&=D\omega(t),\notag
  \\
  M\dot\omega(t)&=-E\omega(t)-D^{T}Y_{b}\lambda(t)+p+\alpha_{BL}(t),\notag
  \\
  M_{i}\dot\alpha_{BL,i}(t)&=-\frac{1}{\tau_{i}}\alpha_{BL,i}(t)-\omega_{i}(t)+
  \hat u_{MPC,i}(t),\quad\forall i\in\II{u},\notag
  \\
  \alpha_{BL,i}&\equiv0,\quad\forall i\in\mathcal{I}\backslash\II{u},
\end{align}
where the first two equations come from~\eqref{eqn:compact-form} by
linearizing the nonlinear sinusoid function via
$\sin(Y_{b}\lambda(t))\approx Y_{b}\lambda(t)$.  Now we re-write the
above linear dynamics into the compact form,
\begin{align}\label{eqn:approxiate-dynamics-compact}
  \tilde G \dot x(t)=\tilde A x(t)+\tilde B_{1}p+\tilde B_{2} u_{MPC}(t),
\end{align}
for certain matrices $\tilde A$, $\tilde B_{1}$, and $\tilde B_{2}$,
with $\tilde A$ stable~\cite{AP:12} and with $\tilde G$ a diagonal
matrix whose diagonals are $1$, $M_{i}$ with $i\in[1,n]_{\naturals}$,
or $0$. Additionally, one can easily check that the linearized
dynamics~\eqref{eqn:approxiate-dynamics}
and~\eqref{eqn:approxiate-dynamics-compact} preserve the locality
of~\eqref{eqn:compact-form-2} and~\eqref{eqn:lp-filter}.

We consider the following three discretization methods with step size
$ T>0$ to construct $F$, $A$, $B_{1}$, and $B_{2}$ matrices
in~\eqref{opti:nonlinear-1} approximating the continuous
dynamics~\eqref{eqn:approxiate-dynamics-compact}. For explanatory
simplicity, we here assume $\tilde G$ is invertible.
\begin{enumerate}[label=\alph*)]
\item Impulse invariant discretization:
  \begin{align}\label{sube:discretization-exp}
    F\triangleq I_{m+2n},\ A\triangleq e^{\tilde G^{-1}\tilde A
       T},\ B_{s}\triangleq\int_{0}^{ T}e^{\tilde
      G^{-1}\tilde A \tau}\text{d}\tau \tilde B_{s},\ s=1,2,
  \end{align}
\item Forward Euler discretization:
  \begin{align}\label{sube:discretization-linear-forward}
    F\triangleq \tilde G,\ A\triangleq  T\tilde A+\tilde G, \
    B_{s}\triangleq  T\tilde B_{s},\ s=1,2,
  \end{align}
\item Backward Euler discretization:
  \begin{align}\label{sube:discretization-linear}
    F\triangleq \tilde G- T\tilde A,\ A\triangleq I_{m+2n}, \
    B_{s}\triangleq  T \tilde B_{s},\ s=1,2,
  \end{align}
\end{enumerate}
where $F$ should be invertible for uniqueness of solution of the
discretized dynamics.

Note that with a fixed $ T$, the impulse invariant and
backward Euler methods usually have better approximation accuracy than
the forward Euler method. In fact, since all eigenvalues of $\tilde A$
have non-positive real part, a basic discretization requirement is
that all eigenvalues of $F^{-1}A$ are in the unit circle to maintain
stability.  One can easily prove that the impulse invariant and
backward Euler discretization always meet this requirement for any
$ T>0$, but the forward Euler method requires a sufficiently
small $ T$ to preserve stability; therefore, with a same
predicted time horizon $\tilde t$, the forward Euler method has the
largest predicted step length $N$ and hence makes the optimization
problem~$\RR$ harder to solve. On the other hand, the backward Euler
method might require a small enough $ T$ to guarantee the
invertibility of $F$, but numerically we have found this to be easily
satisfiable.  Therefore, we set aside the forward Euler method from
our considerations of discretization. On the other hand, the impulse
invariant method fails to preserve the sparsity of $\tilde A$, $\tilde
B_{1}$, and $\tilde B_{2}$, which are essential for the design of
distributed solvers of~$\RR$.  Instead, the matrices $F$, $A$, $B_{1}$
and $B_{2}$ resulting from the backward Euler discretization are all
sparse.  This justifies our choice, throughout the rest of the paper,
of the backward Euler method for discretization.

\subsection{Top-layer controller design}\label{subsection:top layer}

In this section we describe the top-layer controller.  By design,
cf.~\eqref{opti:nonlinear}, the bottom-layer controller makes a
trade-off between the control cost and the violation of frequency
safety, and hence does not strictly guarantee the latter.  This is
precisely the objective of the top-layer controller: ensuring
frequency safety at all times by slightly adjusting, if necessary, the
effect of the bottom-layer controller.  Formally, for every
$i\in\II{\omega}$, let $\bar\gamma_{i},\underline \gamma_{i}>0$, and
$\underline \omega_{i}^{\text{thr}}, \bar\omega_{i}^{\text{thr}} \in
\real$ with $\underline \omega_{i}< \underline \omega_{i}^{\text{thr}}
< 0 < \bar\omega_{i}^{\text{thr}}<\bar\omega_{i}$.  We use the design
from~\cite{YZ-JC:19-auto} for the top layer. For $i\in\II{\omega }$, $
\alpha_{TL,i}(x(t),p) $ takes the form
\begin{align}\label{eqn:second-layer-control}
  \begin{cases}
    \min\{0,\frac{\bar\gamma_{i}(\bar\omega_{i}-
      \omega_{i}(t))}{\omega_{i}(t)-\bar\omega_{i}^{\text{thr}}}+v_{i}(x(t),p)\}
    & \omega_{i}(t)>\bar\omega_{i}^{\text{thr}},
    \\
    0 & \underline\omega_{i}^{\text{thr}}\leqslant
    \omega_{i}(t)\leqslant \bar\omega_{i}^{\text{thr}},
    \\
    \max\{0,\frac{\underline\gamma_{i}(\underline\omega_{i}-\omega_{i}(t))
    }{\underline\omega_{i}^{\text{thr}}-\omega_{i}(t)}+v_{i}(x(t),p)\}
    & \omega_{i}(t)<\underline\omega_{i}^{\text{thr}},
  \end{cases}
\end{align}
where
\begin{align*}
  v_{i}(x(t),p) & \triangleq
  E_{i}\omega_{i}(t)+[D^{T}]_{i}\sin(Y_{b}\lambda(t))-p_{i}-\alpha_{BL,i}(t),
\end{align*}
and for $i \in \mathcal{I} \backslash\II{\omega}$, simply
$\alpha_{TL,i} \equiv0$.  The top-layer controller can be implemented
in a decentralized fashion: for each $\alpha_{TL,i}$ with
$i\in\II{\omega}$ on bus $i$, its implementation only requires the bus
frequency $\omega_{i}$, aggregated power flow
$[D^{T}]_{i}\sin(Y_{b}\lambda)$, power injection $p_{i}$, and $i$th
component of the bottom-layer signal $\alpha_{BL,i}$, all of which are
local to bus~$i$. Additionally, similarly to~\cite{YZ-JC:19-auto}, one
can show that $\alpha_{TL}$ is locally Lipschitz in $x$. For brevity,
we may use $\alpha_{TL,i}(x(t),p)$ (respectively, $v_{i}(x(t),p)$) and
$\alpha_{TL,i}(t)$ (respectively $v_{i}(t)$) interchangeably.

Each $\alpha_{TL,i}$, with $i\in\II{\omega}$, behaves as a passive and
myopic transient frequency regulator without prediction
capabilities. We offer the following observations about its
definition: first, $\alpha_{TL,i}$ only depends on local system
information and does not incorporate any global knowledge; second,
$\alpha_{TL,i}$ vanishes as long as the current frequency is within
$[\underline\omega_{i}^{\text{thr}},\bar\omega_{i}^{\text{thr}}]$, a
subset of the safe frequency interval, with no consideration for the
possibility of future large disturbances; third, $\alpha_{TL,i}$ can
be non-zero when the current frequency is out of
$[\underline\omega_{i}^{\text{thr}},\bar\omega_{i}^{\text{thr}}]$ and
hence close to the safe frequency boundaries. However, this could also
lead to over-reaction, especially when $\bar \gamma_{i}$ and
$\underline\gamma_{i}$ are small, as the disturbance may disappear
suddenly, in which case even without the top-layer controller, the
frequency would remain safe afterwards.  As pointed out above, the
top-layer controller only steps in if the input from the bottom-layer
controller is not sufficient to ensure frequency safety.

% \begin{figure*}[htb]
%   \begin{subequations}\label{eqn:second-layer-control}
%     \begin{alignat}{2}
%     &\forall i\in\II{\omega }, \text{ let }
%     \alpha_{TL,i}(x(t),p) &&=
%     \begin{cases}
%       \min\{0,\frac{\bar\gamma_{i}(\bar\omega_{i}-\omega_{i}(t))}{\omega_{i}(t)-\bar\omega_{i}^{\text{thr}}}+v_{i}(x(t),p)\} 
%       & \omega_{i}(t)>\bar\omega_{i}^{\text{thr}},
%       \\
%       0 & \underline\omega_{i}^{\text{thr}}\leqslant
%       \omega_{i}(t)\leqslant \bar\omega_{i}^{\text{thr}},
%       \\
%       \max\{0,\frac{\underline\gamma_{i}(\underline\omega_{i}-\omega_{i}(t))
%       }{\underline\omega_{i}^{\text{thr}}-\omega_{i}(t)}+v_{i}(x(t),p)\}
%       & \omega_{i}(t)<\underline\omega_{i}^{\text{thr}},
%     \end{cases}
%     \\
% &\hspace{2.4cm}v_{i}(x(t),p)&&\triangleq
%   E_{i}\omega_{i}(t)+[D^{T}]_{i}\sin(Y_{b}\lambda(t))-p_{i}-\alpha_{BL,i}(t),
%   \\
%  & \forall i\in\mathcal{I}\backslash\II{\omega},\text{ let }\hspace{0.6cm}\alpha_{TL,i}&&\equiv0.
%   \end{alignat}
% \end{subequations}
%   \hrulefill
%   \vspace{-.5cm}
% \end{figure*}

\subsection{Frequency safety and local asymptotic stability}
Having introduced the elements of both layers in
Figure~\ref{fig:block-diagram}, we are now ready to show that the
proposed centralized control strategy meets requirements~(i)-(iv) in
Section~\ref{section:ps}. We focus on the first two requirements,
since we have already established the Lipschitz continuity of each
individual component, and the MPC component by design takes care of
the economic cooperation among  the controlled buses.

For the open-loop system~\eqref{eqn:compact-form} with $\alpha\equiv
0_{n}$, under condition~\eqref{ineq:sufficient-eq}, the following
energy function~\cite{TLV-HDN-AM-JS-KT:18} is identified to prove
local asymptotic stability and estimate the region of attraction,
\begin{align}\label{eqn:energy-func}
  V(x)\triangleq\frac{1}{2}\sum_{i=1}^{\bar n}M_{i} \omega_{i}^{2} +
  \sum_{j=1}^{m}[Y_{b}]_{j,j}a(\lambda_{j}),
\end{align}
where $a(\lambda_{j}) \triangleq \cos\lambda_{j}^{\infty} -
\cos\lambda_{j} - \lambda_{j}\sin\lambda_{j}^{\infty} +
\lambda_{j}^{\infty}\sin\lambda_{j}^{\infty}$ for every
$j\in[1,m]_{\naturals}$. \change{For notational simplicity, here we
  assume that the first $\bar n$ nodes have strictly positive inertia,
  whereas the rest $n-\bar n$ nodes have zero inertia.}  Due to the
extra dynamics introduced by the low-pass filter, we here consider the
following energy function for the closed-loop system,
\begin{align}\label{eqn:engergy-func-cl}
  \bar V(x)= V(x)+\frac{1}{2}\sum_{i\leqslant \bar n, i\in\II{u}}M_{i}\alpha^{2}_{BL,i}.
\end{align}
Furthermore, define the level set 
\begin{align}\label{set:region}
  \mathcal{T}_{\rho}\triangleq\setdef{x}{\lambda\in \Upsilon_{cl},\
    \bar V(x)\leqslant \rho c},
\end{align}
where $\rho\geqslant 0$ and $c \triangleq
\min_{\tilde\lambda\in\partial\Upsilon}\bar
V(\tilde\lambda,\zeros_{n},\zeros_{n})$.
Now we are ready to prove that system~\eqref{eqn:compact-form} with
the proposed controller guarantees frequency safety and local
asymptotic stability jointly.

\begin{theorem}\longthmtitle{Bilayered control with
    stability and frequency guarantees}\label{thm:two-layer-control}
  Under condition~\eqref{ineq:sufficient-eq}, assume that
  $\epsilon_{i}\tau_{i}<1$ for every $i\in\II{u}$, and $x(0)\in
  \mathcal{T}$, then the system~\eqref{eqn:compact-form} with the
  bilayered controller defined
  by~\eqref{eqn:two-layer},~\eqref{eqn:uMPC},~\eqref{eqn:stability-filter},~\eqref{eqn:lp-filter},
  and \eqref{eqn:second-layer-control} satisfies
  \begin{enumerate}
    % \item\label{item:solution} The solution of the closed-loop system
    %   exists and is unique for every $t\geqslant 0$.
  \item\label{item:invariance} for any $i\in\II{\omega}$, if
    $\omega_{i}(0)\in[\underline\omega_{i},\bar\omega_{i}]$, then
    $\omega_{i}(t)\in[\underline\omega_{i},\bar\omega_{i}]$ for every
    $t\geqslant 0$;
  \item\label{item:attractivity} for any $i\in\II{\omega}$, if
    $\omega_{i}(0)\notin [\underline\omega_{i},\bar\omega_{i}]$, then
    there exists $t_{0}$ such that
    $\omega_{i}(t)\in[\underline\omega_{i},\bar\omega_{i}]$ for every
    $t\geqslant t_{0}$. Furthermore, $\omega_{i}(t)$ monotonically
    approaches $[\underline\omega_{i},\bar\omega_{i}]$ before entering~it;
  \item\label{item:convergence} if the initial state
    $(\lambda(0),\omega(0),\alpha_{BL}(0))$ is in $\mathcal{T}_{\rho}$
    for some $0<\rho<1$, then $(\lambda(t),\omega(t),\alpha_{BL}(t))$
    stays in $\mathcal{T}_{\rho}$ for all $t> 0$, and converges to
    $(\lambda_{\infty},\zeros_{n},\zeros_{n})$. Furthermore,
    $\alpha(t)$, $\alpha_{TL}(t)$, $\alpha_{BL}(t)$, \change{$\hat
      u_{MPC}(t)$, and $u_{MPC}(t)$ all converge to $\zeros_{n}$ as
      $t\rightarrow \infty$. }
  \end{enumerate}
\end{theorem}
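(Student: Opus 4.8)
The plan is to treat the three requirements in two blocks: the frequency-safety claims (parts~\ref{item:invariance}--\ref{item:attractivity}), which I would obtain from the top-layer controller alone, and the stability claim (part~\ref{item:convergence}), which I would obtain from a single Lyapunov/LaSalle argument based on $\bar V$. The organizing observation is that the two blocks decouple: the top layer enforces safety irrespective of the bottom layer, while the bottom layer and the filters are engineered so as not to destroy the energy decrease.

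For safety, I would exploit the control-barrier structure of $\alpha_{TL}$ in \eqref{eqn:second-layer-control}, adapting the comparison argument of~\cite{YZ-JC:19-auto}. Writing the $i$th closed-loop frequency dynamics from \eqref{eqn:compact-form-2} as $M_i\dot\omega_i = -v_i(x,p)+\alpha_{TL,i}$, the key point is that $v_i$ already absorbs $\alpha_{BL,i}$ together with all other local terms. Hence, whenever $\omega_i>\bar\omega_i^{\text{thr}}$, the definition of $\alpha_{TL,i}$ as a signal saturated from above by $0$ forces $M_i\dot\omega_i\leqslant \bar\gamma_i(\bar\omega_i-\omega_i)/(\omega_i-\bar\omega_i^{\text{thr}})$, and symmetrically at the lower threshold, \emph{no matter what $\alpha_{BL}$ does}. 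This differential inequality is a barrier certificate: via a comparison lemma it yields $\dot\omega_i\leqslant 0$ on $\{\omega_i=\bar\omega_i\}$, proving invariance of the upper bound, and when $\omega_i(0)>\bar\omega_i$ it gives strict monotone decrease while the trajectory remains outside, proving monotone (and, following the prior work, finite-time) entry into $[\underline\omega_i,\bar\omega_i]$; the lower bound is analogous. Since these estimates use only the sign and saturation structure of $\alpha_{TL}$, they carry over from~\cite{YZ-JC:19-auto}, the sole new ingredient being the $\alpha_{BL}$-independence just noted.

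For local asymptotic stability I would use the closed-loop energy function $\bar V$ in \eqref{eqn:engergy-func-cl} and establish $\dot{\bar V}\leqslant 0$. Differentiating $V$ along \eqref{eqn:compact-form} and using the equilibrium identity $p=D^{T}Y_b\sin\lambda^{\infty}$, the angle--frequency coupling terms cancel exactly as in the open-loop analysis, leaving $\dot V=-\omega^{T}E\omega+\sum_i\omega_i(\alpha_{TL,i}+\alpha_{BL,i})$. Adding the derivative of the filter-energy term and substituting the low-pass dynamics \eqref{eqn:lp-filter}, the frequency--filter cross terms $\pm\omega_i\alpha_{BL,i}$ cancel by the matched weighting in \eqref{eqn:engergy-func-cl}, so that the bottom-layer contribution reduces to $\sum_i(\epsilon_i-1/\tau_i)\alpha_{BL,i}^2$ once the stability condition \eqref{ineq:stability-condition}, namely $\alpha_{BL,i}\hat u_{MPC,i}\leqslant\epsilon_i\alpha_{BL,i}^2$, is invoked. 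This term is nonpositive precisely because of the standing hypothesis $\epsilon_i\tau_i<1$; combined with $E\succ 0$ and the passivity $\omega_i\alpha_{TL,i}\leqslant 0$ (immediate from the sign pattern of \eqref{eqn:second-layer-control}), we get $\dot{\bar V}\leqslant 0$. Forward invariance of $\mathcal{T}_{\rho}$ in \eqref{set:region} for $0<\rho<1$ then follows from the standard region-of-attraction estimate: since $\bar V\geqslant c$ whenever $\lambda\in\partial\Upsilon$ while $\bar V\leqslant\rho c<c$ along the trajectory, $\lambda$ can never reach $\partial\Upsilon$, so $\mathcal{T}_{\rho}$ is invariant and $\lambda$ remains in $\Upsilon$.

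Convergence would then follow from LaSalle's invariance principle on the compact invariant set $\mathcal{T}_{\rho}$: the set $\{\dot{\bar V}=0\}$ forces $\omega=\zeros_n$ (from $E\succ 0$) and $\alpha_{BL}=\zeros_n$ (from the strict inequality $\epsilon_i\tau_i<1$); substituting $\omega\equiv\zeros_n$, $\alpha_{BL}\equiv\zeros_n$, hence $\alpha_{TL}\equiv\zeros_n$, into \eqref{eqn:compact-form-2} yields $D^{T}Y_b\sin\lambda=p$, whose unique solution in $\Upsilon_{cl}$ is $\lambda^{\infty}$, so the largest invariant set is $(\lambda^{\infty},\zeros_n,\zeros_n)$. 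The remaining signal limits are immediate: $\alpha_{TL}\to\zeros_n$ because $\omega$ eventually enters $[\underline\omega_i^{\text{thr}},\bar\omega_i^{\text{thr}}]$ where $\alpha_{TL}$ vanishes, while $\hat u_{MPC}\to\zeros_n$ and $u_{MPC}\to\zeros_n$ since $|\hat u_{MPC,i}|\leqslant\epsilon_i|\alpha_{BL,i}|$ and, by \eqref{opti:sensitivity}, $|\hat u_i^{*}|\leqslant\epsilon_i|\alpha_{BL,i}(t^{w})|$, both of which vanish with $\alpha_{BL}$. The main obstacle I anticipate is the differential-algebraic nature of the model: the zero-inertia buses contribute algebraic constraints rather than differential equations, so both the computation of $\dot{\bar V}$ and the invariant-set characterization in LaSalle must be carried out on the reduced constrained dynamics, in the spirit of~\cite{TLV-HDN-AM-JS-KT:18}; checking that the cross-term cancellations and the equilibrium uniqueness survive this reduction, together with making the finite-time attractivity estimate fully rigorous, is the delicate part.
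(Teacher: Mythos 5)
Your proposal is correct and follows essentially the same route as the paper: safety for parts (i)--(ii) from the saturation/sign structure of $\alpha_{TL}$ alone (including the zero-inertia algebraic case), and part (iii) from the energy function $\bar V$, the dissipation inequality $\dot{\bar V}\leqslant-\omega^{T}E\omega-\sum_{i\in\II{u}}(1/\tau_{i}-\epsilon_{i})\alpha_{BL,i}^{2}$ obtained via $\omega_{i}\alpha_{TL,i}\leqslant 0$ and the stability condition~\eqref{ineq:stability-condition}, followed by invariance of $\mathcal{T}_{\rho}$, LaSalle, and the chain of signal limits. The only (immaterial) divergence is that the paper deduces the finite-time entry in part (ii) from parts (i) and (iii) rather than from the barrier differential inequality directly, while your barrier-based argument is what actually supplies the stated monotone approach.
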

\begin{proof}
  It is easy to see that statement~\ref{item:invariance} is equivalent
  to asking that, for any $i\in\II{\omega}$ at any $t\geqslant 0$,
  \begin{subequations}\label{eqn:ith-dyanmics}
    \begin{align}
      \dot\omega_{i}(t)\leqslant 0\text{ if
      }\omega_{i}(t)=\bar\omega_{i},\label{eqn:ith-dyanmics-a}
      \\
      \dot\omega_{i}(t)\geqslant 0\text{ if
      }\omega_{i}(t)=\underline\omega_{i}.\label{eqn:ith-dyanmics-b}
   \end{align}
 \end{subequations}
 For simplicity, we only prove~\eqref{eqn:ith-dyanmics-a},
 and~\eqref{eqn:ith-dyanmics-b} follows similarly. Note that
 by~\eqref{eqn:compact-form-2},~\eqref{eqn:two-layer},
 and~\eqref{eqn:second-layer-control}, one has
 \begin{align}\label{eqn:Mi-dynamics}
   M_{i}\dot\omega_{i}(t) &=
   -E_{i}\omega_{i}(t)-[D^{T}]_{i}\sin(Y_{b}\lambda(t))+p_{i}+\alpha_{i}(t)\notag
   \\
   & =
   -E_{i}\omega_{i}(t)-[D^{T}]_{i}\sin(Y_{b}\lambda(t))+p_{i}+\alpha_{BL,i}(t)
   + \alpha_{TL,i}(t)\notag
   \\
   &=-v_{i}(t)+\alpha_{TL,i}(t).
 \end{align}
 $\omega_{i}(t)=\bar\omega_{i}$, then
 $-v_{i}(t)+\alpha_{TL,i}(t)=-v_{i}(t)+\min\{0,v_{i}(t)\}\leqslant0$;
 hence condition~\eqref{eqn:ith-dyanmics-a} holds for every
 $i\in\II{\omega}$ with $M_{i}>0$.

 \change{To establish the result for the case when $M_{i}=0$, we
   reason as follows. Starting from the last line
   of~\eqref{eqn:Mi-dynamics}, the following holds when
   $\omega_{i}(t)>\bar\omega_{i}^{\text{thr}}$,
   \begin{align*}
     E_{i}\omega_{i}(t)&=E_{i}\omega_{i}(t)-v_{i}(t)+\alpha_{TL,i}(t)
     \\
     &=\min\{E_{i}\omega_{i}(t)-v_{i}(t),E_{i}\omega_{i}(t) +
     \frac{\bar\gamma_{i}(\bar\omega_{i}-
       \omega_{i}(t))}{\omega_{i}(t)-\bar\omega_{i}^{\text{thr}}}\}
     \\
     &\leqslant
     E_{i}\omega_{i}(t)+\frac{\bar\gamma_{i}(\bar\omega_{i}-
       \omega_{i}(t))}{\omega_{i}(t)-\bar\omega_{i}^{\text{thr}}},
   \end{align*}
   and hence $\frac{\bar\gamma_{i}(\bar\omega_{i}-
     \omega_{i}(t))}{\omega_{i}(t)-\bar\omega_{i}^{\text{thr}}}\geqslant
   0$, implying that
   $\omega_{i}(t)\leqslant\bar\omega_{i}$. Similarly, one can prove
   that $\omega_{i}(t)\geqslant \underline\omega_{i}$ for every
   $t\geqslant 0$.  }
 
 Note that~\ref{item:attractivity} follows from~\ref{item:invariance}
 and~\ref{item:convergence}. This is because, for any
 $i\in\mathcal{I}$, if $\omega_{i}$ converges to
 $0\in(\underline\omega_{i},\bar\omega_{i})$, there must exist a
 finite time $t_{0}$ such that
 $\omega_{i}(t_{0})\in[\underline\omega_{i},\bar\omega_{i}]$, which,
 by~\ref{item:invariance}, implies that
 $\omega_{i}(t)\in[\underline\omega_{i},\bar\omega_{i}]$ at any
 $t\geqslant t_{0}$.  We then prove statement~\ref{item:convergence},
 To show the invariance of $\mathcal{T}_{\rho}$, first, it is easy to
 see that $c>0$ by noticing that
 $\lambda_{\infty}\nin\partial\Upsilon$, and
 $V(\tilde\lambda,\zeros_{n},\zeros_{n})$ is non-negative, equaling 0
 if and only if $\tilde \lambda=\lambda_{\infty}$. Next, we show that
 $\dot {\bar V}\leqslant0$ for every $x\in\mathcal{T}_{\rho}$. We
 obtain after some computations that
 \begin{align*}
   \dot{\bar V}
   =&-\omega^{T}(t)E\omega(t)+\sum_{i\leqslant\bar n , i\in\II{\omega}}\omega_{i}(t)\alpha_{TL,i}(t)
   \\
   &-\sum_{,i\in\II{u}}\left(\frac{1}{\tau_{i}}\alpha_{BL,i}^{2}(t)-\alpha_{BL,i}(t)\hat
     u_{MPC,i}(t)\right).
 \end{align*}
 Note that by the definition of $\alpha_{TL}$
 in~\eqref{eqn:second-layer-control},
 $\omega_{i}(t)\alpha_{TL,i}(t)\leqslant0$ holds for every
 $i\in\II{\omega}$ at every $t\geqslant 0$, in that
 $\alpha_{TL,i}(t)=0$ whenever
 $\underline\omega_{i}^{\text{thr}}\leqslant \omega_{i}(t)\leqslant
 \bar\omega_{i}^{\text{thr}}$, and $\alpha_{TL,i}(t)\geqslant 0$
 (reps. $\leqslant 0$) if $\omega_{i}(t)\geqslant
 \bar\omega_{i}^{\text{thr}}>0$ (respectively, $\omega_{i}(t)\leqslant
 \underline\omega_{i}^{\text{thr}}<0$). Therefore, together with
 condition~\eqref{ineq:stability-condition} in
 Lemma~\ref{lemma:Lipschitz-stability}, we have
 \begin{align}\label{eqn:d-bar-V}
   \dot {\bar V}\leqslant-\omega^{T}(t)E\omega(t)-\sum_{ i\in\II{u}}
   (\frac{1}{\tau_{i}}-\epsilon_{i})\alpha_{BL,i}^{2}(t) \leqslant0,
 \end{align}
 and hence $\bar V(x(t))\leqslant\rho c$ for all $t\geqslant 0$.
 Finally, by the definition of $c$, one can check that $\lambda$ stays
 in $\Upsilon_{\text{cl}}$ all the time, otherwise there exists some
 $t\geqslant 0$ such that $\lambda(t)\in\partial\Upsilon$, resulting
 in $\bar V(x(t))\geqslant c>\rho c$. Therefore, the set
 $\mathcal{T}_{\rho}$ is invariant.
 
 The convergence of state follows by LaSalle Invariance
 Principle~\cite[Theorem~4.4]{HKK:02}. Specifically, $\omega(t)$ and
 $\alpha_{BL}(t)$ converge to $\zeros_{n}$ (notice that
 $\alpha_{BL,i}\equiv0$ for each
 $i\in\mathcal{I}\backslash\II{u}$). Next we show that
 $\lim_{t\rightarrow\infty}\alpha_{TL,i}(t)=0$ for every
 $i\in\II{\omega}$, which implies that
 $\lim_{t\rightarrow\infty}\alpha_{TL}(t)=\zeros_{n}$ as
 $\alpha_{TL,i}\equiv 0$ for each
 $i\in\mathcal{I}\backslash\II{\omega}$. This simply follows
 from~\eqref{eqn:second-layer-control} since $\alpha_{TL,i}(t)=0$
 whenever
 $\underline\omega_{i}^{\text{thr}}\leqslant\omega_{i}(t)\leqslant
 \bar\omega_{i}^{\text{thr}}$, where $0\in(
 \underline\omega_{i}^{\text{thr}}, \bar\omega_{i}^{\text{thr}})$, and
 we have shown that
 $\lim_{t\rightarrow\infty}\omega(t)=\zeros_{n}$. The convergence of
 $\alpha(t)$ follows from its
 definition~\eqref{eqn:two-layer}. \change{
   Since~\eqref{eqn:stability-filter} implies that $|\hat
   u_{MPC,i}(t)|\leqslant\epsilon_{i} |\alpha_{BL,i}(t)|$ for every
   $i\in\mathcal{I}$ at every $t\geqslant 0$, one has
   $\lim_{t\rightarrow\infty}\hat u_{MPC}(t)=\zeros_{n}$. Finally,
   since $\hat u^{*}(\mathcal{G},\II{u},\II{\omega}, \hat
   P^{fcst}_{t^{w}},x(t^{w}))$ is the optimal solution of the
   optimization problem~\eqref{opti:nonlinear}, it must satisfy
   constraint~\eqref{opti:sensitivity}, and since
   $\lim_{w\rightarrow\infty}\alpha_{BL}(t^{w})=\lim_{t\rightarrow\infty}\alpha_{BL}(t^{t})=\zeros_{n}$,
   one has $\lim_{w\rightarrow\infty}\hat
   u^{*}(\mathcal{G},\II{u},\II{\omega}, \hat
   P^{fcst}_{t^{w}},x(t^{w}))=\zeros_{n}$. Finally, the convergence of
   $u_{MPC}(t)$ follows from its definition~\eqref{eqn:uMPC}.
% \\
%  to prove the convergence of $u_{MPC}(t)$, by~\eqref{eqn:uMPC}, one has 
% \begin{align}\label{eqn:uMPC-convergence}
%  \lim_{t\rightarrow\infty}u_{MPC}(t) &=\lim_{w\rightarrow\infty} \hat
%   u^{*}(\mathcal{G},\II{u},\II{\omega},  \hat P^{fcst}_{t^{w}},x(t^{w}))\notag
%   \\
%   &=\hat u^{*}(\mathcal{G},\II{u},\II{\omega}, \lim_{w\rightarrow\infty} \hat P^{fcst}_{t^{w}},\lim_{w\rightarrow\infty}x(t^{w}))\notag
%   \\
% &=\hat u^{*}(\mathcal{G},\II{u},\II{\omega}, [p,p,\cdots,p],\zeros_{m+n})\notag
% \\
% &=\zeros_{n},
% \end{align}
% where in the second line we put the limit inside $\hat u^{*}$ since by Proposition~\ref{prop:pwl-optimal},  $\hat u^{*}$ is  piece-wise continuous with respect to  $(\hat P^{fcst}_{t^{w}},x(t^{w}))$. The last line holds as one can easily check that the objective function in the optimization problem~\eqref{opti:nonlinear} is 0 when $(\hat P^{fcst}_{t^{w}},x(t^{w}))=([p,p,\cdots,p],\zeros_{m+n})$ and $\hat u=\zeros_{n}$
 }
\end{proof}

\change{% It is worth mentioning again the independence of stability on
  % the MPC component,
  Since the MPC component cannot jeopardize system closed-loop
  asymptotic stability, cf.~Remark~\ref{rmk:independent-design}, as
  one can see in the proof of
  Theorem~\ref{thm:two-layer-control}\ref{item:convergence}, the
  convergence of $\lambda(t)$, $\omega(t)$, $\alpha_{BL}(t)$,
  $\alpha(t)$, $\alpha_{TL}(t)$, $\alpha_{BL}(t)$, and $\hat
  u_{MPC}(t)$ does not require any a priori assumption on the output
  $u_{MPC}(t)$ of the MPC component. In the simulations, we show that
  even if we perturb $u_{MPC}(t)$ by intentionally shifting its output
  from its true value by a constant, the convergence of the remaining
  signals still holds. On the other hand, the convergence of
  $u_{MPC}(t)$ depends on the convergence of $\alpha_{BL}(t)$.  In
  addition, since both $\hat u_{MPC}(t)$ and $u_{MPC}(t)$ converge to
  $\zeros_{n}$~(and so does their difference), we also conclude that
  the stability filter ultimately lets the MPC component output signal
  pass, i.e., the stability filter preserves the optimality of
  the MPC component in the long run.}

%
% \marginJC{I don't understand the last sentence. Do we really wanna say
%   that? Maybe it is that the system is already getting to where it's
%   going, so MPC prescribes do nothing, and the stability filter is not
%   really bypassing MPC. (what does bypass mean anyway?). I would
%   remove it. }
%   \marginy{The last sentence corresponds to [R3:3]. Here I meant that the stability filter preserves whatever the MPC component output as $t\rightarrow\infty$. This implies that the stability filter doesn't just ignore the MPC's output for stability reason.}
%

\change{One can also verify the independence between stability and the
  MPC component by noting that all stability results
  of~Theorem~\ref{thm:two-layer-control} do not rely on any assumption
  on the forecasted power injection. Although the MPC component is a
  full-state feedback, due to this independence,
  Theorem~\ref{thm:two-layer-control} still holds if the measured
  state is delayed or inaccurate. This means that one could instead
  employ an output feedback controller by designing a state observer
  and feeding the estimated state into the MPC component without
  endangering stability. The minimal set of measured information
  required to realize the controller are: $\omega_{i}$,
  $\alpha_{BL,i}$, $[D^{T}]_{i}\sin(Y_{b}\lambda)$, and $p_{i}$ for
  every $i\in\II{\omega}$. This information is used in the stability
  and low-pass filters, and the top-layer controller. Of course,
  inaccurate state and forecasted power injection lead to non-optimal
  control commands in the MPC component and higher~cost.

  \begin{remark}\longthmtitle{Frequency safety with time-varying power
      injection}\label{rmk:TV-power}
    {\rm % All results in Theorem~\ref{thm:two-layer-control} are
      % established for the dynamics~\eqref{eqn:compact-form} with
      % constant power injection $p$.
      If the power injection is time-varying,
      % denote $p$ as its
      % value at time $t\geqslant 0$ and correspondingly replace $p$
      % in~\eqref{eqn:second-layer-control} by $p(t)$.
      one can see from the proof of
      Theorem~\ref{thm:two-layer-control} that~\ref{item:invariance}
      still holds and~\ref{item:attractivity} partially holds, in the
      sense that the frequency would approach the safe interval but
      may not enter it within a finite time.  \oprocend }
\end{remark}
}

\begin{remark}\longthmtitle{Independence of controller on equilibrium
    point}\label{rmk:independ-eq}
  {\rm It should be pointed out that in
    Theorem~\ref{thm:two-layer-control}, the proposed controller is
    able to locally stabilize the system without a priori knowledge on
    the steady-state voltage angle~$\lambda_{\infty}$. Specifically,
    both $\alpha_{BL}$ and $\alpha_{TL}$ are not functions
    of~$\lambda_{\infty}$.
    % To this aspect, there is no need to know or to compute
    % $\lambda_{\infty}$ for controller design purpose.
  } \oprocend
\end{remark}

% \textcolor{red}{
% \begin{remark}\longthmtitle{Unilateral convergence dependence of MPC component output}\label{rmk:unilaterl-dependence}
% {\rm
% Note that in the proof of Theorem~\ref{thm:two-layer-control}~\ref{item:convergence}, the convergence of $\lambda(t)$, $\omega(t)$, $\alpha_{BL}(t)$, $\alpha(t)$, $\alpha_{TL}(t)$, $\alpha_{BL}(t)$, and $\hat u_{MPC}(t)$ do not reply any a-priori assumption of MPC component output $u_{MPC}(t)$. This consists with our statement in Remark~\ref{rmk:independent-design} that the MPC component cannot affect closed-loop system stability. In the simulation we will show that even if we pollute $u_{MPC}(t)$ by intentionally shifting its from its true value by a constant, the convergence of the remaining signals still holds. On the other hand,  
% the convergence of $u_{MPC}(t)$ depends on the convergence of $\alpha_{BL}(t)$.  In addition, since both $\hat u_{MPC}(t)$ and $u_{MPC}(t)$ converges to $\zeros_{n}$~(so is their difference), we also conclude that the stability filter ultimately bypasses whatever the MPC component output signal is in the long run. 
% }
% \oprocend
% \end{remark}
%}
\begin{remark}\longthmtitle{Control framework without bottom
    layer}\label{rmk:no-bottom-layer}
  {\rm In our previous work~\cite{YZ-JC:19-auto}, we have shown that
    the top-layer controller by itself makes the closed-loop system
    meet all requirements except for the economic cooperation. Such a
    lack of cooperation can be observed in two aspects. First, since
    $\alpha_{TL}$ is only defined for nodes in $\II{\omega}$, those in
    $\II{u}\backslash\II{\omega}$ do not get involved in controlling
    frequency transients. Second, the top-layer control is a
    non-optimization-based state feedback, where each $\alpha_{TL,i}$
    with $i\in\II{\omega}$ is merely in charge of controlling the
    transient frequency for its own node~$i$.
    \oprocend  }  
\end{remark}

\section{Controller
  decentralization}\label{section:distributed-control}

The centralized bilayered controller meets the
requirements~(i)-(iv) stated in Section~\ref{section:ps}. In this
section, we focus on the requirement (v) on the distributed
implementation of the controller.  While introducing each controller
component in Figure~\ref{fig:block-diagram}, our discussion has shown
that only the MPC component requires access to global system
information, whereas all other components can be implemented in a
distributed fashion.  In this section, we show that by having each
node and edge communicate within its 2-hop neighbors, one can solve the
optimization problem $\RR$ in~\eqref{opti:nonlinear} online and hence
exactly recover the MPC component $\hat u^{*}$
in~\eqref{eqn:uMPC}. The key idea is to properly assign the decision
variables in the optimization problem to each node so that the cost
function can be represented as sum of local costs and the constraints
can be written locally. Once this is in place, we report to
saddle-point dynamics to find the solution of~$\RR$ in a distributed
way.

\subsection{Strong convexification of the objective function}

We start here by transforming the optimization problem~$\RR$ into an
equivalent form whose objective function is strongly convex in all its
arguments. Such property is useful later when characterizing the
convergence properties of distributed algorithm to the optimizer.
Formally, let
\begin{align}\label{eqn:strong-convex-cost}
  g^{\text{aug}}(\hat X,\hat u,S)&\triangleq
  \sum_{k=1}^{N-1}\|F\hat x(k+1)-A\hat x(k) -B_{1}\hat p^{fcst}(k)-B_{2}\hat
  u\|_{2}^{2} \notag
  \\
  & \quad +\sum_{k=1}^{N}\left( \sum_{i\in\mathcal{I}}c_{i}\hat
    \alpha_{BL,i}^{2}(k)+\sum_{i\in\II{\omega}}d_{i}s_{i}^{2}(k)
  \right)
  \\
  & \quad + \|\hat x(0)-x(t^{w})\|_{2}^{2}. \notag
\end{align}
We denote by $\RR^{\text{aug}}$ the optimization problem with
objective function $g^{\text{aug}}$ and constraints given
by~\eqref{opti:nonlinear-1}-\eqref{opti:sensitivity}.  Letting
$Y\triangleq(\hat X,\hat
u,S)\in\real^{(m+2n+|\II{\omega}|)N+n}$, we can re-write
$\RR^{\text{aug}}$ into the following  compact form
\begin{subequations}\label{opti:nonlinear-qp}
  \begin{alignat}{2}
    & & & \min_{Y}\frac{1}{2}Y^{T}HY+f^{T}Y+a\notag
    \\
    &\text{s.t.}&\quad & R_{1} Y\leqslant
    r_{1}\label{opti:nonlinear-1-qp},
    \\
    &&& R_{2}Y= r_{2}\label{opti:nonlinear-2-qp},
  \end{alignat}
\end{subequations}
for suitable
\begin{align*}
  H & \in\real^{((m+2n+|\II{\omega}|)N+n)\times
    ((m+2n+|\II{\omega}|)N+n)},
  \\
  f & \in\real^{(m+2n+|\II{\omega}|)N+n} , \qquad a \in\real ,
  \\
R_{1}  & \in\real^{(2|\II{\omega}|N+2|\II{u}|)\times
    ((m+2n+|\II{\omega}|)N+n)},
  \\
  R_{2} & \in \real^{((m+2n)N+n-|\II{u}|)\times
    ((m+2n+|\II{\omega}|)N+n)},
  \\
  r_{1} & \in\real^{2|\II{\omega}|N+2|\II{u}|} , \qquad r_{2}
  \in\real^{(m+2n)N+n-|\II{u}|} .
\end{align*}
%
%\marginJC{Which ones are these matrices and vectors?}
%

The next result shows the equivalence between $\RR$ and
$\RR^{\text{aug}}$.

\begin{lemma}\longthmtitle{Equivalent transformation to strong
    convexity}\label{lemma:transform-strong-convex}
  The optimization problem $\RR$ and~$\RR^{\text{aug}}$ posses exactly
  the same optimal solution. Furthermore, if $F$ is invertible, then
  $g^{\text{aug}}$ is strongly convex in $(\hat X,\hat
  u,S)$.
\end{lemma}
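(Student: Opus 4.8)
The plan is to treat the two assertions separately. For the equivalence of $\RR$ and $\RR^{\text{aug}}$, I would first note that both programs carry the \emph{same} constraints~\eqref{opti:nonlinear-1}--\eqref{opti:sensitivity}, hence the same feasible set, so it suffices to show that $g^{\text{aug}}$ and the objective $g$ of~\eqref{opti:nonlinear} agree at every feasible point. Comparing~\eqref{eqn:strong-convex-cost} with $g$, the discrepancy is exactly three pieces: the dynamics-residual terms $\norm{F\hat x(k+1)-A\hat x(k)-B_{1}\hat p^{fcst}(k)-B_{2}\hat u}_{2}^{2}$, the initial-condition term, and the fact that the running cost now sums $c_{i}\hat\alpha_{BL,i}^{2}(k)$ over all $i\in\mathcal{I}$ rather than over $\II{u}$ only. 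On any feasible trajectory the first piece vanishes because~\eqref{opti:nonlinear-1} holds with equality, the second vanishes because~\eqref{opti:nonlinear-2} enforces $\hat x(1)=x(t^{w})$, and the extra summands in the third vanish because the discretized filter dynamics~\eqref{eqn:approxiate-dynamics} together with the initial condition force $\hat\alpha_{BL,i}(k)=0$ for all $k$ and all $i\in\mathcal{I}\setminus\II{u}$. Thus $g^{\text{aug}}\equiv g$ on the common feasible set, so the two programs have identical minimizers, which are moreover unique by Proposition~\ref{prop:pwl-optimal} once $F$ is invertible.

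For strong convexity, since $g^{\text{aug}}$ is the quadratic $\tfrac12 Y^{T}HY+f^{T}Y+a$, it suffices to prove $H\succ0$. As $g^{\text{aug}}$ is a sum of squares, $H$ is automatically positive semidefinite, so I only need to show that its associated homogeneous quadratic form $Q$ vanishes only at the origin. Imposing $Q(\hat X,\hat u,S)=0$ forces each square to vanish separately: the pure-square terms, with $c_{i}>0$ and $d_{i}>0$, give $\hat\alpha_{BL}(k)=\zeros$ for every $k$ and $S=\zeros$; the homogenized initial-condition term gives $\hat x(1)=\zeros$, in particular $\hat\lambda(1)=\zeros$ and $\hat\omega(1)=\zeros$; and the residual terms give the homogeneous recursion $F\hat x(k+1)=A\hat x(k)+B_{2}\hat u$ for $k\in[1,N-1]_{\naturals}$.

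It then remains to propagate these identities to the whole trajectory and to $\hat u$, which is where invertibility of $F$ and the preserved locality of the backward-Euler discretization~\eqref{sube:discretization-linear} enter. I would exploit the block structure inherited from~\eqref{eqn:approxiate-dynamics}: the control $\hat u$ enters only the filter states $\hat\alpha_{BL}$ (so the $(\hat\lambda,\hat\omega)$-block of $B_{2}$ is zero), while the $(\hat\lambda,\hat\omega)$-block of the dynamics is driven only by the states. Reading off the $(\hat\lambda,\hat\omega)$-rows of the recursion and using $\hat\alpha_{BL}\equiv\zeros$ yields a closed recursion $F_{11}\,(\hat\lambda,\hat\omega)(k+1)=A_{11}\,(\hat\lambda,\hat\omega)(k)$ with zero input; starting from $\hat x(1)=\zeros$ and inverting the $(\hat\lambda,\hat\omega)$-block of $F$, induction gives $(\hat\lambda(k),\hat\omega(k))=\zeros$ for all $k$. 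Substituting back into the $\hat\alpha_{BL}$-rows, where now both sides are zero except for $B_{2}\hat u$, and using that the filter input matrix is injective on the $\II{u}$-coordinates (it equals $T$ times the identity there), forces $\hat u=\zeros$ on $\II{u}$; the remaining coordinates are fixed to zero by~\eqref{opti:control-location}. Hence $Q$ vanishes only at the origin and $H\succ0$.

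The main obstacle is precisely this last propagation. Because $F=\tilde G-T\tilde A$ is not block diagonal, one must avoid the naive forward recursion $\hat x(k+1)=F^{-1}(\cdots)$, whose $F^{-1}$ couples the filter states back into $(\hat\lambda,\hat\omega)$, and instead use the already-established identity $\hat\alpha_{BL}\equiv\zeros$ to decouple the blocks \emph{before} inverting. Care is also needed to justify invertibility of the $(\hat\lambda,\hat\omega)$-block of $F$, and to handle the zero-inertia buses (where $M_{i}=0$ renders $\tilde G$ singular in those rows, so the model is genuinely a DAE); one should likewise read the claim as strong convexity in the genuinely free variables, i.e.\ with $\hat u$ ranging over $\mathbb{A}$.
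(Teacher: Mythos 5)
Your proposal is correct, and while its first half (equivalence of $\RR$ and $\RR^{\text{aug}}$) coincides with the paper's one-line observation that $g^{\text{aug}}$ merely augments $g$ with squared residuals of constraints that are active on the feasible set (your extra remark that the summation over $i\in\mathcal{I}$ rather than $\II{u}$ is harmless because the dynamics force $\hat\alpha_{BL,i}(k)=0$ for $i\notin\II{u}$ is a welcome detail), your strong-convexity argument takes a genuinely different route. The paper exhibits $H$ in block form and infers positive definiteness from the positive definiteness of the diagonal blocks $F^{T}F+J^{T}J$, $F^{T}F+A^{T}A+J^{T}J$, $A^{T}A+J^{T}J$, $B_{2}^{T}B_{2}$, and $I$, using invertibility of $F$ and full column rank of $J$ and $B_{2}$. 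You instead note $H\succeq 0$ from the sum-of-squares structure and show the associated quadratic form has trivial kernel by propagating zeros through the discretized dynamics: first killing $\hat\alpha_{BL}$, $S$, and $\hat x(1)$ via the pure-square terms, then decoupling the $(\hat\lambda,\hat\omega)$-block before inverting, and finally recovering $\hat u=\zeros$ from the filter rows. Your route is longer but verifies positive definiteness of the whole matrix directly rather than of its diagonal blocks, and it pinpoints exactly where invertibility of $F$ enters. The two debts you flag but do not discharge are real yet benign: the $(\hat\lambda,\hat\omega)$-block of $F=\tilde G-T\tilde A$ is invertible because its Schur complement with respect to the identity block on the $\hat\lambda$-coordinates equals $M+TE+T^{2}D^{T}Y_{b}D\succ 0$ (using $E\succ0$ and $M\succeq 0$); and the coordinates of $\hat u$ outside $\II{u}$ must either be dropped from the decision vector or pinned by~\eqref{opti:control-location} for the claim to hold literally --- the same caveat the paper implicitly relies on when asserting that $B_{2}$ has full column rank.
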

\begin{proof}
  The equivalence between $\RR$ and~$\RR^{\text{aug}}$ follows by
  noting that $g^{\text{aug}}$ corresponds to augmenting $g$ with
  equality constraints. For notational simplicity, we assume that
  $c_{i}=1$ for all $i\in\mathcal{I}$ and $d_{i}=1$ for all
  $i\in\II\omega$ (the proof holds for general positive values with
  minor modifications).  To show strong convexity, one can write $H$
  as an upper-triangular block matrix, whose diagonal matrices are
  $F^{T}F+J^{T}J$, $F^{T}F+A^{T}A+J^{T}J$, $A^{T}A+J^{T}J$,
  $B_{2}^{T}B_{2}$, and $I_{|\II{\omega}|N}$, where
  $J\in\real^{(m+2n)\times n}$ is a matrix mapping the whole state
  $\hat x$ to the partial state $\hat \alpha_{BL}$, i.e., $\hat
  \alpha_{BL}=J\hat x$. It is easy to see that both $J$ and $B_{2}$
  are full-column-rank matrices, which, together with the
  invertibility assumption on $F$, implies that all five matrices are
  positive definite.  Hence, all eigenvalues of $H$ are real and
  strictly positive, leading to strong convexity of~$g^{aug}$, as
  claimed.
\end{proof}

\subsection{Separable objective with locally expressible constraints}\label{subsection:locality}

Next, we explain how the problem data defining the
optimization~$\RR^{\text{aug}}$ has a structure that makes it amenable
to distributed algorithmic solutions. We start by assigning the
decision variables $Y=(\hat X,\hat u,S)$ in
$\RR^{\text{aug}}$ to the nodes and edges in the network.  We
partition the states into voltage angle difference, frequency, and
low-pass filter state, i.e., $\hat x=(\hat \lambda,\hat \omega,\hat
\alpha_{BL})$. For every $k\in[0,N]_{\naturals}$,
$i\in[1,n]_{\naturals}$, and $j\in[1,m]_{\naturals}$, we assign
$\omega_{i}(k)$, $\hat u_{i}$, and $\hat \alpha_{BL,i}(k)$ to the
$i$th node, and $\hat\lambda_{j}(k)$ to the $j$th edge. For every
$i\in\II{\omega}$, we assign $s_{i}(k)$ to the $i$th node. In the
subsequent discussion, we say a constraint or function is \emph{local}
for the power network $\GG$ if its decision variables are all from
either of the following two cases: a) a node $i\in\mathcal{I}$ and its
neighboring edges $(i,j)\in\mathcal{E}$, and b) an edge
$(i,j)\in\mathcal{E}$ and its neighboring nodes $i$ and $j$.  We claim
that
\begin{enumerate}
\item\label{item:sparse-constraints} if $F$, $A$, $B_{1}$ and $B_{2}$ are
  determined by~\eqref{sube:discretization-linear}, then every
  constraint in~\eqref{opti:nonlinear} is local.
\item\label{item:sparse-cost} the objective function $g^{\text{aug}}$
  can be written as a sum of local objective functions.
\end{enumerate}

To see~\ref{item:sparse-constraints}, note
that~\eqref{opti:control-location}-\eqref{opti:sensitivity} are a
collection of constraints, each depending only on variables owned by a
single node. Constraint~\eqref{opti:nonlinear-1} is also local by
noticing the following two points. First, the dynamics of each state
in~\eqref{eqn:approxiate-dynamics} is uniquely determined by the
states of its neighbors.  Second, we have shown in
Section~\ref{section:discretization} that the backward Euler
discretization~\eqref{sube:discretization-linear} preserves locality.
% Specifically, each entry from the first $m$ entries of
% $F\hat x(k+1)-A\hat x(k) -B(\hat p^{fcst}(k)+\hat u)$
% involves some $\hat\lambda_{ij}$, and its neighboring buses
% $\hat\omega_{i}$, and $\hat \omega_{j}$, i.e., variables belonging to
% an edge and to its neighboring buses; each entry from the middle $n$
% entries involves some $\hat\omega_{i}$, $\hat\alpha_{BL,i}$, and
% $\hat\lambda_{ij}$ with $j\in\mathcal{N}(i)$, i.e., variables
% belonging to a bus and to its neighboring edges; each entry in the last
% $n$ entries involves some $\hat\alpha_{BL,i}$ and $\hat u_{i}$, which are
% decision variables belonging to a single bus.
%
% \marginJC{I don't think the reader can follow this discussion about
% middle rows and last rows, it is quite hand-wavy} \marginy{Now I
% think the discussion is trivial after defining the
% locality, %so I just commented them out.}
%
To see~\ref{item:sparse-cost}, first note that the sum of
$\alpha_{BL,i}^{2}(k)$ (respectively, $s_{i}^{2}(k)$) over $i$ is
naturally the sum of local variables. Second, the two-norm square of
$F\hat x(k+1)-A\hat x(k) -B_{1}\hat p^{fcst}(k)-B_{2}\hat u$ for every
$k\in[1,N-1]_{\naturals}$ is the sum of square of all its $m+2n$
entries, where each entry is local due to the locality of discretized
dynamics.  Similarly, $\|\hat x(0)-x(t^{w})\|_{2}^{2}$ is also
the sum of local variables.

\subsection{Distributed implementation via saddle-point dynamics}

Here we introduce a saddle-point dynamics to recover the unique
optimal solution $Y^{*}$ of $\RR^{\text{aug}}$ in a distributed
fashion. We start from the Lagrangian of~$\RR^{\text{aug}}$
\begin{align}\label{eqn:lag}
  \mathfrak{L}(Y,\eta,\mu) =
  g^{\text{aug}}(Y)+\eta^{T}(R_{1}Y-r_{1})+\mu^{T}(R_{2}Y-r_{2}),
\end{align}
where $\eta\in\real^{2|\II{\omega}|N+2|\II{u}|}_{\geqslant 0}$ and
$\mu\in\real^{(m+2n)N+n-|\II{u}|}$ are the Lagrangian multiplier
corresponding to constraints~\eqref{opti:nonlinear-1-qp}
and~\eqref{opti:nonlinear-2-qp}, respectively. Note that we have shown
that a) $\RR$ is feasible (cf.  Proposition~\ref{prop:pwl-optimal}),
b) $\RR$ and $\RR^{\text{aug}}$ are equivalent (cf.
Lemma~\ref{lemma:transform-strong-convex}), and c) all constraints in
$\RR^{\text{aug}}$ are linear. These three points together imply that
the refined Slater condition and strong duality
hold,~\cite[Section~5.2.3]{SB-LV:04}, which further implies that at
least one primal-dual solution $(Y^{*},\eta^{*},\mu^{*})$ of
$\RR^{\text{aug}}$ exists, and the set of primal-dual solutions is
exactly the set of saddle points of $\mathfrak{L}$ on the set
$\real^{(m+2n+|\II{\omega}|)N+n}\times(\real^{2|\II{\omega}|N+2|\II{u}|}_{\geqslant
  0}\times
\real^{(m+2n)N+n-|\II{u}|})$~\cite[Section~5.4.2]{SB-LV:04}.
Therefore, one can apply the saddle-point
dynamics~\cite{AC-EM-SHL-JC:18-tac} to recover one solution
$(Y^{*},\eta^{*},\mu^{*})$, where $\hat u^{*}$ is the MPC output
signal we need. Formally, the saddle-point dynamics of
$\RR^{\text{aug}}$ is
\begin{subequations}\label{sube:eqn:saddle-points}
  \begin{align}
    \epsilon_{Z}\frac{dZ}{d\tau}
    &=-\nabla_{Z}\mathfrak{L}(Z,\eta,\mu)=-(HZ+f+R_{1}^{T}\eta +
    R_{2}^{T}\mu),\label{eqn:sp-1}
    \\
    \epsilon_{\eta}\frac{d\eta}{d\tau}
    &=[\nabla_{\eta}\mathfrak{L}(Z,\eta,\mu)]^{+}_{\eta} =
    [R_{1}Z-r_{1}]^{+}_{\eta},\label{eqn:sp-2}
    \\
    \epsilon_{\mu}\frac{d\mu}{d\tau}
    &=\nabla_{\mu}\mathfrak{L}(Z,\eta,\mu)=R_{2}Z-r_{2},\label{eqn:sp-3}
  \end{align}
\end{subequations}
where $\epsilon_{Z}$, $\epsilon_{\eta}$, and $\epsilon_{\mu}$ are
tunable positive scalars.

Given the strong convexity of $g^{\text{aug}}$, the following result
states the global convergence of the
dynamics~\eqref{sube:eqn:saddle-points}, and its proof directly
follows from~\cite[Theorem~4.2]{AC-EM-SHL-JC:18-tac}.

\begin{theorem}\longthmtitle{Global asymptotic convergence of
    saddle-point dynamics}\label{thm:convergence-saddle-point}
  Starting from any initial condition $(Z(0),\eta(0)\mu(0))$, it holds
  that $Z(\tau)$ globally asymptotically converges to the unique
  optimal solution $Y^{*}$ of $\RR^{\text{aug}}$.
\end{theorem}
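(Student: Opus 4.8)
The plan is to verify the hypotheses of~\cite[Theorem~4.2]{AC-EM-SHL-JC:18-tac} for the program $\RR^{\text{aug}}$ and invoke it directly. That result guarantees that the primal component of the saddle-point dynamics of a constrained convex program converges globally asymptotically to the unique primal optimizer, provided that (a) the objective is strongly convex, (b) the constraints are affine, and (c) at least one saddle point of the associated Lagrangian exists. Thus the whole argument reduces to checking that $\RR^{\text{aug}}$ together with the flow~\eqref{sube:eqn:saddle-points} fits this template.

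First I would confirm the structural hypotheses. Affinity of the constraints is immediate from the compact form~\eqref{opti:nonlinear-qp}, where both $R_{1}Y\leqslant r_{1}$ and $R_{2}Y=r_{2}$ are linear in $Y$. Strong convexity of the objective $g^{\text{aug}}$ in $Y=(\hat X,\hat u,S)$ follows from Lemma~\ref{lemma:transform-strong-convex} under the standing assumption that $F$ is invertible; consequently $H\succ0$ in~\eqref{opti:nonlinear-qp}, and the primal optimizer $Y^{*}$ is unique. Existence of a primal-dual solution --- equivalently, a saddle point of the Lagrangian $\mathfrak{L}$ in~\eqref{eqn:lag} over the product of $\real^{(m+2n+|\II{\omega}|)N+n}$ with the set $\real^{2|\II{\omega}|N+2|\II{u}|}_{\geqslant 0}\times\real^{(m+2n)N+n-|\II{u}|}$ --- was already established in the discussion preceding the statement: feasibility of $\RR$ (Proposition~\ref{prop:pwl-optimal}), its equivalence with $\RR^{\text{aug}}$ (Lemma~\ref{lemma:transform-strong-convex}), and linearity of the constraints together yield the refined Slater condition and strong duality via~\cite[Sections~5.2.3,~5.4.2]{SB-LV:04}.

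Next I would match the dynamics~\eqref{sube:eqn:saddle-points} to the exact format required by the reference. The right-hand sides are, respectively, minus the primal gradient $\nabla_{Z}\mathfrak{L}$, the projected ascent term $[\nabla_{\eta}\mathfrak{L}]^{+}_{\eta}$ that renders the nonnegative orthant forward invariant for the inequality multiplier, and the equality-multiplier ascent $\nabla_{\mu}\mathfrak{L}$; these are precisely the building blocks of the cited saddle-point dynamics. The positive scalars $\epsilon_{Z}$, $\epsilon_{\eta}$, $\epsilon_{\mu}$ act as a constant diagonal positive-definite time-scaling that leaves the equilibrium set unchanged and leaves intact the Lyapunov argument of the reference, provided it is carried out in the correspondingly weighted norm. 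With (a)--(c) verified, \cite[Theorem~4.2]{AC-EM-SHL-JC:18-tac} applies and yields that $Z(\tau)$ converges asymptotically, from any $(Z(0),\eta(0),\mu(0))$, to the primal component of some saddle point; since strong convexity forces that component to coincide with the unique optimizer $Y^{*}$ of $\RR^{\text{aug}}$, the claim follows.

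The only delicate point is the bookkeeping in the previous paragraph: one must ensure that the set over which the saddle point is sought (the nonnegative orthant for $\eta$, all of $\real$ for $\mu$) is consistent with both the Lagrangian sign convention and the projection operator $[\,\cdot\,]^{+}_{\eta}$ in~\eqref{eqn:sp-2}, and that the scalar time-scalings are genuinely admissible within the diagonal metric permitted by the reference. Everything beyond this compatibility check is a direct citation, so no independent Lyapunov construction is required here.
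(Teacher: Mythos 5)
Your proposal is correct and follows essentially the same route as the paper, which proves the theorem by direct citation of \cite[Theorem~4.2]{AC-EM-SHL-JC:18-tac} after noting the strong convexity from Lemma~\ref{lemma:transform-strong-convex} and the existence of a saddle point established in the discussion preceding the statement. Your write-up merely makes explicit the hypothesis-checking (affine constraints, strong convexity, Slater/strong duality, and the time-scaling compatibility) that the paper leaves implicit.
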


To conclude, we justify how the saddle-point
dynamics~\eqref{sube:eqn:saddle-points} can be implemented in a
distributed fashion to recover $Y^{*}$. We first assign $(Z,\eta,\mu)$
to different nodes and edges.  In~\eqref{sube:eqn:saddle-points}, the
primal variable $Z$ corresponds to $Y$, and its assignment is exactly
the same, as discussed at the beginning of
Section~\ref{subsection:locality}. Since all constraints are local
with respect to a node or an edge, we assign each entry of
$(\eta,\mu)$ to the corresponding node or edge.
% Since each entry in $\eta$ corresponds to an inequality constraint
% in either~\eqref{opti:nonlinear-3} or~\eqref{opti:sensitivity}, it
% belongs to the bus where the constraint is on. The other dual
% variable $\mu$ corresponds to the equality
% constraints~\eqref{opti:nonlinear-1}
% and~\eqref{opti:control-location}. Since\eqref{opti:control-location}
% can be treated as $n-|\II{u}|$ separate scalar constraints on each
% bus in $\mathcal{F}\backslash\II{u}$, there are also $n-|\II{u}|$
% entries of $\mu$ where each of them belongs to a bus in
% $\mathcal{F}\backslash\II{u}$. The remaining $(m+2n)N$ entries of
% $\mu$ correspond to constraint~\eqref{opti:nonlinear-1}, Each entry
% in the dual variable $\eta$ belongs to the buses in either
% $\II{\omega}$ or $\II{u}$, corresponding to
% constraints~\eqref{opti:nonlinear-3}
% and~\eqref{opti:sensitivity}. For the other dual variable $\mu$,
% each of its $n-|\II{u}|$ items belongs to buses in
% $\mathcal{F}\backslash\II{u}$, corresponding to the equality
% constraint~\eqref{opti:control-location}, and its remaining
% $(m+2n)N$ items are associated with decision variable $Y$,
% corresponding to~\eqref{opti:nonlinear-1}.
%
% \marginJC{Again, this whole paragraph above needs to be revised for clarity. I
%   don't now what ``item'' stands for. When discussing them, there
%   seems to be a confusion between variables and contraints (i don't
%   understand what we mean by ``corresponding to...''. Also, ``Alone
%   this assignment'' makes no sense in English}
%
With this assignment, and due to locality, the dual variables
dynamics~\eqref{eqn:sp-2} and~\eqref{eqn:sp-3} are distributed, i.e.,
for each entry of $\eta$ or $\mu$, if it belongs to a node (resp.,
edge), then its time derivative only depends on primal and dual
variables of its own and of neighboring edges (resp., nodes). On the
other hand, the primal dynamics~\eqref{eqn:sp-1} requires 2-hop
communication, i.e., for each entry of $Z$, if it belongs to a node
(resp., edge), then its time derivative depends on primal and dual
variables of its neighboring nodes (resp., edges).

\change{Note that here we do not distinguish between communication
  network topology and the underlying physical network topology, in
  that they are identical.  That is to say, each node or edge needs to
  communicate with its neighboring nodes and edges exactly determined
  by the given physical network. In general, any communication
  topology that has the physical topology as a subgraph will also be
  valid, which is a common assumption, see
  e.g.,~\cite{EM-CZ-SL:17,MHN-etal:14,PT-AR:13,PG-MDD-TK-BDS-AR:13}.
  It would still be possible to use an independent communication
  network at the cost of sacrificing performance. For instance,
  in~\cite{XW-SM-BDOA:19}, the trade-off is to have $(m+n)^{2}$ agents
  (as opposed to $(m+n)$ agents here) in total to form the
  communication network; in~\cite{MZ-SM:12}, each agent needs to
  maintain an estimation of the entire optimal solution, leading to
  $O(mN+nN)$ number of estimations in total for each agent, where $N$
  denotes the number of prediction step.  Here, instead, each agent
  only estimates its own component of the optimal solution, which is
  of size $O(N)$.
  % That being said, if our goal were not to solve an optimization
  % problem like~\eqref{opti:nonlinear} or its equivalent
  % version~\eqref{opti:nonlinear-qp} where the physical dynamics is
  % encoded in the constraint~\eqref{opti:nonlinear-1}, the identical
  % topology might be unnecessary. For instance, in the economic
  % dispatch problem~(e.g., (5) in~\cite{AC-JC:15-tcns}, (9)-(10)
  % in~\cite{CZ-UT-NL-SL:14}): the optimal solution can be recovered by
  % distributed communication where the communication network can be
  % arbitrarily designed so long as it is connected.  This independence
  % comes from the irrelevance between constraints in their optimization
  % problem and the physical network dynamics. 
}

\begin{remark}\longthmtitle{Time scale in saddle-point
    dynamics}\label{rmk:time-spd}
  {\rm Since the MPC component updates its output at time instants
    $\{t^{w}\}_{w\in\naturals}$ according to~\eqref{eqn:uMPC}, a
    requirement on the saddle-point
    dynamics~\eqref{sube:eqn:saddle-points} solving $\RR$ (or
    equivalently $\RR^{\text{aug}}$) is that it returns the optimal
    solution within $t^{w+1}-t^{w}$ seconds starting from
    $t^{w}$ for every $w\in\naturals$.  To achieve this, one may
    tune $\epsilon_{Y}$, $\epsilon_{\eta}$, and $\epsilon_{\mu}$ to
    accelerate the convergence of the saddle-point dynamics. In
    practice, this corresponds to
    running~\eqref{sube:eqn:saddle-points} on a faster time scale,
    which puts requirements on the hardware regarding communication
    bandwidth and computation time. 
    % \change{On the other hand, there are
    % algorithms~\cite{MMC-JC-BG:19-cdc,PG-MDD-TK-BDS-AR:13} that
    % expedite the convergence without requiring a high bandwidth, but
    % it is currently unclear if one can implement them in a
    % distributed way to solve our optimization problem containing
    % inequality constraints.}  However,
    % Theorem~\ref{thm:convergence-saddle-point} only justifies
    % asymptotic convergence, for now, we are unable to provide an
    % analytical guideline to determine the above three
    % parameters.
  } \oprocend
\end{remark}

\begin{remark}\longthmtitle{Comparison with controller with regional
    coordination based on network
    decomposition}\label{rmk:comparison-semi} {\rm The proposed
    distributed algorithm treats each bus and transmission line as an
    agent, and recovers the optimal solution by allowing each agent to
    exchange information only with its neighbors. In our previous
    work~\cite{YZ-JC:19-acc,YZ-JC:20-auto}, we have proposed an
    alternative algorithm that does not rely on participation of every
    agent at the expense of not recovering the global optimal
    solution.  The basic idea of this alternative implementation is to
    consider a set of regions in the network. Each region,
    independently of the rest, possesses its own centralized
    controller in charge of gathering regional information and
    broadcasting control signals to controllers within the region.  To
    account for the couplings in the dynamics, flows that connect a
    region and the rest of the network are assumed constant when
    computing the controller in each region.  Although there can be
    nodes and edges shared by multiple regions, the control signal
    regulated on a shared node belongs to only one region.  This
    implementation does not recover the exact optimal solution and
    only ensures partial cooperation among the inputs.}
  \oprocend
 \end{remark}

\section{Numerical examples}\label{sec:simulations}

We verify our results on the IEEE 39-bus power network shown in
Figure~\ref{fig:IEEE39bus}.  We run all simulations in MATLAB~2018b in
a desktop with an i7-8700k CPU@4.77GHz and 16GB DDR4
memory@3600MHz. All parameters in the power network
dynamics~\eqref{eqn:compact-form} come from the Power System
Toolbox~\cite{KWC-JC-GR:09}.  Let
$\II{\omega}=\{30,31,32,37\}$ be four generator buses with transient
frequency requirements.  The safe frequency region is
$[\underline\omega_{i},\bar\omega_{i}]=[-0.2Hz,\ 0.2Hz]$ for every
$i\in\II{\omega}$ (as $\omega$ corresponds to the shifted frequency,
the safe frequency region without shifting is thus $[59.8Hz,\
60.2Hz]$). Let $\{3,7,25\}$ be another three non-generator buses that
can provide control signals, so that $\II{u}=\{3,7,25,30,31,32,37\}$.
To set up the optimization problem~\eqref{opti:nonlinear} used in the
MPC component~\eqref{eqn:uMPC}, we
use~\eqref{sube:discretization-linear} for the discretization. The
controller parameters are summarized in
Table~\ref{table:control-parameter}. In addition, we apply the
saddle-points dynamics~\eqref{sube:eqn:saddle-points} to generate the
output of the MPC component in a distributed fashion.
%  let: $\tilde t=10s$, $ T=0.2$ so that
% $N=50$; $\epsilon_{i}=1.9$ and $\tau_{i}=0.5$ for every $i\in\II{u}$;
% $c_{i}=1$ if $i\in\II{\omega}$, whereas $c_{i}=4$ if
% $i\in\II{u}\backslash\II{\omega}$; $d=100$;
% $p^{fcst}_{t}(\tau)=p(\tau)$ for every $\tau\in[t,t+\tilde t]$; let
% $\{t^{w}\}_{j\in\naturals}=\{j\}_{j\in\naturals}$, i.e., the MPC
% component samples and updates its output  every $1s$; For the
% top-layer controller~\eqref{eqn:second-layer-control}, we choose
% $\bar\gamma_{i}=\underline\gamma_{i}=1$ and
% $\bar\omega_{i}^{\text{thr}}=-\underline\omega_{i}^{\text{thr}}=0.1Hz$
% for every $i\in\II{\omega}$.

\begin{table}
  % \normalsize
  \renewcommand\arraystretch{1.2}
  \centering
    \begin{tabular}{|c|c||c|c|}
      \hline
      parameter & value &       parameter & value
      \\
      \hline
      $\tilde t$ & $10s$ &  $p^{fcst}_{t}(\tau),\ \forall \tau\in[t,t+\tilde t]$ & $p(\tau)$
      \\
      \hline
      $T$ & $0.2s$ & $c_{i},\ \forall i\in\II{\omega}$ & $4$
      \\
      \hline
      $N$ & $50$ & $c_{i},\ \forall i\in\II{u}\backslash\II{\omega}$ & $1$
      \\
      \hline
      $d$ & $100$ &$t^{w},\ \forall w\in\naturals$ & $w$
      \\
      \hline
      $\epsilon_{i},\ \forall i\in\II{u}$ & $1.9$ &$\bar\gamma_{i}$
      and $\underline\gamma_{i},\ \forall i\in\II{\omega}$ & $1$ 
      \\
      \hline
      $\tau_{i},\ \forall i\in\II{u}$ & $0.5s$
      &$\bar\omega_{i}^{\text{thr}}$ and
      $-\underline\omega_{i}^{\text{thr}}$ & $0.1Hz$ 
      \\
      \hline
    \end{tabular}
  \caption{Controller parameters.}\label{table:control-parameter}
  \vspace*{-5ex}
\end{table}

\begin{figure}[htb]
  \centering%
  \includegraphics[width=.85\linewidth]{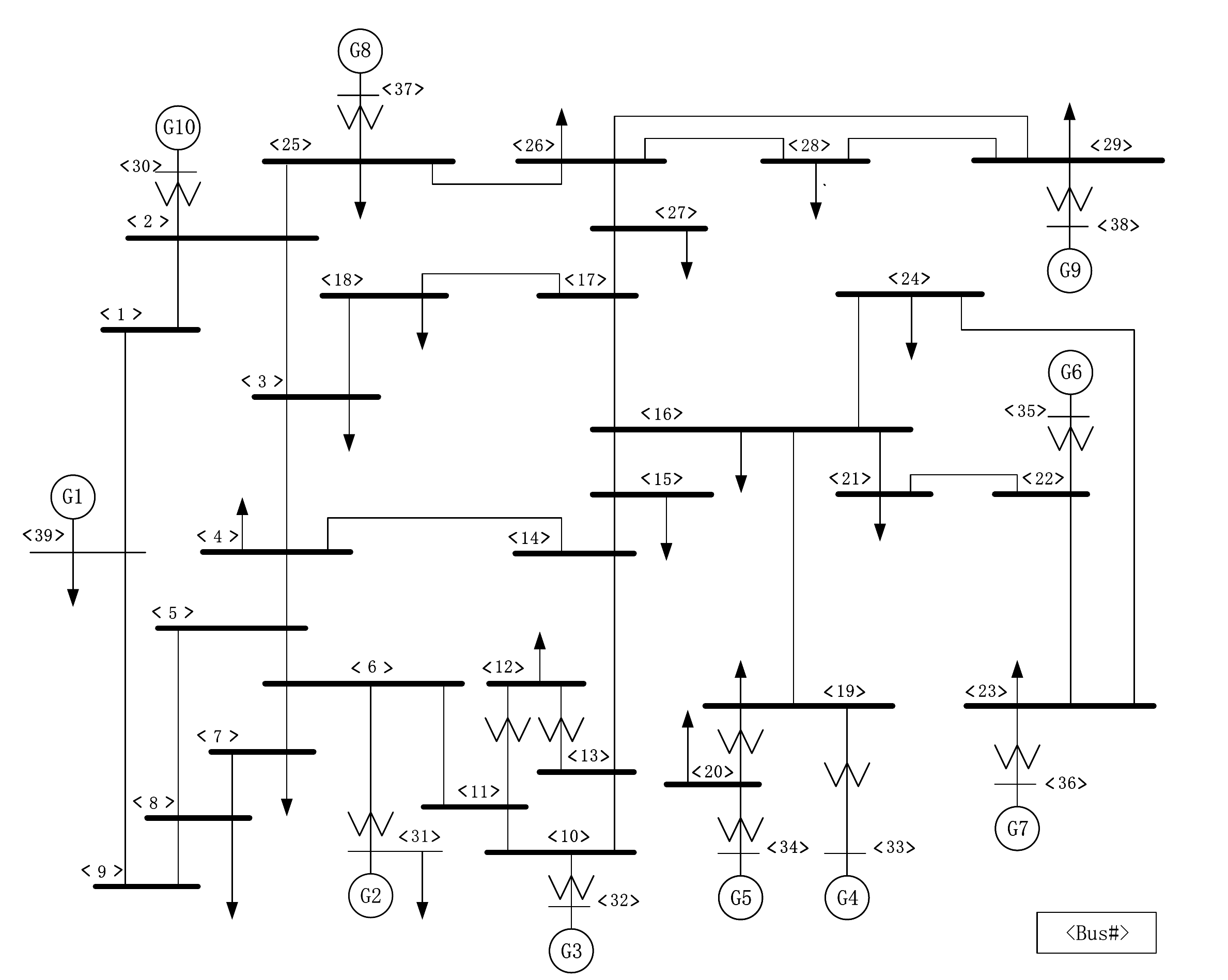}
  \caption{IEEE 39-bus power network.}\label{fig:IEEE39bus}
\end{figure}

We first show that the bilayered controller defined
by~\eqref{eqn:two-layer},~\eqref{eqn:uMPC},~\eqref{eqn:stability-filter},~\eqref{eqn:lp-filter},
\eqref{eqn:second-layer-control} is able to maintain the transient
frequency of selected nodes within the safe region without changing
the equilibrium point
(cf. Theorem~\ref{thm:two-layer-control}\ref{item:invariance}
and~\ref{item:convergence}). Although in the
dynamics~\eqref{eqn:compact-form} we assume that the power injection
is constant, in simulations we perturb all non-generator nodes by a
time-varying power injection. Specifically, for every
$i\in[1,29]_{\naturals}$, let $p_{i}(t)=(1+\delta(t))p_{i}(0)$ where
\begin{align*}
  \delta(t)=
  \begin{cases}
    0.2\sin(\pi t\slash50) & \hspace{0.5cm}\text{if $0\leqslant
      t\leqslant 25$},
    \\
    0.2& \hspace{0.5cm}\text{if $ 25< t\leqslant 125$},
    \\
    0.2\sin(\pi (t-100)\slash50)& \hspace{0.5cm}\text{if $ 125<
      t\leqslant 150$},
    \\
    0& \hspace{0.5cm}\text{if $ 150< t$}.
  \end{cases}
\end{align*}
The deviation $\delta(t) p_{i}(0)$ has both fast ramp-up and ramp-down
periods and a long intermediate constant period. We have chosen it
this way to test the capability of the controller against both
slow-varying and fast-varying disturbances.
Figure~\ref{fig:trajectories}\subref{fig:frequency-response-open-loop}
shows the open-loop frequency responses of nodes 30, 31, 32, and 37
(i.e., nodes with the frequency safety requirement). All four
frequency trajectories, which almost overlap with each other,
%
% \marginJC{Why is it that they all look very similar? Are there
%   inertias, etc., all alike? It'd be nicer, I think, to see more heretogeneity.}
%   
%   \marginy{The homogeneity is mostly due to the fast swing
%   dynamics. This can be seen from the following few points: I) In
%   Fig.3 a), the open-loop frequency response are almost the same. In
%   fact, if I make the disturbance faster, i.e., replacing 50 in
%   $\delta$ by 3, the responses are still almost same. II) The
%   ineritial of 30,31,32,and 37 are indeed almost same (they are all
%   around 8); however if I keep all parameters same but replace the
%   inerial of node 30 by 50, and re-do Fig. 3b), then $\omega_{30}$
%   still overlaps with $\oemga_{31},\omega_{32}...$. I think we can
%   make the repsonse different if we further replace 50 by 500.  III)
%   A way to make the responses not overlap with each other would be
%   choosing different penalty coefficients and control cost
%   coefficients for different nodes, but this would require more
%   figures, which I don't think is really necessary.  }
%
exceed the lower safe frequency bound $59.8Hz$. However, with the
controller enabled, in
Figure~\ref{fig:trajectories}\subref{fig:frequency-response-closed-loop},
their frequencies all evolve within the safe region, and they all
return to $60Hz$ as the disturbance
disappears. Figure~\ref{fig:trajectories}\subref{fig:control-response-region1}
shows the corresponding control signals. Note that, due to our
specific choice of $c_{i}$'s, the controller tends to use more
non-generator control signals (i.e., $\alpha_{3}$, $\alpha_{7}$, and
$\alpha_{25}$) than generator ones (i.e., $\alpha_{31}$,
$\alpha_{32}$, $\alpha_{33}$, and $\alpha_{37}$). Also, note that they
split into two groups and the control signals within each group
possess almost the same trajectories.
%  in that they correspond to a
% same value of $c_{i}$.

% 
% Additionally, we solve the quadratic programming
% problem~\eqref{opti:nonlinear} in the MPC component by the built-in
% MATLAB function $mpcqpsolver$~\cite{mpcqpsolver} with default
% set-ups, and it only takes no more than 0.007 second each time to
% update the MPC update.

\begin{figure*}[tbh!]
  \centering
  \subfigure[\label{fig:frequency-response-open-loop}]{\includegraphics[width=.24\linewidth]{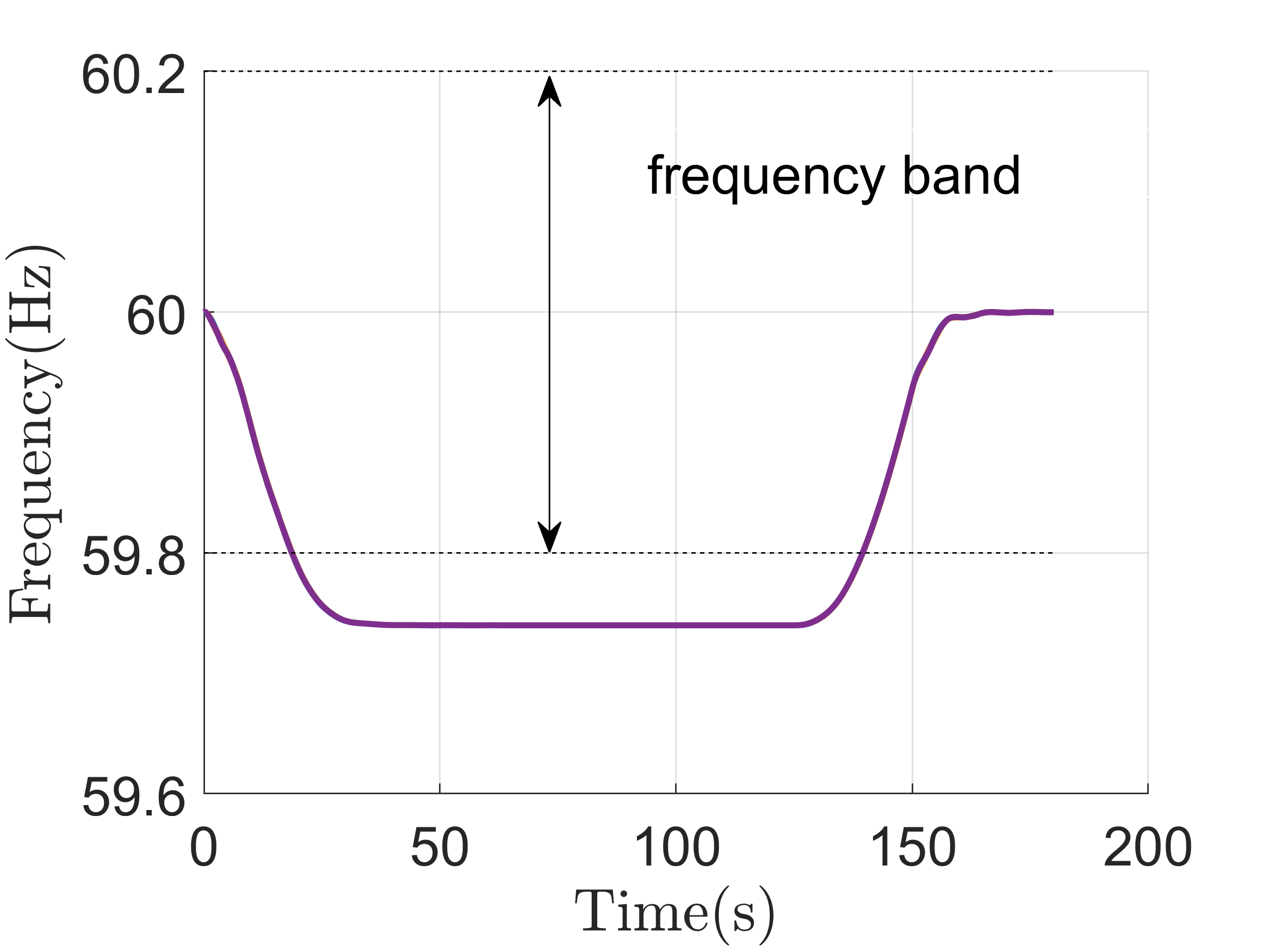}}
  \subfigure[\label{fig:frequency-response-closed-loop}]{\includegraphics[width=.24\linewidth]{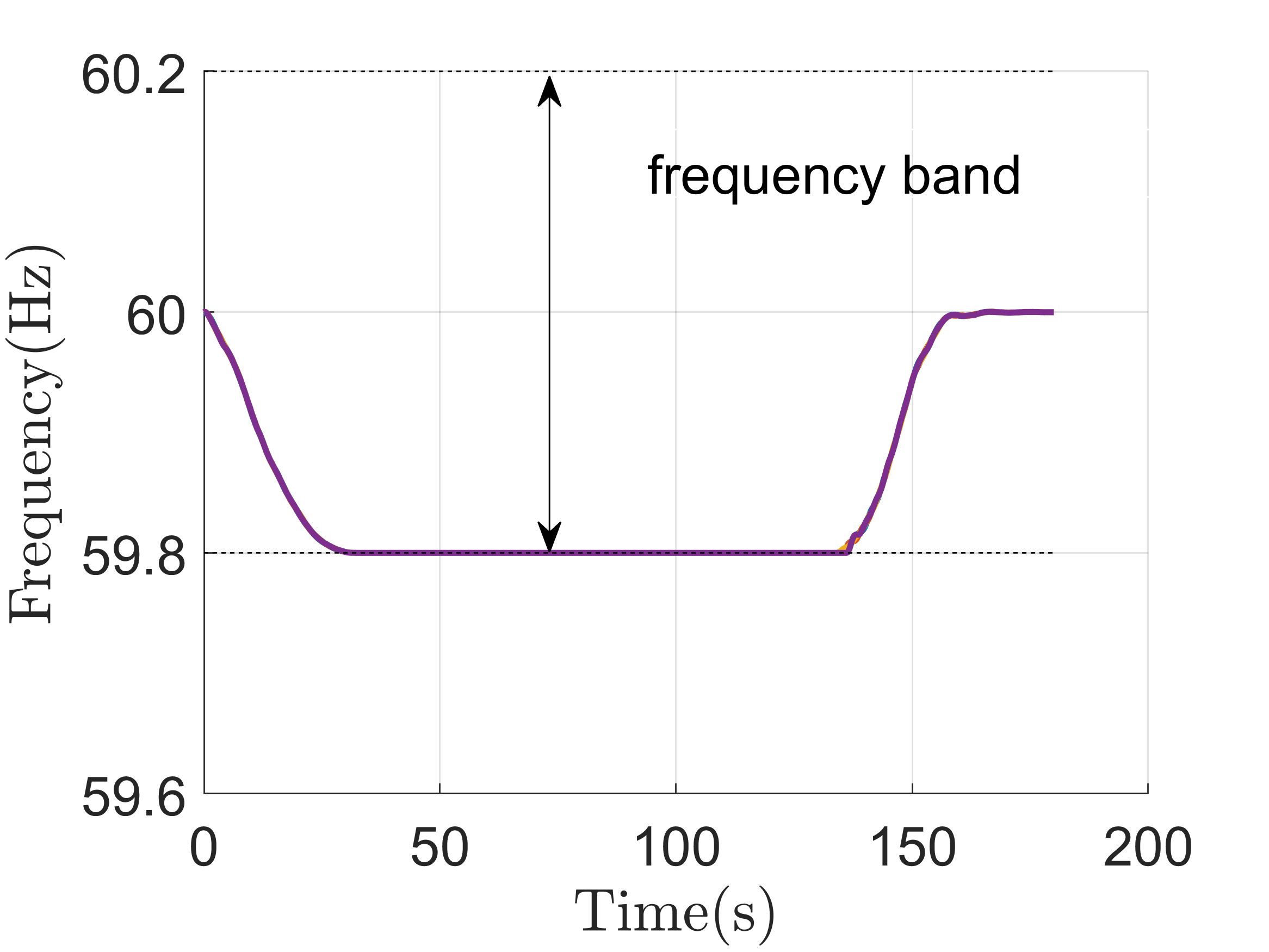}}
  \subfigure[\label{fig:control-response-region1}]{\includegraphics[width=.24\linewidth]{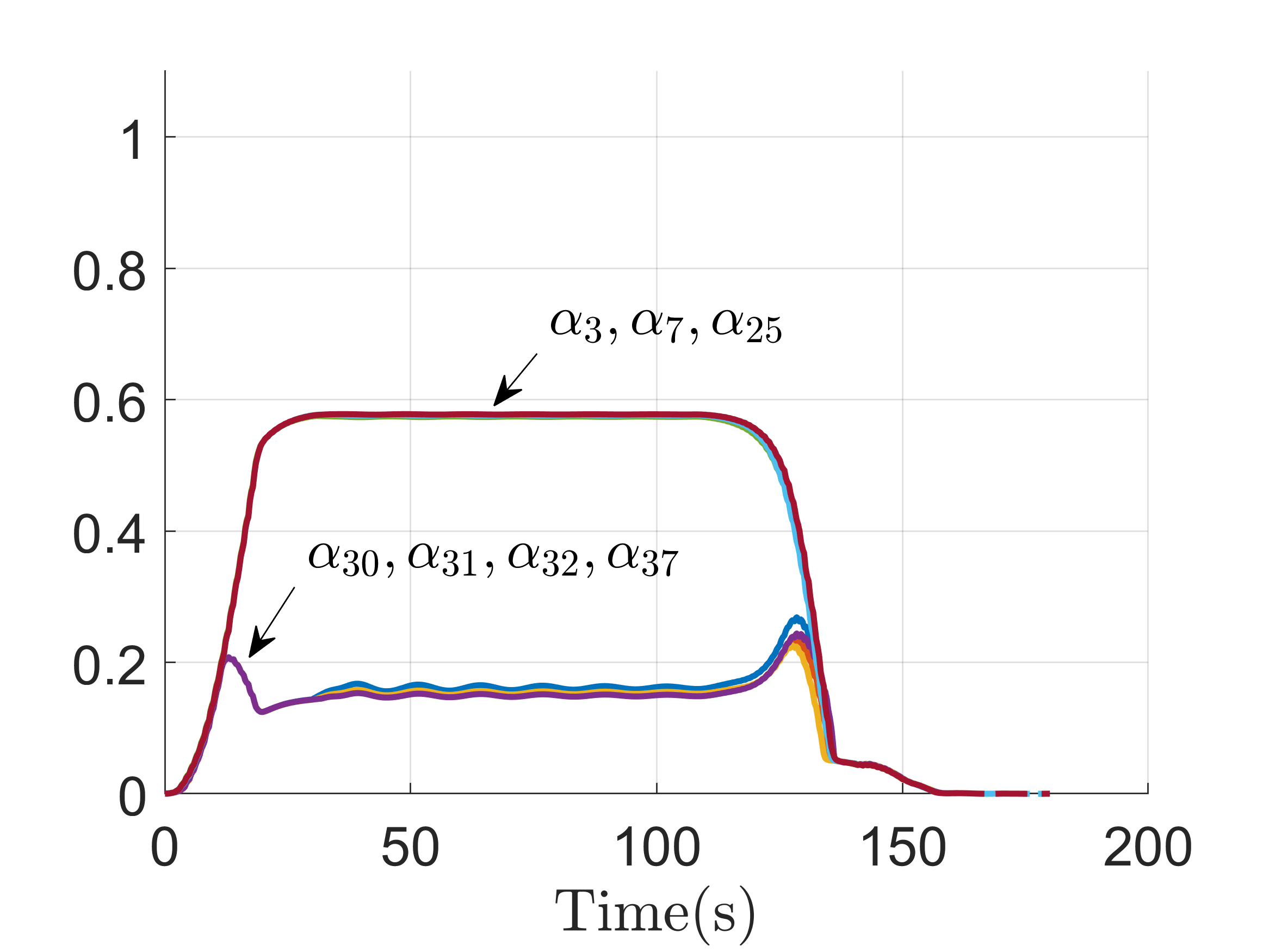}}
  \caption{Frequency and control input trajectories with and without
    transient frequency control.
    Plot~\subref{fig:frequency-response-open-loop} shows the open-loop
    frequency responses at nodes 30, 31, 32, and 37, all exceeding the
    lower safe bound. The closed-loop system with the distributed
    control has all responses stay inside the safe region in
    plot~\subref{fig:frequency-response-closed-loop}. Plot~\subref{fig:control-response-region1}
    shows the corresponding control trajectories.
  }\label{fig:trajectories}
\end{figure*}

Next we compare the performance of the proposed controller with other
approaches. Figures~\ref{fig:trajectories-fully-decen}\subref{fig:frequency-response-closed-loop-fully-decen}
and~\subref{fig:control-response-fully-decen} show the frequency
trajectories and control signals using the controller with regional
coordination based on network decomposition proposed
in~\cite{YZ-JC:19-acc}. As mentioned in
Remark~\ref{rmk:comparison-semi}, although this controller achieves
frequency safety, it only allows control cooperation within a limited
region, instead of the entire network. This can be seen from
Figure~\ref{fig:trajectories-fully-decen}\subref{fig:control-response-fully-decen},
where, with the same control cost coefficients
(cf. Table~\ref{table:control-parameter}), the two groups of control
trajectories are not as uniform as those in
Figure~\ref{fig:trajectories}\subref{fig:control-response-region1} and
have a larger magnitude.
Figures~\ref{fig:trajectories-fully-decen}\subref{fig:frequency-response-pure-df}
and~\subref{fig:control-response-pure-df} are the frequency and
control trajectories with only the top-layer controller, as proposed
in~\cite{YZ-JC:19-auto}, cf. Remark~\ref{rmk:no-bottom-layer}. Since
it is a non-optimization-based control strategy, each control signal
does not cooperate with others. In this specific scenario, the
top-layer controller leads to fluctuations even during the time
interval [25,125]$s$, when the disturbance is constant. This is
because the top-layer controller is myopic, without further
consideration for the effects of the rest of the network. The economic
advantage of the proposed bilayered control can be also seen by
computing the overall control cost over [0,180]$s$
%  by
% $\int_{0}^{180}\sum_{i\in\II{u}}c_{i}\alpha_{i}^{2}(\tau)\text{d}\tau$,
% which is the continuous version of the discrete control cost in the
% objective function of~\eqref{opti:nonlinear}.
of the proposed controller, the controller in~\cite{YZ-JC:19-acc}, and
the controller in~\cite{YZ-JC:19-auto}, which are around 163, 231 and
656, resp.

% 
% Compare it with Figure~\ref{fig:frequency-response-closed-loop} in
% their zoom-in parts around 30 seconds, we see that the frequency
% responses with the full-distributed controller does not have the
% slight over-reaction in the semi-distributed controller. This is
% because that the latter one vaguely overestimates the severity of
% the disturbance (cf. Remark~\ref{rmk:const-virtual-inter}), and thus
% makes some additional control effort, causing the frequency slightly
% higher than the lower safe
% bound. Figure~\ref{fig:trajectories-fully-decen}\subref{fig:control-response-fully-decen}
% and~\subref{fig:control-response-reigonal-all} show the control
% trajectories of fully-distributed and semi-distributed controllers,
% respectively, where the latter one is nothing but merging
% Figure~\ref{fig:control-response-region1} and
% Figure~\ref{fig:control-response-region2} into a single plot. Since
% we only have $c_{i}$ to be either $1$ or $4$, The trajectories both
% Figure~\ref{fig:control-response-fully-decen} and
% Figure~\ref{fig:control-response-reigonal-all} can be roughly
% slitted into two parts, one corresponds to control trajectories with
% value around 0.5, and the other one with 0.2. However, the
% trajectories in Figure~\ref{fig:control-response-fully-decen} turn
% out to be more uniform and smooth than that in
% Figure~\ref{fig:control-response-reigonal-all}, in that the later
% ones disables control cooperation among different regions
% (cf. Remark~\ref{rmk:semi-dencentralized-loss}).

\begin{figure*}[tbh!]
  \centering
  \subfigure[\label{fig:frequency-response-closed-loop-fully-decen}]{\includegraphics[width=.24\linewidth]{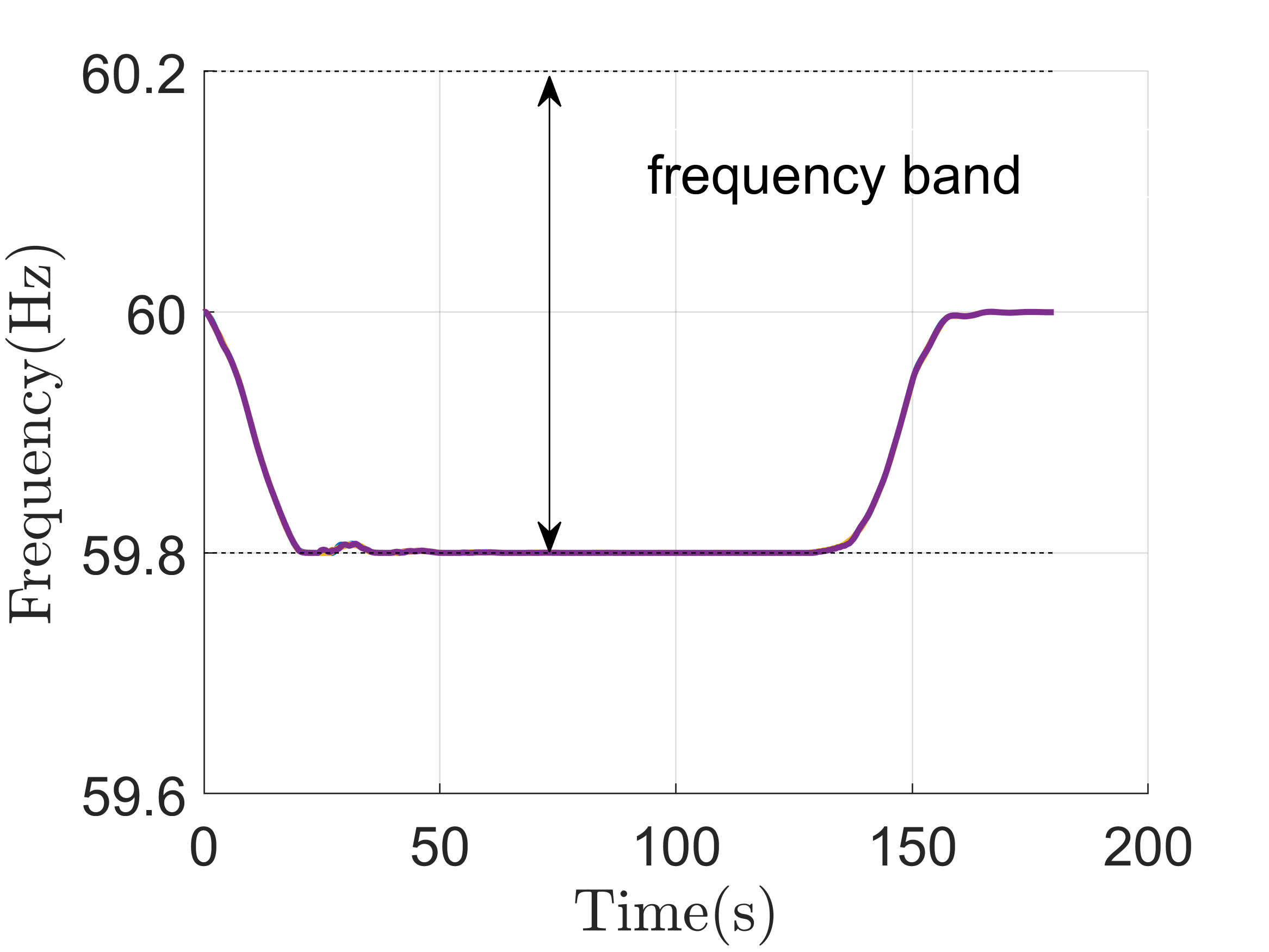}}
  \subfigure[\label{fig:control-response-fully-decen}]{\includegraphics[width=.24\linewidth]{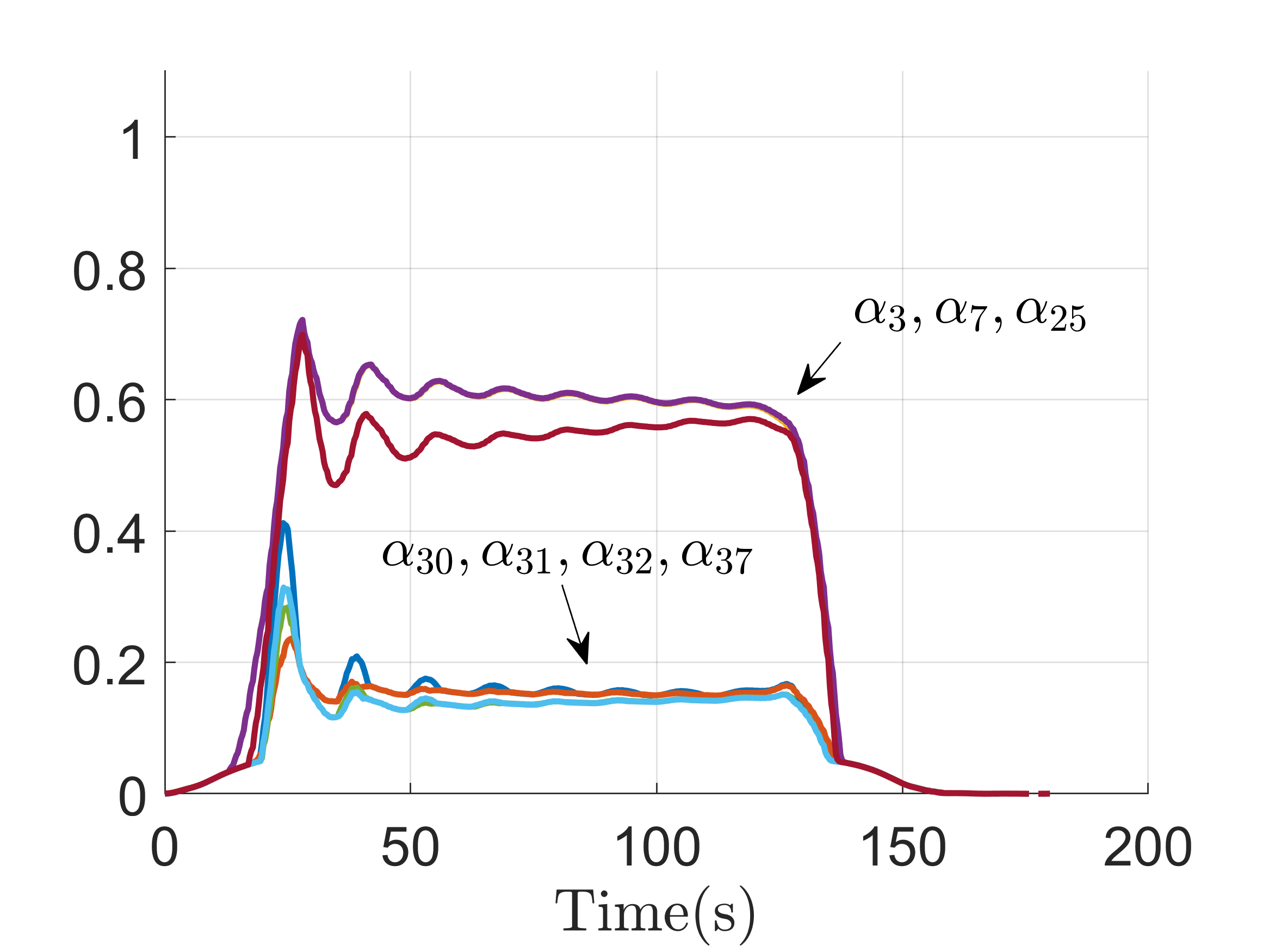}}
  \subfigure[\label{fig:frequency-response-pure-df}]{\includegraphics[width=.24\linewidth]{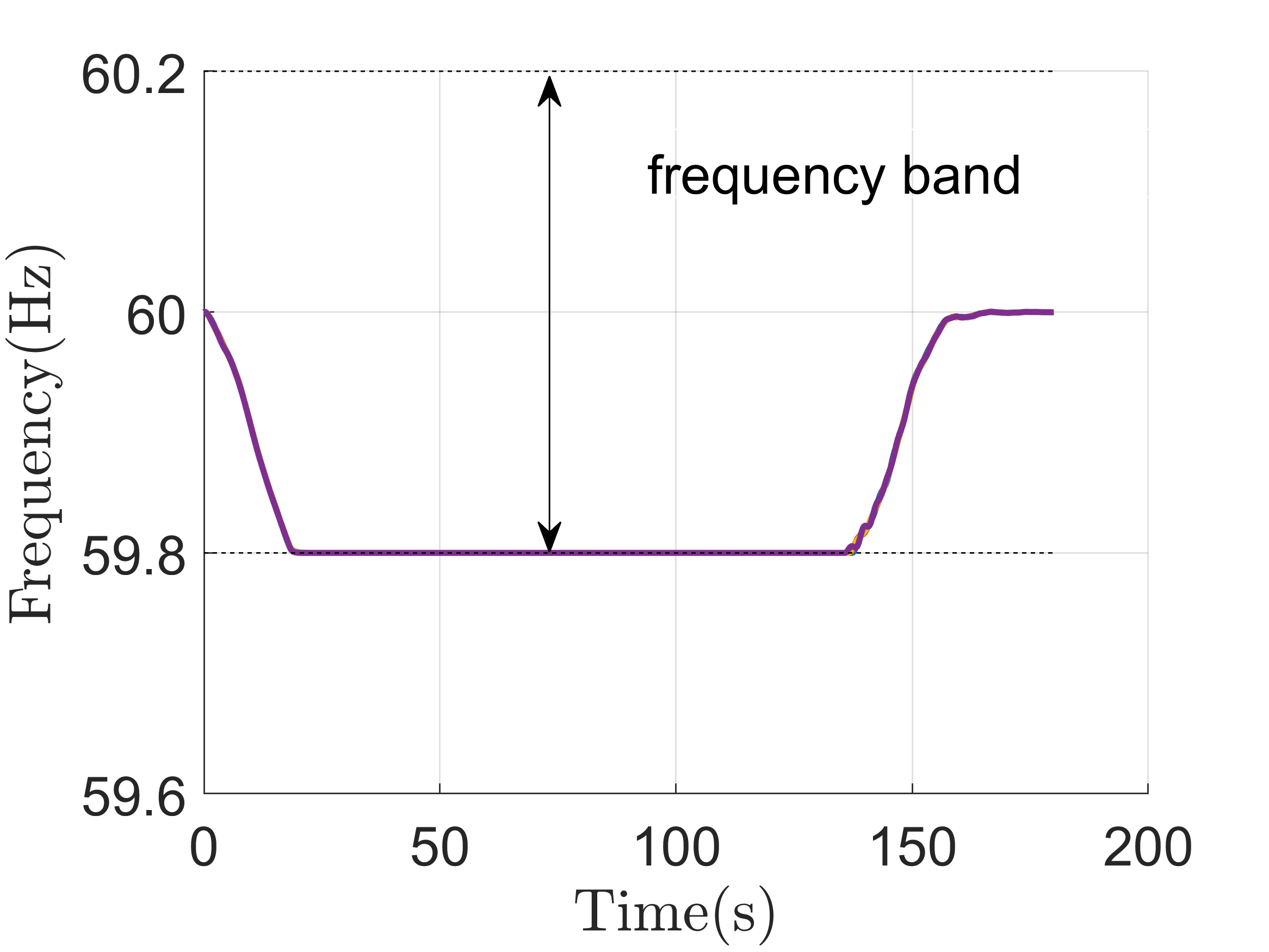}}
  \subfigure[\label{fig:control-response-pure-df}]{\includegraphics[width=.24\linewidth]{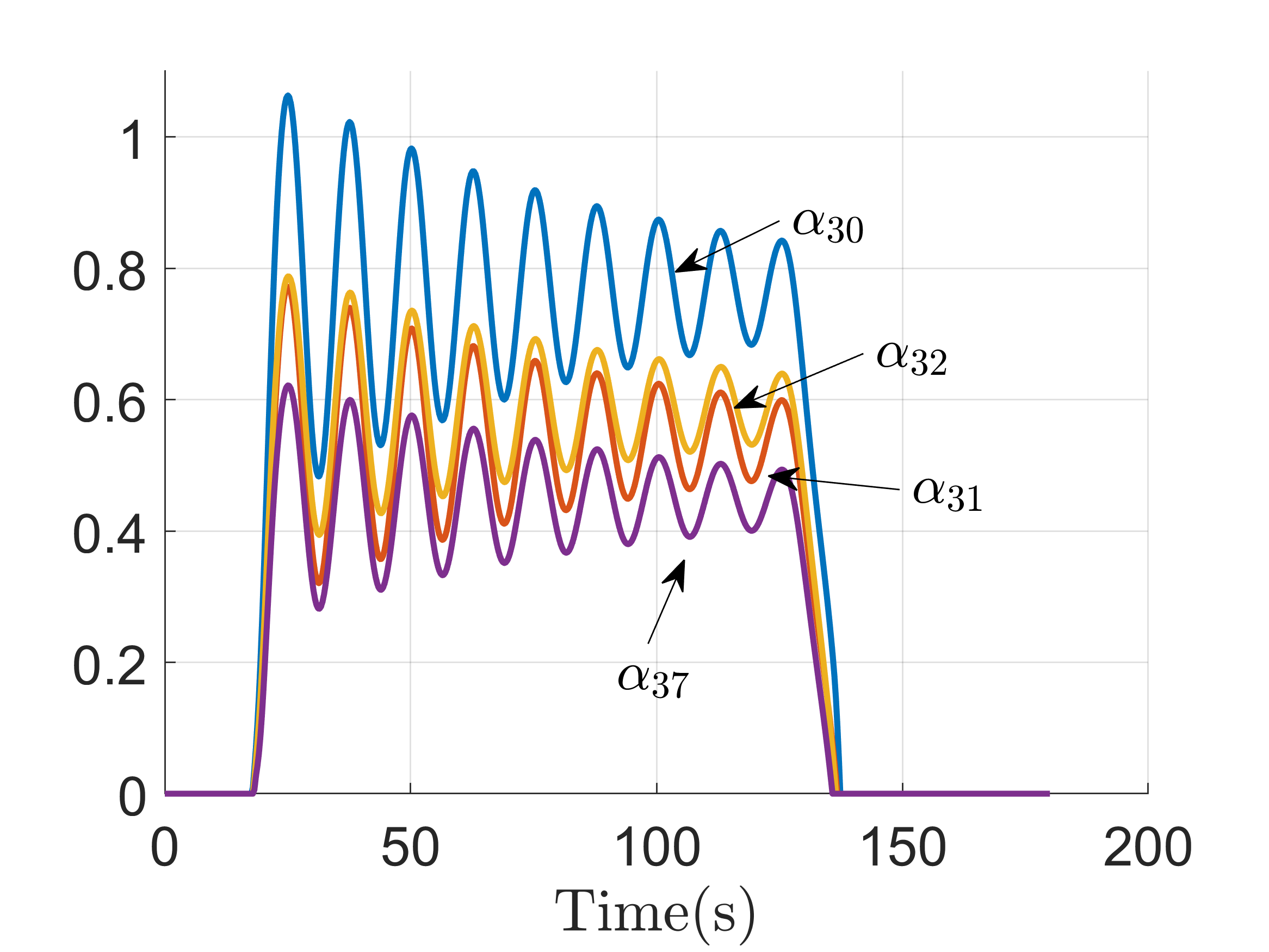}}
  \caption{Comparison of frequency and control trajectories with other
    approaches. Plot~\subref{fig:frequency-response-closed-loop-fully-decen}
    and~\subref{fig:control-response-fully-decen} employ the
    controller with regional coordination based on network
    decomposition proposed in~\cite{YZ-JC:19-acc}.
    Plot~\subref{fig:frequency-response-pure-df}
    and~\subref{fig:control-response-pure-df} correspond to the
    top-layer controller, a non-optimization-based control strategy
    proposed
    in~\cite{YZ-JC:19-auto}. }\label{fig:trajectories-fully-decen}
\end{figure*}

% 
% Figure~\ref{fig:trajectories-pure-df} shows the frequency and
% control trajectories with the bottom-layer disabled (
% cf. Remark~\ref{rmk:no-bottom-layer}). Note that in
% Figure~\ref{fig:frequency-response-pure-df} all targeted frequencies
% do not exceed the lower bound; however, since the top layer control
% signal is non-optimization-based, we see no cooperation among each
% control signal and observe bounded but relatively large fluctuations
% in the control trajectories, as apposed to the semi- and fully
% distributed control trajectories in Figure~\ref{fig:trajectories}
% and Figure~\ref{fig:trajectories-fully-decen}.

% 
% 
% \begin{figure}[tbh!]
%   \centering
%   \subfigure[\label{fig:frequency-response-pure-df}]{\includegraphics[width=.48\linewidth]{epsfiles/IEEE39-frequency-response-with-control-generator-pure-df.png}}
%     \subfigure[\label{fig:control-response-pure-df}]{\includegraphics[width=.48\linewidth]{epsfiles/IEEE39-control-response-no-bottom-layer.png}}
%     \caption{Frequency and control input trajectories with controller with no-bottom layer.
%       Plot~\subref{fig:frequency-response-pure-df} shows
%       the open-loop frequency responses of node 30,31,32, and  37. Although all of them are within in the safe region, due to the non-participation of the bottom layer, the control signals shown in plot~\subref{fig:control-response-pure-df}
% tends to behave in a selfish fashion, which further causes oscillation.
%        }\label{fig:trajectories-pure-df}
% \vspace*{-1.5ex}
% \end{figure}

Next, we examine the role of the bottom and top layers in determining
the value of the input signal of our distributed controller.  For node
$30$, Figure~\ref{fig:trajectories-30}\subref{fig:control-response-30}
shows that $\alpha_{BL,30}$ is responsible for the larger share in the
overall control signal $\alpha_{30}$, whereas $\alpha_{TL,30}$
provides a slightly tuning during most of the time.  If we reduce the
penalty $d_{30}$ from 100 to 10, in
Figure~\ref{fig:trajectories-30}\subref{fig:control-response-30-smaller-penalty},
the dominance of $\alpha_{BL,30}$ decreases, in accordance with our
discussion in Remark~\ref{rmk:violation-penalty}. On the contrary, if
we raise $d_{30}$ to 1000, the contribution of the top layer becomes
much smaller, as shown in
Figure~\ref{fig:trajectories-30}\subref{fig:control-response-30-large-penalty}.

\change{We further look into the bottom-layer control signals at node
  30. Using the same set-up as in
  Figure~\ref{fig:trajectories-30}\subref{fig:control-response-30}, we
  plot in
  Figure~\ref{fig:trajectories-30-detailed}\subref{fig:control-response-30-BL-normal}
  the MPC component output signal $u_{MPC,30}$ and the stability
  filter output signal $\hat u_{MPC,30}$. They are almost identical
  except for a paltry difference around 140s. Next, in
  Figure~\ref{fig:trajectories-30-detailed}\subref{fig:control-response-30-BL-shifted},
  we purposefully add 0.1 to $u_{MPC,30}$, i.e., the input of the
  stability filter is now re-defined as $u_{MPC,30}+0.1$. Notice that
  $\hat u_{MPC,30}$, unaffected by the input shift, still converges to
  0, which coincides with our analysis after
  Theorem~\ref{thm:two-layer-control}.}
Figure~\ref{fig:trajectories-30-detailed}\subref{fig:saddle-points}
shows how the saddle-point dynamics~\eqref{sube:eqn:saddle-points}
converges to the value of $u_{MPC,30}(50)$ starting from an initial
guess. Here we have used $\epsilon_{Z}=5\cdot 10^{-4}$ and
$\epsilon_{\eta}=\epsilon_{\mu}=2.5\cdot 10^{-4}$ to ensure
convergence is attained within 1$s$,
cf. Table~\ref{table:control-parameter}.

\change{To illustrate the closed-loop system performance under
  uncertainty, in Figure~\ref{fig:trajectories-uncertainty} we
  simulate three different scenarios. In
  Figure~\ref{fig:frequency-inaccurate-power}, instead of having an
  accurate forecasted power injection, at every $t\geqslant 0$, we let
  $p^{fcst}_{t}(\tau)= p(t)$ for all $\tau\in[t,t+\tilde t]$, i.e.,
  the forecasted power injection is simply the current power
  injection. Note that in this case the frequencies of all four
  controlled nodes stay within the safe region,
  cf. Remark~\ref{rmk:TV-power}; in
  Figure~\ref{fig:frequency-generator-dynamics}, for each generator
  node (i.e., node 30 to 39), we adopt a first-order
  model~\cite{ZW-FL-SHL-CZ-SM:18} with a time constant of $5s$ as the
  generator dynamics, and note that the frequencies still stay within
  the safe region most of the time; in
  Figure~\ref{fig:frequency-generator-dynamics-inaccurate-power}, we
  consider both inaccurate forecasted power injection and the
  generator dynamics, and the frequencies still behave well after a
  short fluctuation.}
%
% \marginJC{These are crazily small numbers and given without context
%   here. Numbers like 0.1, maybe 0.05 would be more easy to accept. In
%   terms of number of communication rounds/variables, etc., how
%   demanding are these numbers? Reviewers are gonna want to have more
%   detailed information, and also some discussion about pros/cons, and
%   ways to address them.}
% 
% \marginy{Since the saddle-points dynamics is continous, theoretically,
%   it doesn't really matter how small these numbers are. Indeed, in
%   practise, we have to disrectize the saddle-points dynamics, in
%   smaller $\epsilon$s, requires more frequent the communications and
%   the higher bandwidth.}
% %
% \marginJC{Yifu, I doubt the reviewers would be satisfied with your
%   response in the margin. It does matter, becase the numbers directly
%   correspond to implementation costs. This part is weak and we need to
% strengthen it.}
%

\begin{figure*}[tbh!]
  \centering
  \subfigure[\label{fig:control-response-30}]{\includegraphics[width=.24\linewidth]{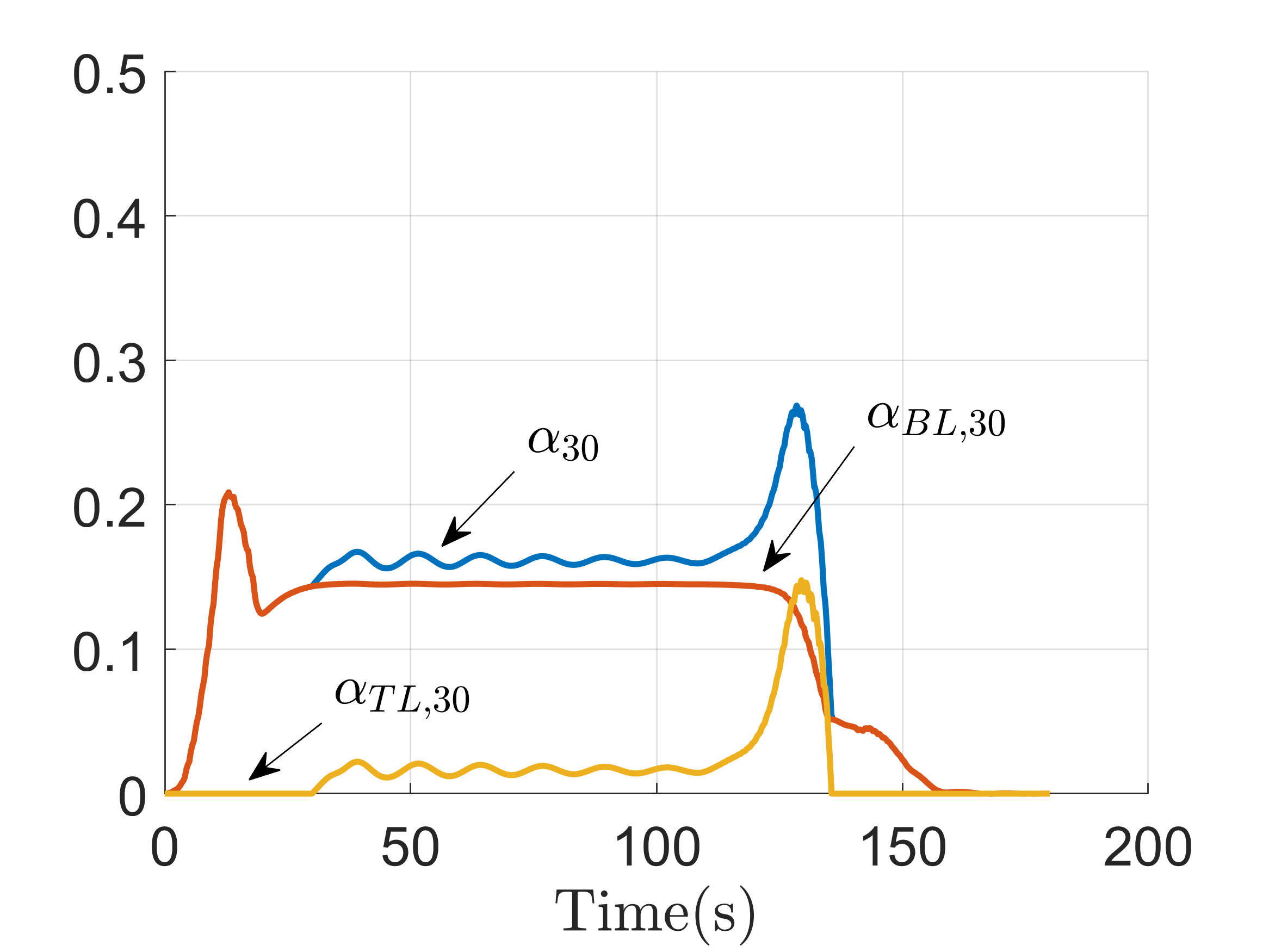}}
  \subfigure[\label{fig:control-response-30-smaller-penalty}]{\includegraphics[width=.24\linewidth]{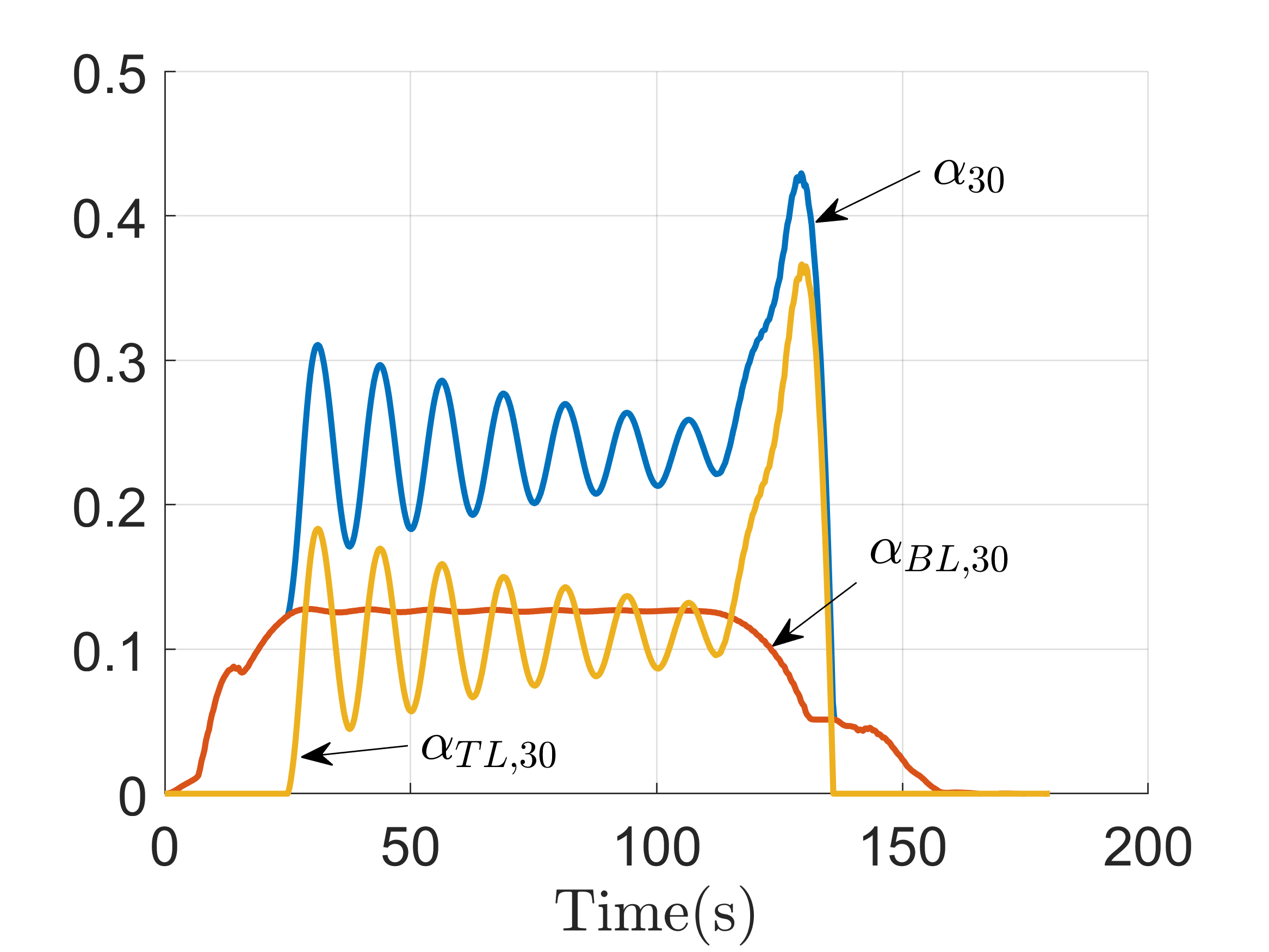}}
  \subfigure[\label{fig:control-response-30-large-penalty}]{\includegraphics[width=.24\linewidth]{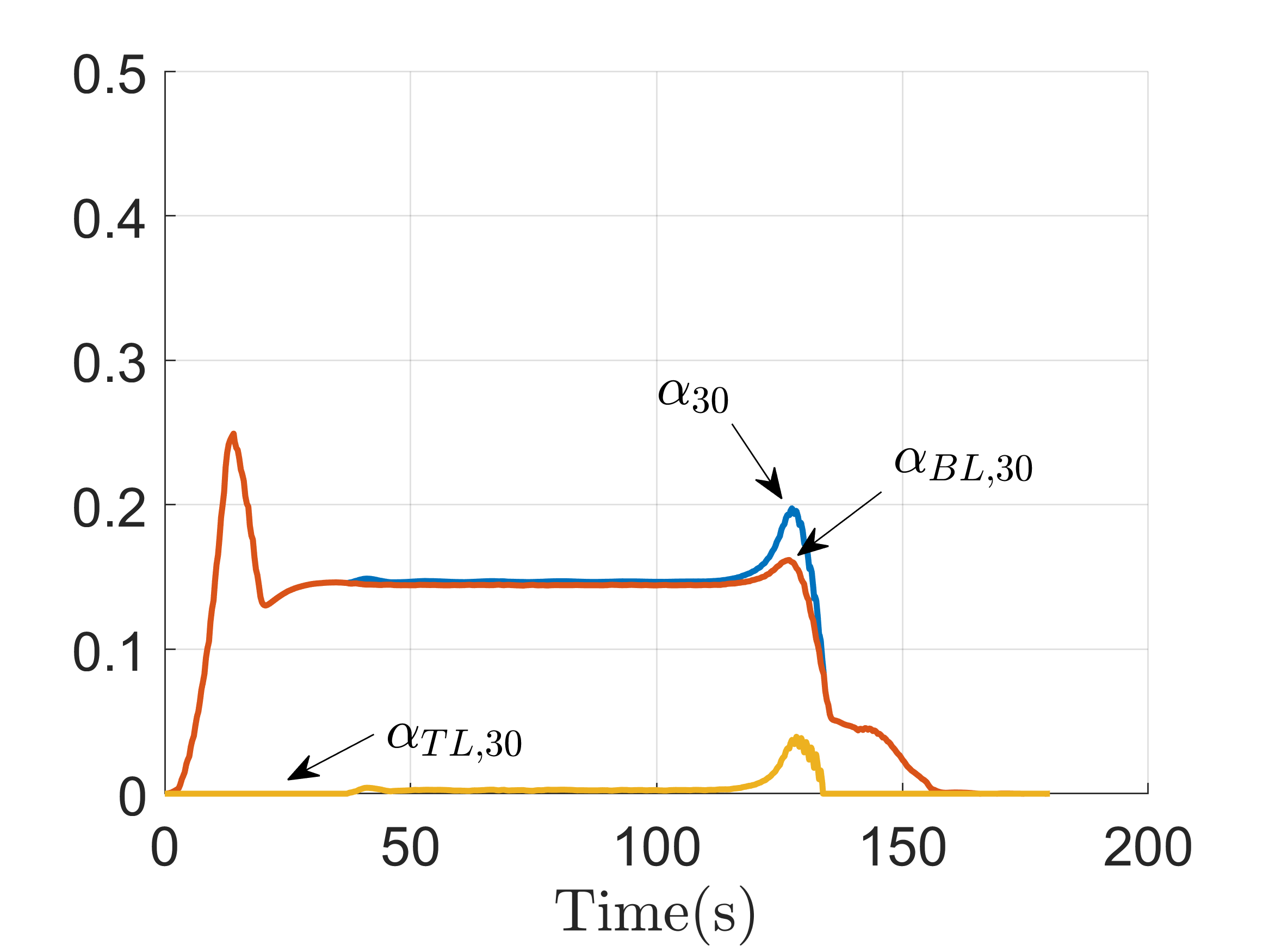}}
%   \subfigure[\label{fig:acclerated-gradient}]{\includegraphics[width=.24\linewidth]{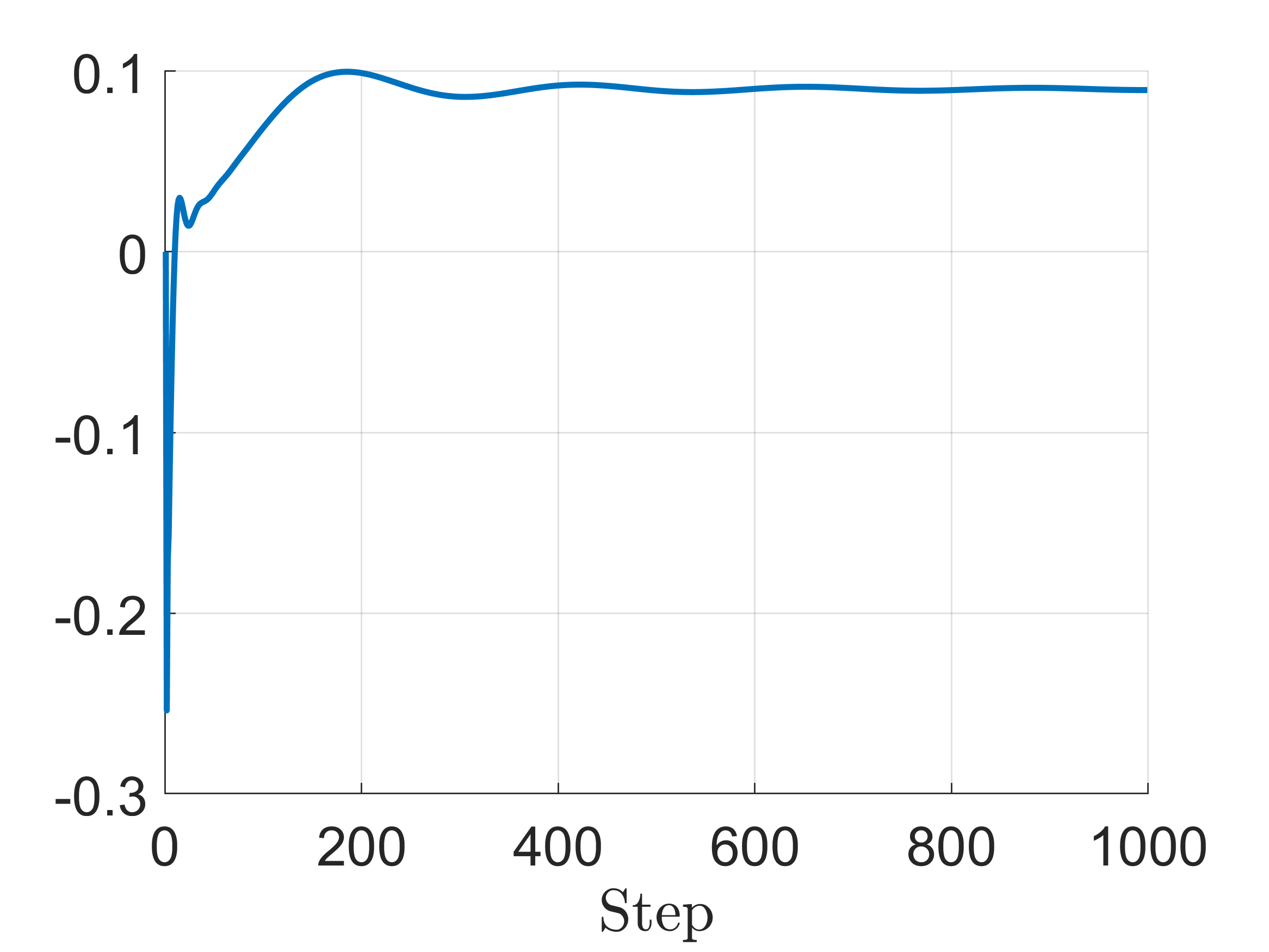}}
     \caption{Decomposition of the control signal at
       node~30. Plot~\subref{fig:control-response-30},~\subref{fig:control-response-30-smaller-penalty},
       and~\subref{fig:control-response-30-large-penalty} show the
       signals generated by the two control layers at node $30$ using
       $d_{30} = 10^2$, $d_{30}=10$, and $d_{30}=10^3$, respectively,
       as values for the frequency safety violation penalty
       coefficient in the MPC component. With a larger penalty, the
       bottom layer plays a more significant role in the overall
       control signal.}\label{fig:trajectories-30}
\end{figure*}

\begin{figure*}[tbh!]
  \centering
  \subfigure[\label{fig:control-response-30-BL-normal}]{\includegraphics[width=.24\linewidth]{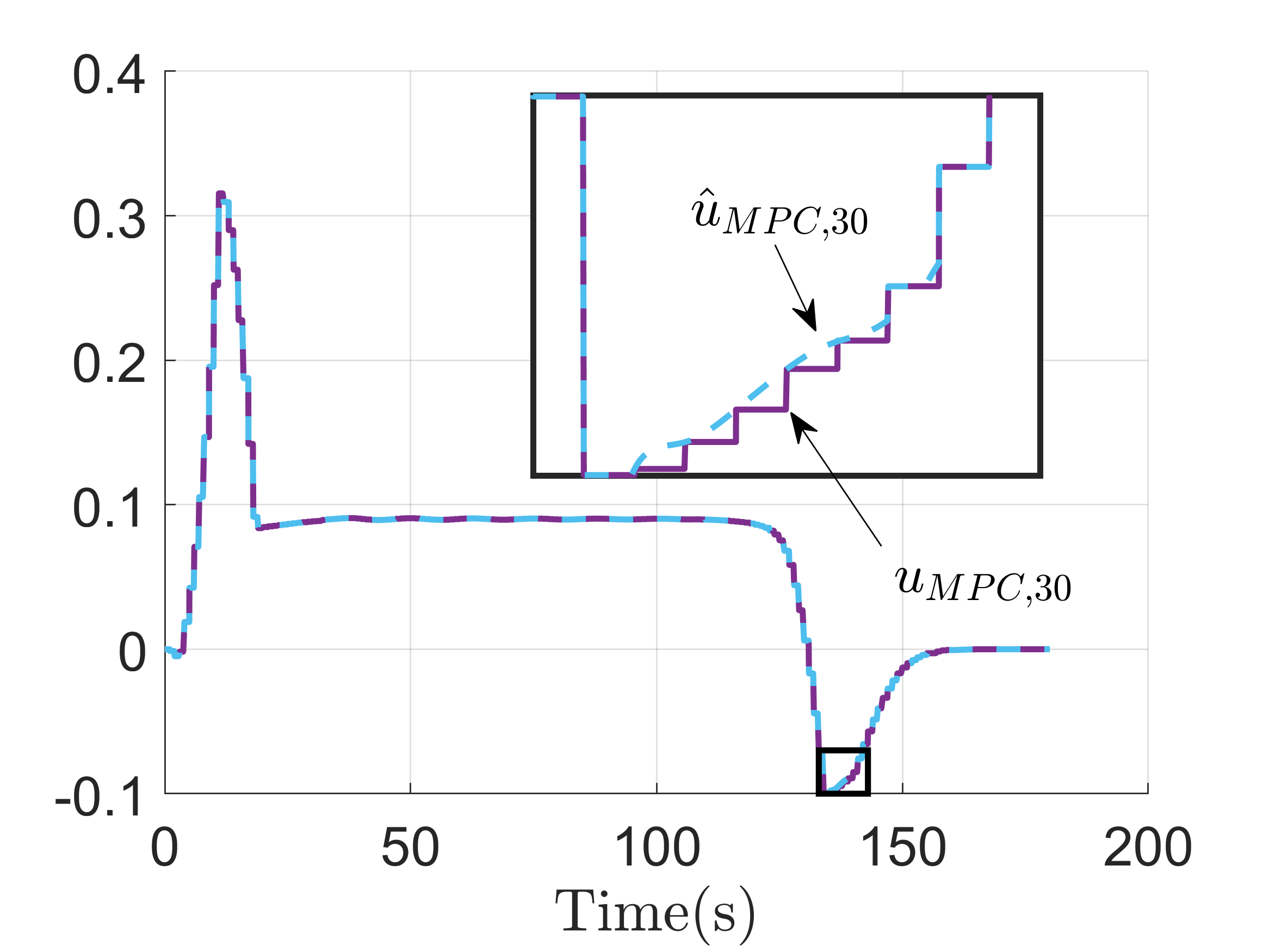}}
       \subfigure[\label{fig:control-response-30-BL-shifted}]{\includegraphics[width=.24\linewidth]{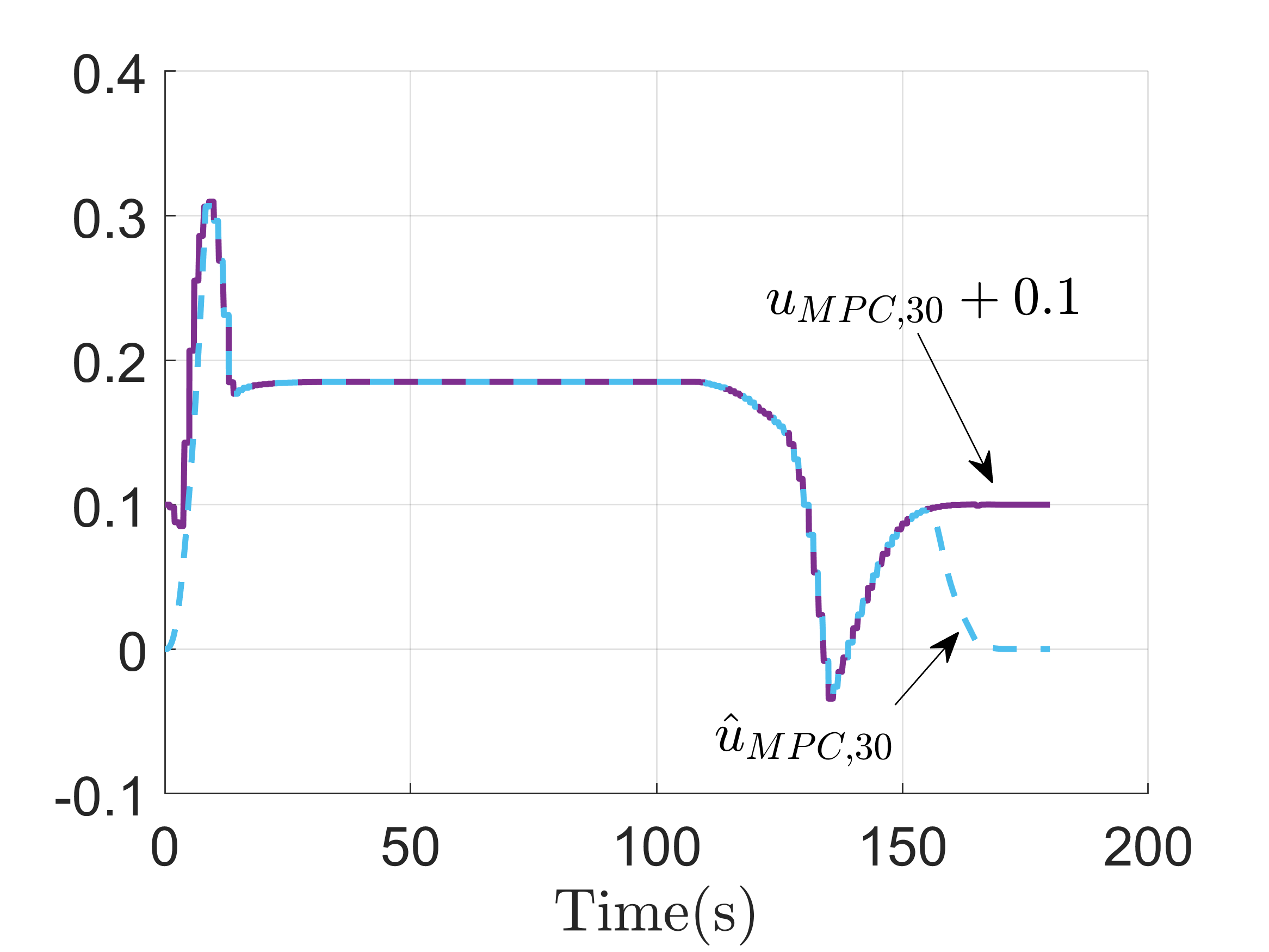}}
            \subfigure[\label{fig:saddle-points}]{\includegraphics[width=.24\linewidth]{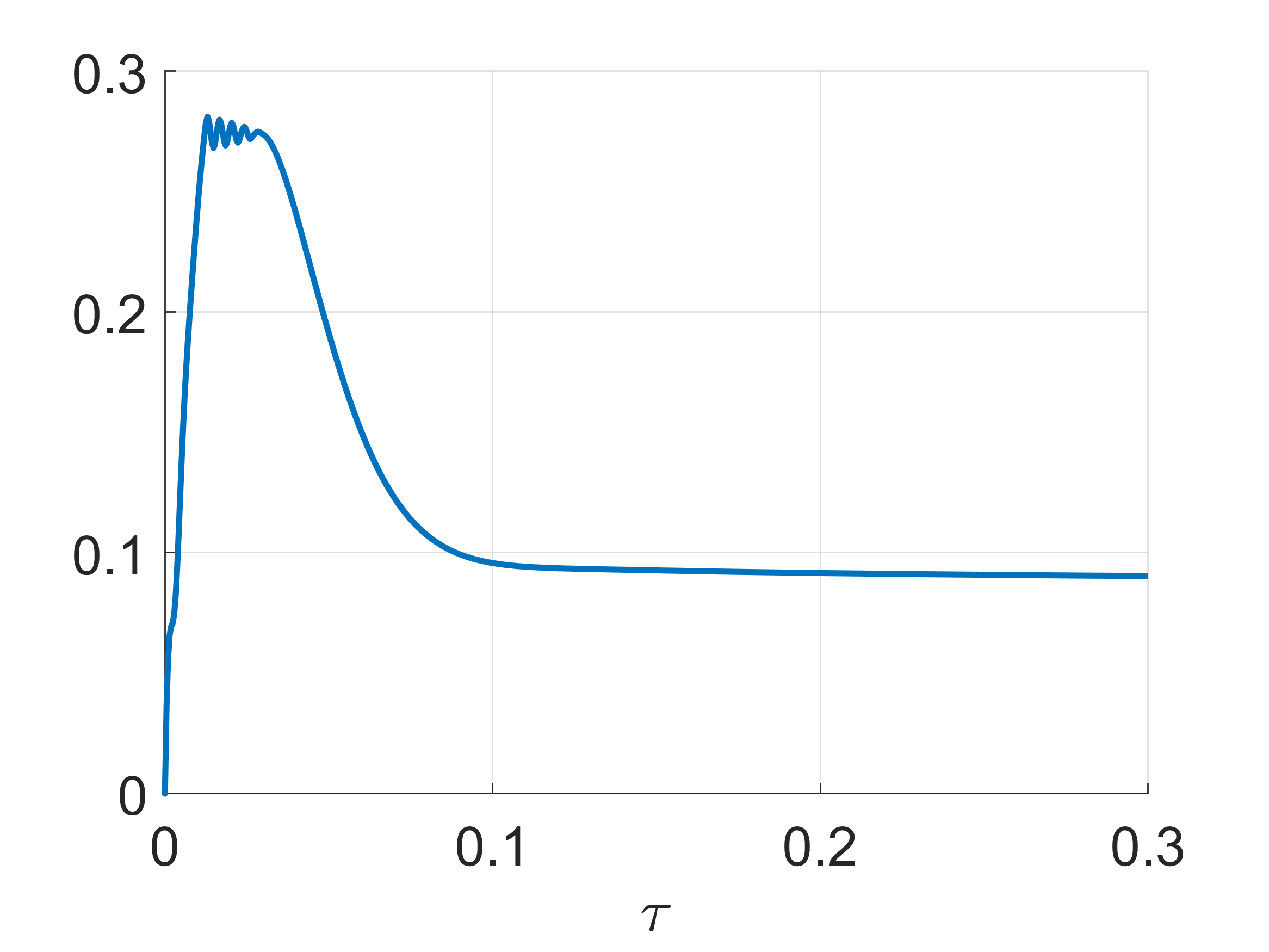}}
            \caption{\change{Decomposition of the bottom-layer control
                signal at
                node~30. Plot~\subref{fig:control-response-30-BL-normal}
                shows the MPC component output signal $u_{MPC,30}$ and
                the stability filter output signal $\hat
                u_{MPC,30}$. Both are almost identical except for a
                minor discrepancy appearing around 140s. To make their
                difference more prominent, in
                plot~\subref{fig:control-response-30-BL-shifted}, we
                add a constant 0.1 shift to $u_{MPC,30}$, which does
                not affect the convergence of $\hat u_{MPC,30}$ to 0
                (highlighting again the fact that the MPC component
                cannot jeopardize system closed- loop asymptotic
                stability, cf. Remark IV.4).
                Plot~\subref{fig:saddle-points} shows the convergence
                of the saddle-point
                dynamics~\eqref{sube:eqn:saddle-points} computing
                $u_{MPC,30}(50)$ in $0.1$s.  }
            }\label{fig:trajectories-30-detailed}
\end{figure*}

Lastly, we show that the distributed controller is able to steer the
frequency to the safe region from unsafe initial conditions.  To do
this, we consider the set-up of Figure~\ref{fig:trajectories} but
intentionally disable the controller for the first 30 seconds.  For
clarity, we only show the frequency and control trajectories at node
30 in Figure~\ref{fig:frequency-response-bad-initial}. Note that the
frequency quickly moves above the safe lower bound after the
controller becomes active at
$t=30$s. Figure~\ref{fig:control-response-bad-initial} shows the
control signal, where after some brief transient, $\alpha_{BL,30}$
still dominates the overall control signal.

\begin{figure*}[tbh!]
  \centering
  \subfigure[\label{fig:frequency-inaccurate-power}]{\includegraphics[width=.24\linewidth]{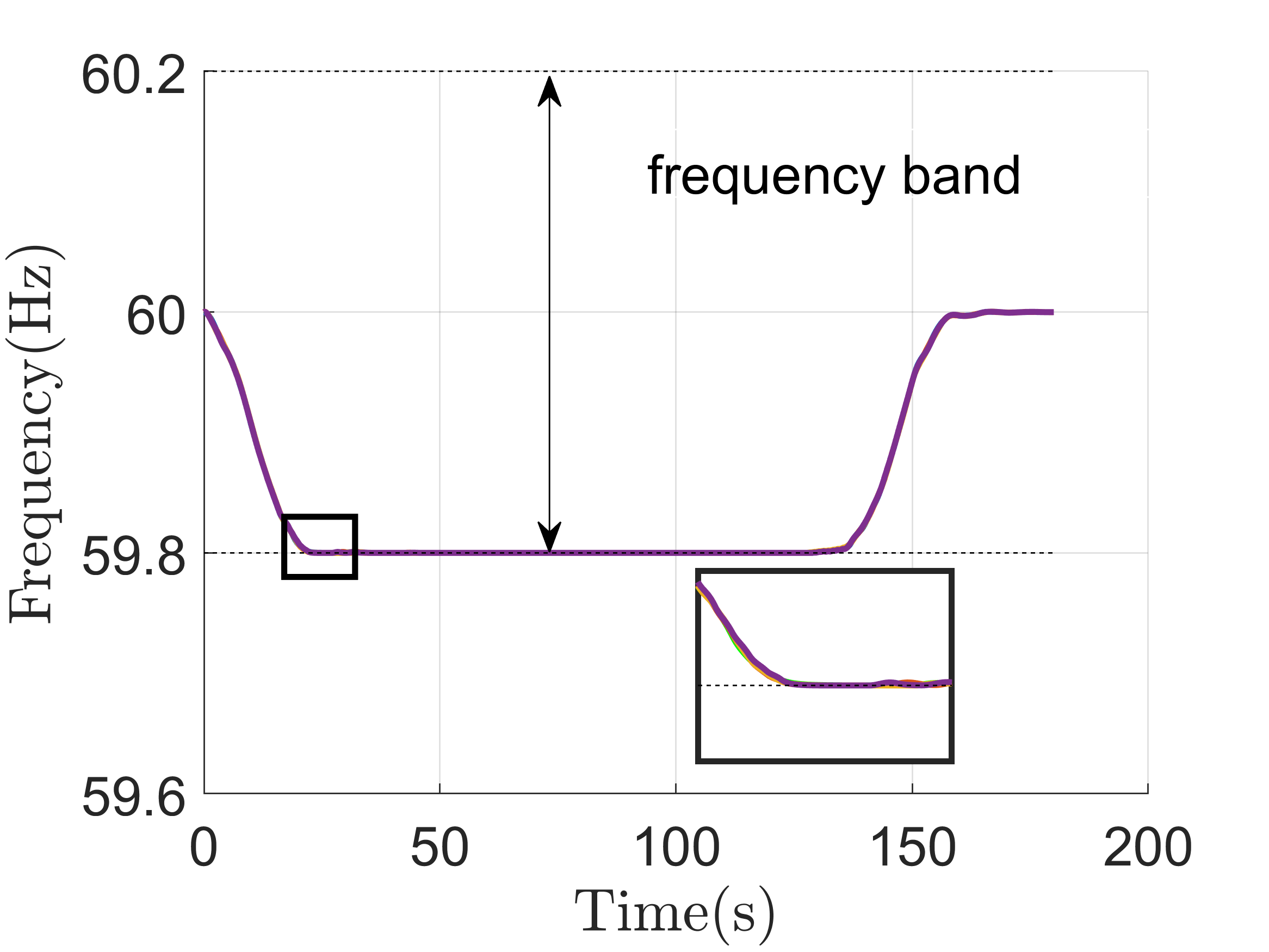}}
  \subfigure[\label{fig:frequency-generator-dynamics}]{\includegraphics[width=.24\linewidth]{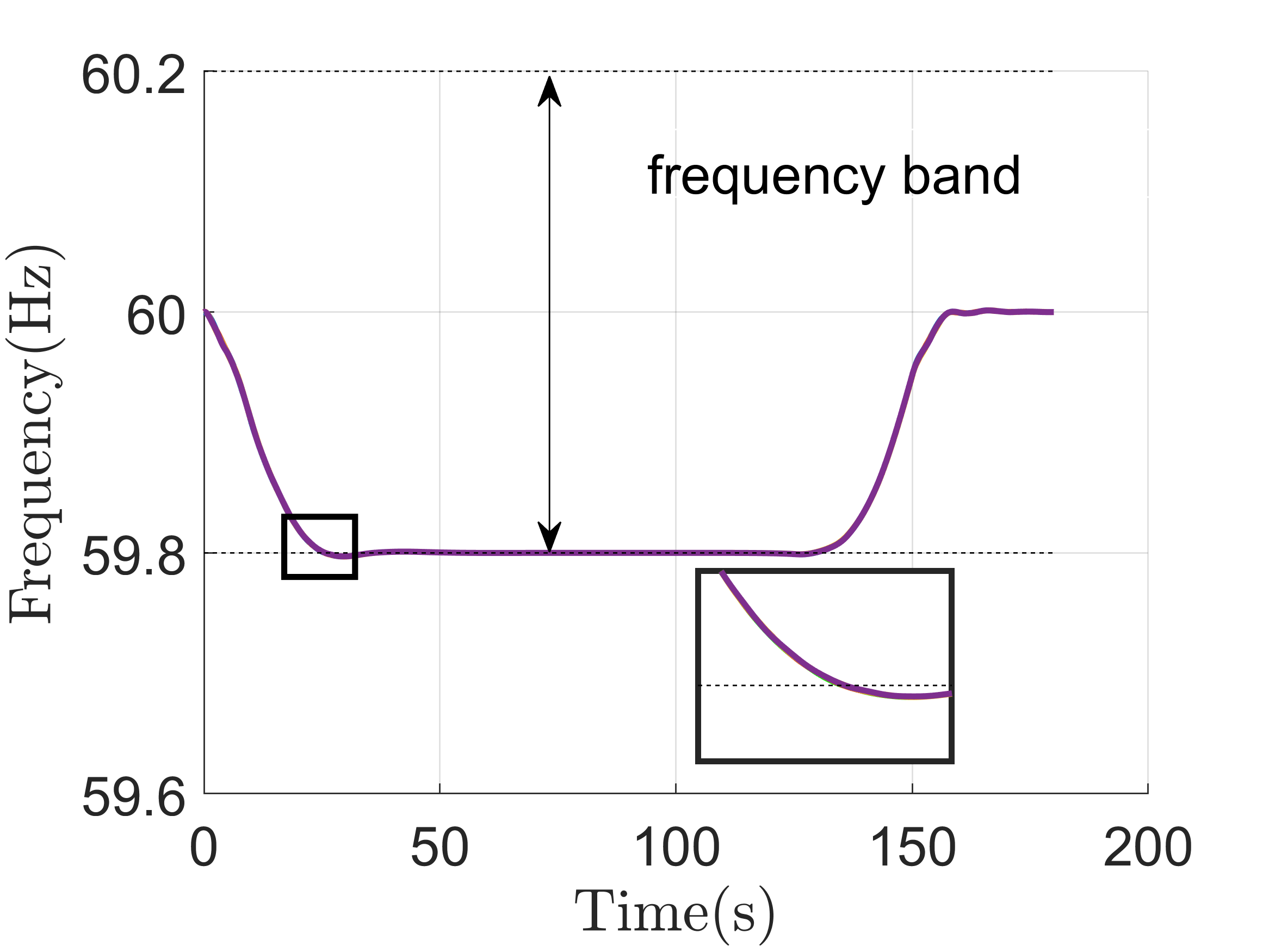}}
  \subfigure[\label{fig:frequency-generator-dynamics-inaccurate-power}]{\includegraphics[width=.24\linewidth]{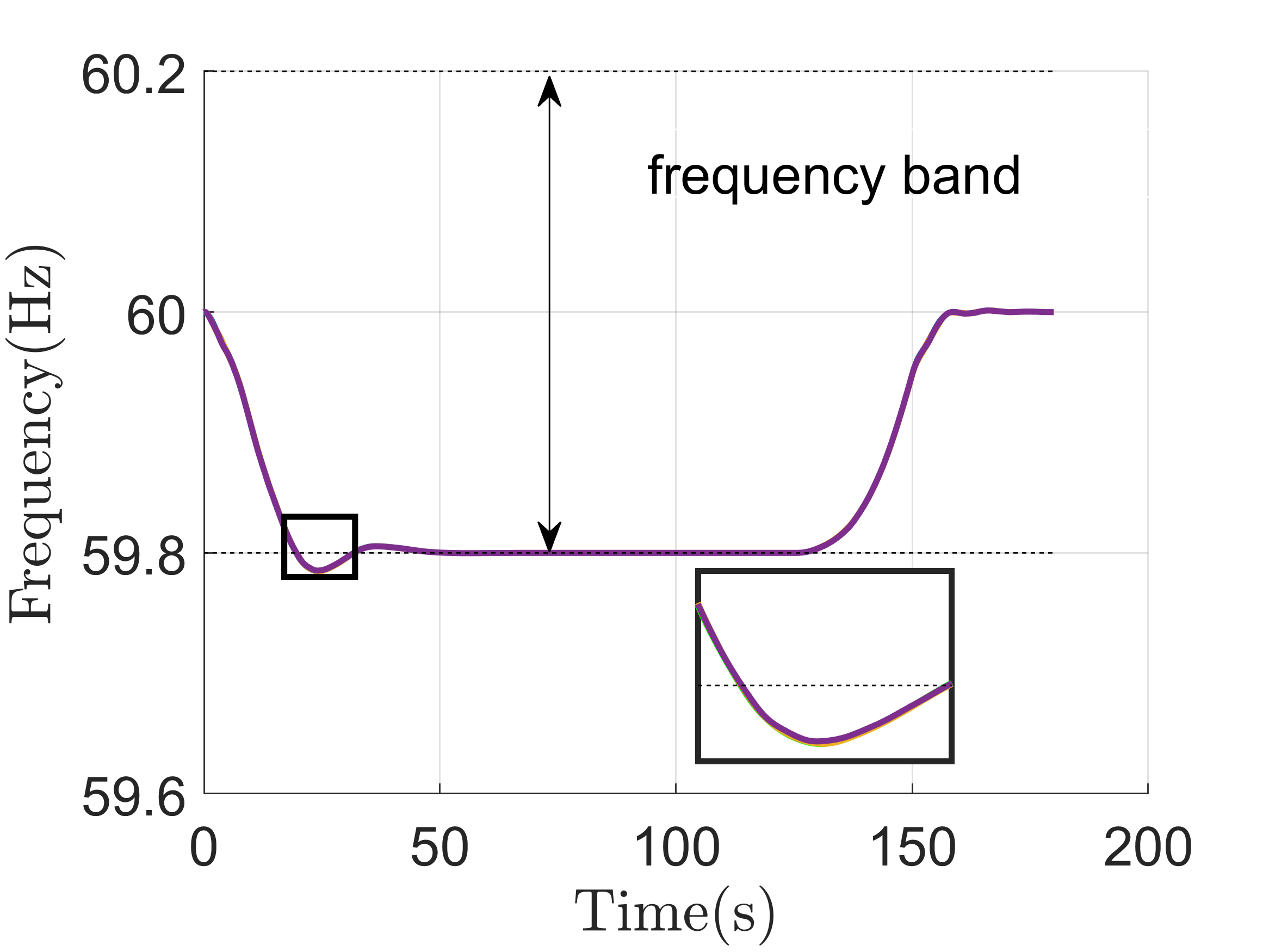}}
  \caption{\change{Frequency responses under inaccurate information
      and unmodeled
      dynamics. Plot~\subref{fig:frequency-inaccurate-power},~\subref{fig:frequency-generator-dynamics},
      and~\subref{fig:frequency-generator-dynamics-inaccurate-power}
      show the frequency responses at nodes 30, 31, 32, and 37 under
      inaccurate forecasted power injection, first-order generator
      dynamics, and both, respectively.}
  }\label{fig:trajectories-uncertainty}
\end{figure*}

\begin{figure}[tbh]
  \centering
  \subfigure[\label{fig:frequency-response-bad-initial}]{\includegraphics[width=.48\linewidth]{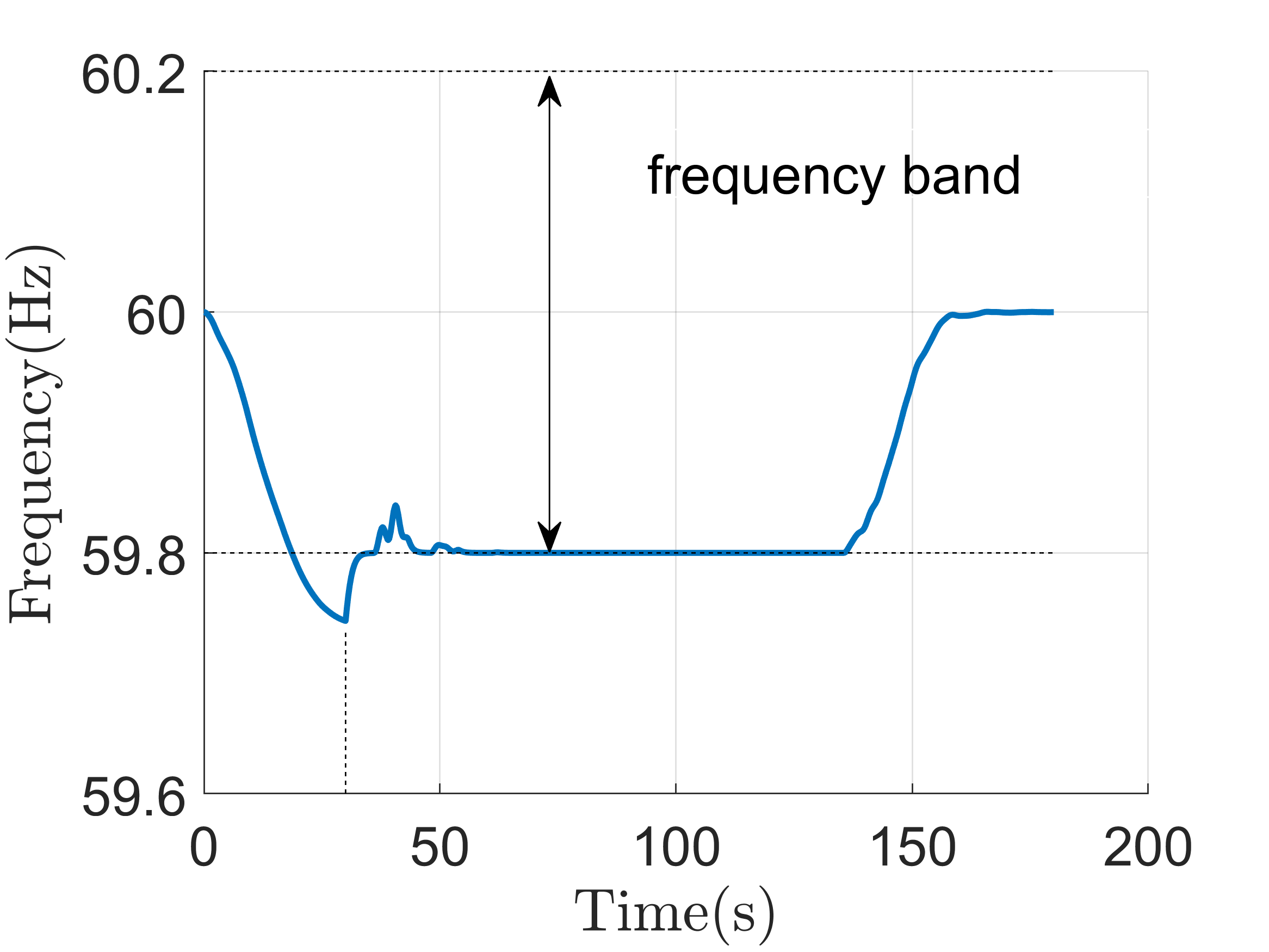}}
  \subfigure[\label{fig:control-response-bad-initial}]{\includegraphics[width=.48\linewidth]{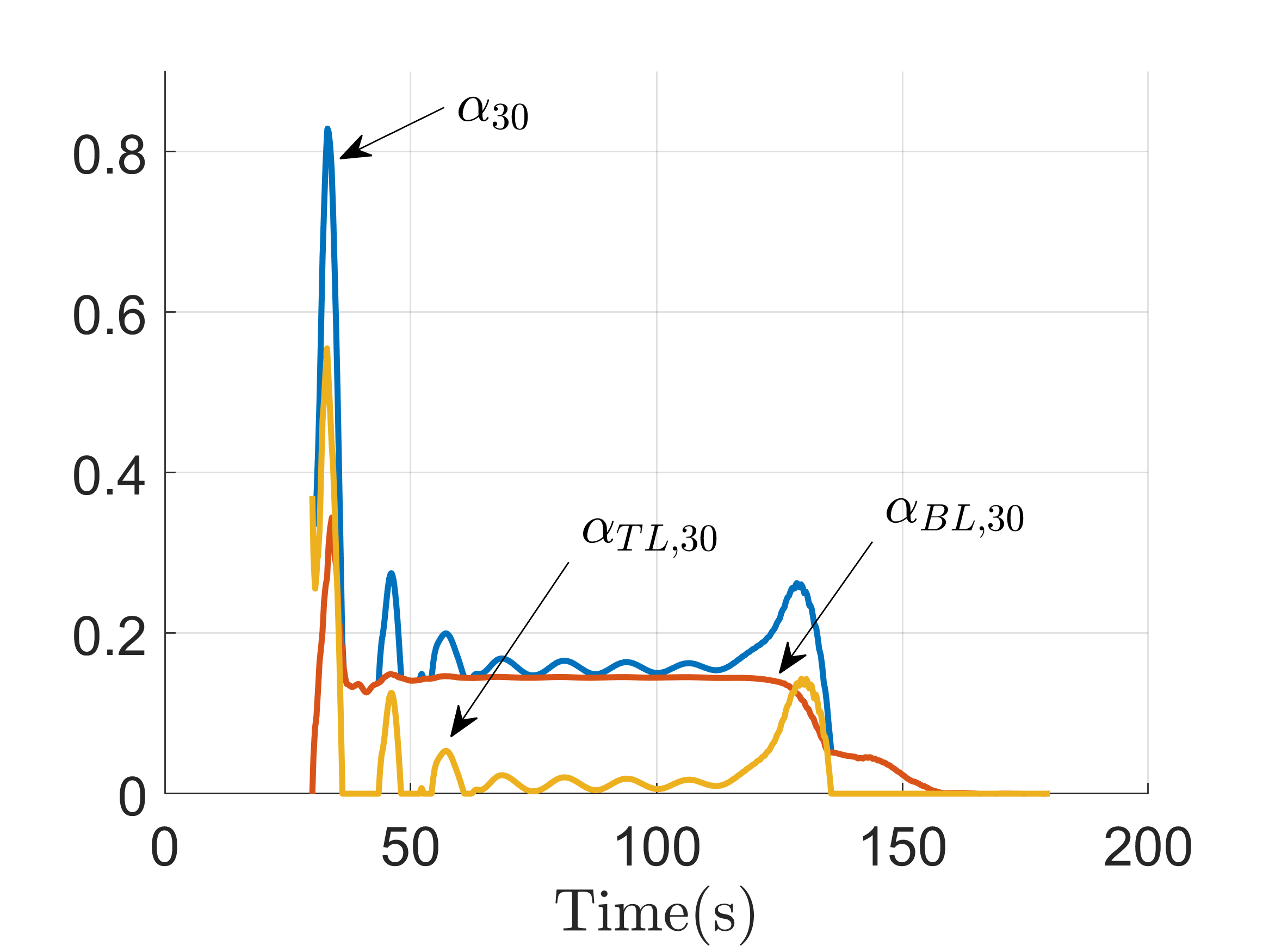}}
  \caption{Frequency and control trajectories at node 30 when the
    controller is turned on after $30$s.  In
    plot~\subref{fig:frequency-response-bad-initial}, the frequency
    gradually returns to the safe region once the controller kicks
    in. Plot~\subref{fig:control-response-bad-initial} shows the
    control signals.}\label{fig:trajectories-bad-initial}
\end{figure}

\section{Conclusions}
We have considered power networks governed by swing nonlinear dynamics
and introduced a bilayered control strategy to regulate transient
frequency in the presence of disturbances while maintaining network
stability. Adopting a receding horizon approach, the bottom-layer
controller periodically updates its output, enabling global
cooperation among buses to reduce the overall control effort while
respecting stability and soft frequency constraints.  The top-layer
controller, as a continuous state feedback controller,  tunes
the output of the bottom-layer control signal as required to
rigorously enforces frequency safety and attractivity. We have shown
that the entire control structure can be implemented in a distributed
fashion, where the control signal can be computed by having nodes
interact with up to 2-hop neighbors in the power network. Future work
will explore the optimization of the sampling sequences employed in
the bottom layer to improve performance, the quantitative evaluation
of the contributions of the top- and bottom-layer control signals, and
the analysis of the robustness of the proposed controller against
delays and saturation.

\bibliographystyle{IEEEtran}%
\bibliography{alias,JC,Main,Main-add}

\vspace*{-7ex}

\begin{IEEEbiography}[{\includegraphics[width=1in,height=1.25in,clip,keepaspectratio]{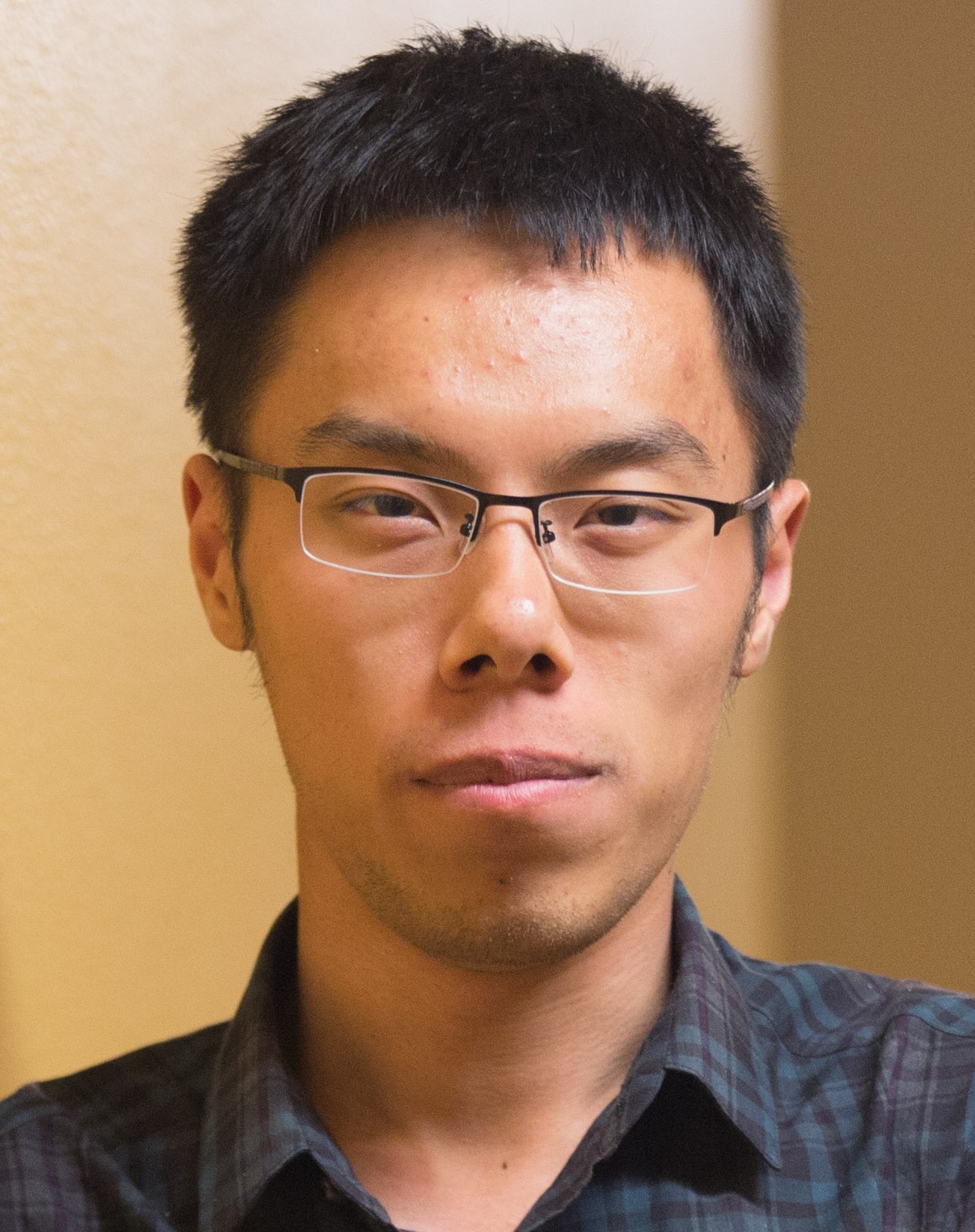}}]{Yifu
    Zhang}
  received the B.S. degree in automatic control from the Harbin
  Institute of Technology, Heilongjiang, China, in 2014, and the
  Ph.D. degree in mechanical engineering from the
  University of California, San Diego, CA, USA, in 2019. In winter
  2019, he interned at Mitsubishi Electric Research Laboratories, MA,
  USA. Currently he is a senior software quality engineer at The
  MathWorks, Inc., MA, USA. His research interests include distributed
  control and computation, model predictive control, adaptive
  control, data type optimization, function approximation, and  neural network compression.
\end{IEEEbiography}

\vspace*{-2ex}

\begin{IEEEbiography}[{\includegraphics[width=1in,height=1.25in,clip,keepaspectratio]{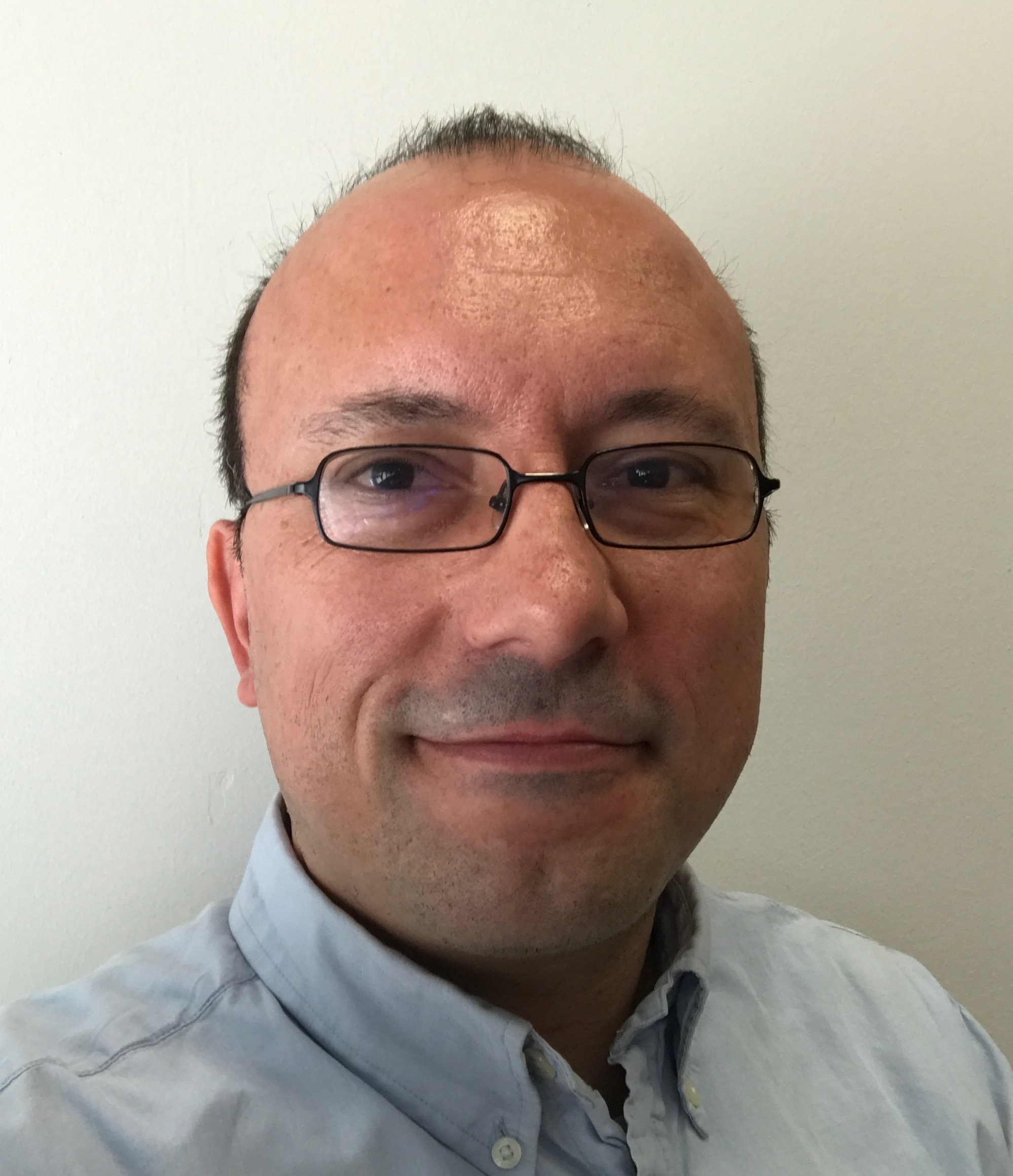}}]{Jorge
    Cort\'{e}s} 
  (M'02, SM'06, F'14) received the Licenciatura degree in mathematics
  from Universidad de Zaragoza, Zaragoza, Spain, in 1997, and the
  Ph.D. degree in engineering mathematics from Universidad Carlos III
  de Madrid, Madrid, Spain, in 2001. He held postdoctoral positions
  with the University of Twente, Twente, The Netherlands, and the
  University of Illinois at Urbana-Champaign, Urbana, IL, USA. He was
  an Assistant Professor with the Department of Applied Mathematics
  and Statistics, University of California, Santa Cruz, CA, USA, from
  2004 to 2007. He is now a Professor in the Department of Mechanical
  and Aerospace Engineering, University of California, San Diego, CA,
  USA. He is the author of Geometric, Control and Numerical Aspects of
  Nonholonomic Systems (Springer-Verlag, 2002) and co-author (together
  with F. Bullo and S.  Mart{\'\i}nez) of Distributed Control of
  Robotic Networks (Princeton University Press, 2009).  At the IEEE
  Control Systems Society, he has been a Distinguished Lecturer
  (2010-2014) and is currently its Director of Operations and an
  elected member (2018-2020) of its Board of Governors.  His research
  interests include distributed control and optimization, network
  science, resource-aware control, nonsmooth analysis, reasoning and
  decision making under uncertainty, network neuroscience, and
  multi-agent coordination in robotic, power, and transportation
  networks.
\end{IEEEbiography}

\end{document}